\def\constant{0.6597}
\def\baseformulaaa#1{\constant^{#1}\cdot #1!}
\def\formula#1{\lfloor\constant^{#1}(#1!)\rfloor}
\def\formulaa#1{\lfloor\constant^{#1}(#1)!\rfloor}
\def\formulaaa#1{\lfloor\constant^{#1}\cdot #1!\rfloor}
\def\cconstant{0.7838}
\def\cbaseformulaaa#1{\cconstant^{#1}\cdot #1!}
\def\cformula#1{\lfloor\cconstant^{#1}(#1!)\rfloor}
\def\cformulaa#1{\lfloor\cconstant^{#1}(#1)!\rfloor}
\def\cformulaaa#1{\lfloor\cconstant^{#1}\cdot #1!\rfloor}
\newcommand\repeattheorem[2]{
\begingroup
\def\thetheorem{#1}
\begin{theorem}
    #2
\end{theorem}
\addtocounter{theorem}{-1}
\endgroup
}
\newenvironment{proof-sketch}[1][\unskip]{
\begin{proof}[Proof sketch (full proof in appendix)]}{\end{proof}}
\title{On Lower Bounds for Maximin Share Guarantees}
\author{
    Halvard Hummel
    \vspace{0.5em}
    \\
    \normalsize
    Norwegian University of Science and Technology
    \vspace{-0.3em}
    \\
    \small
    \texttt{halvard.hummel@ntnu.no}
}
\date{}
\begin{document}

\maketitle

\begin{abstract}
    \noindent
    We study the problem of fairly allocating a set of indivisible items to a
    set of agents with additive valuations. Recently, \citeauthor{Feige:2022}
    (WINE'21) proved that a maximin share (MMS) allocation exists for all
    instances with $n$ agents and no more than $n + 5$ items. Moreover, they
    proved that an MMS allocation is not guaranteed to exist for instances with
    $3$ agents and at least $9$ items, or $n \ge 4$ agents and at least $3n + 3$
    items. In this work, we shrink the gap between these upper and lower bounds
    for guaranteed existence of MMS allocations. We prove that for any integer
    $c > 0$, there exists a number of agents $n_c$ such that an MMS allocation
    exists for any instance with $n \ge n_c$ agents and at most $n + c$ items,
    where $n_c \le \formulaaa{c}$ for allocation of goods and $n_c \le
    \cformulaaa{c}$ for chores. Furthermore, we show that for $n \neq 3$ agents,
    all instances with $n + 6$ goods have an MMS allocation.
\end{abstract}

\section{Introduction}

We are interested in the problem of fairly dividing a set of resources or tasks
to a set of agents---a problem that frequently arises in day-to-day life and has
been extensively studied since the seminal work of \citet{Steinhaus:48}. While
the classical setting assumes that the resources are infinitely
\emph{divisible}, a variant of the problem in which a set of \emph{indivisible}
items are to be allocated has been studied extensively in the last couple of
decades (see, e.g., \citet{Amanatidis:2022} and \citet{Suksompong:21} for
recent, detailed overviews).

For indivisible items, classical fairness measures, such as \emph{envy-freeness}
and \emph{proportionality}, are no longer guaranteed. Instead, relaxed
fairness measures are considered, such as the \emph{maximin share} (MMS)
\emph{guarantee} \cite{Budish:2011}. For the MMS guarantee, each agent should
receive a set of items worth at least as much as she could guarantee herself if
she were to partition the items into bundles and got to choose a bundle last.
Surprisingly, it is not guaranteed that an allocation of this kind exists
\cite{Procaccia:14}. In fact, there exists problem instances for which at least
one agent receives a bundle worth no more than $39/40$ of her MMS
\cite{Feige:2022}. However, good approximations exist and can be found
efficiently. The best current approximation algorithm guarantees each agent
at least $3/4 + 1/(12n)$ of her MMS, where $n$ is the number of
agents \cite{Garg:20c}.

When valuations are additive, MMS allocations are guaranteed to exist in certain
special cases, such as when there are at most $n + 5$ items \cite{Feige:2022} or
the set of valuation functions is restricted in certain ways
\cite{Amanatidis:17a,Heinen:18}. Our goal in this paper is to further improve
these existence results for MMS allocations---showing that the number of items
an instance can have scales with the number of agents, beyond one item per
agent.

We are interested in improving this lower bound for existence to further
determine the usefulness of MMS as a fairness measure, especially in real-world
scenarios. Usage of the online fair allocation tool Spliddit
\cite{Spliddit:23} suggests that many real-world instances have few agents
and on average a few times as many items as agents \cite{Caragiannis:19}.  As
the upper bound for existence is currently at around three times as many items
as agents \cite{Feige:2022}, reducing the gap between the two bounds betters our
understanding of these cases.

\subsection{Contributions}

In this work, we improve on the known bound for the number of goods, $m$, an
instance with $n$ agents can have and be guaranteed to have an MMS allocation.
We find that there exists some function $f(n) = \omega(\sqrt{\lg n})$ such that
an MMS allocation exists for all instances with $m \le n + f(n)$ goods,
improving on the result of $m \le n + 5$ \cite{Feige:2022}.\footnote{Expressing
$f(n)$ in terms of $n$ is nontrivial, due to the factorial in
\cref{thr:existence-stronger}.} Specifically, for any integer $c > 0$ we prove
the following bound for the required number of agents for guaranteed MMS
existence in instances with $m \le n + c$ goods.

\newcommand\existencegoodscontent{
    For any integer $c > 0$, there exists an $n_c \le \formula{c}$ such that all
    instances with $n \ge n_c$ agents and no more than $n + c$ goods have an MMS
    allocation.
}

\begin{theorem}\label{thr:existence-stronger}
    \existencegoodscontent
\end{theorem}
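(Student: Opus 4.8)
Let me think about this carefully. We want: for any $c > 0$, there's $n_c \leq \lfloor 0.6597^c \cdot c! \rfloor$ such that all instances with $n \geq n_c$ agents and at most $n + c$ goods have an MMS allocation.

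The factorial in the bound strongly suggests an inductive approach on $c$ (the "excess" number of goods over agents). The base case would be $c$ small — perhaps $c \leq 5$ or so, using Feige et al.'s result that $n+5$ goods always works, or maybe building up from scratch.

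Key idea in these MMS-with-few-items arguments: when you have $n$ agents and $n+c$ goods, you try to find a small group of goods that some agent values highly enough (at least her MMS value) — give those goods to that agent, remove the agent and those goods, and recurse. If you can remove $k$ agents and $k + j$ goods (for some $j \leq c$), you get a subinstance with $n - k$ agents and $(n-k) + (c-j)$ goods — smaller $c$! But you need the remaining agents' MMS values not to increase, which they don't (removing goods can only decrease or maintain MMS... wait, removing goods decreases MMS, removing agents... no, MMS is defined per agent independent of others). Actually MMS_i depends on the full set of goods and $n$; removing goods and reducing $n$... removing a good decreases $\text{MMS}_i^n$; reducing $n$ to $n-1$ increases it. Hmm — so this needs care. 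The standard trick: if agent $i$ gets a bundle worth $\geq \text{MMS}_i^n$, and we want remaining agents to still be satisfiable, we need that after removing agent $i$ and her bundle $S$, each remaining agent $j$ has $\text{MMS}_j^{n-1}(M \setminus S) \geq \text{MMS}_j^n(M)$. This holds when $|S|$ is small relative to how much value it "should" carry, via a "reducibility" lemma (this is classical — Amanatidis et al., Bouveret–Lemaître). Specifically, if there's an agent who values some single good $g$ at $\geq \text{MMS}_i^n$, give it to her — this is "1-reducible." More generally, $\ell$-out-of-$d$ type arguments.

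**The plan, concretely.** I'd proceed by strong induction on $c$. Fix $c$, assume the theorem for all $c' < c$ with bounds $n_{c'}$. Given an instance with $n$ agents, $n+c$ goods, $n \geq n_c$. Normalize so each agent's MMS is $1$ (scale valuations; MMS-partition into $n$ bundles each worth $\geq 1$ for each agent, so total value $\geq n$, and we can assume each good worth $< 1$, else it's trivially reducible). The core combinatorial step: with $n + c$ goods each of value $< 1$ and total value $\geq n$ per agent, count goods of value in $(1/2, 1)$ ("big"), in $(1/3,1/2]$, etc. If many goods are big, we can pair/bundle them to satisfy several agents at once and reduce. The quantitative heart: show that if $n$ is large enough (specifically $\geq$ the claimed bound), then we can always find a group of $k$ agents and a set of goods of size between $k$ and $k + (c-1)$ that is "self-sufficient" — allocate among those $k$ agents giving each $\geq 1$ — dropping us to a residual instance with excess $\leq c - 1$, then apply induction. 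The factorial blow-up comes because at each level of the recursion the required number of agents roughly multiplies by something like $c$ (a "pigeonhole over value-classes" or "over which agents are tight" argument forces $n_c \approx c \cdot n_{c-1} \cdot (\text{const})$), with the constant $0.6597$ tracking the precise efficiency of the bundling.

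**Main obstacle.** The hard part is the quantitative bundling/case analysis establishing that large $n$ guarantees a reducible configuration — precisely, pinning down the constant $0.6597$ and the recursion $n_c \leq \lfloor 0.6597 \cdot c \cdot n_{c-1}\rfloor$ (or similar giving the stated closed form). This requires: (i) a careful taxonomy of goods by value threshold $1/j$; (ii) showing that if no small reducible set exists, the value is "spread out" in a way that, combined with $m = n + c$ (only $c$ "extra" goods), bounds $n$; (iii) handling the interaction between reducing $n$ and the MMS values of remaining agents cleanly (the reducibility lemma must be applied so that residual agents' MMS stays $\geq$ their value, which constrains exactly how big the removed good-set may be — at most one extra good per removed agent, matching the "excess decreases by at least one" requirement). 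I'd also need the base cases $c \leq 5$ (immediate from $n+5$ result) and possibly a sharper $c = 6$ base case consistent with the paper's second contribution. Verifying the arithmetic that $\lfloor 0.6597 \cdot c \cdot n_{c-1} \rfloor \leq \lfloor 0.6597^c c! \rfloor$ is routine once the recursion is in hand.
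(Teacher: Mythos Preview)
Your high-level framing (induction on $c$, base cases from \citeauthor{Feige:2022}, reduce to a smaller excess via a valid reduction) matches the paper, but the mechanism you propose for finding the reduction is the wrong one and would not yield the theorem. Classifying goods by value thresholds $1/2, 1/3, \dots$ is the workhorse of \emph{approximate}-MMS arguments; here it runs into the problem that the classification is agent-specific (after normalizing $\mu_i = 1$, agent~$1$ may have seven ``big'' goods while agent~$2$ has two), so there is no common combinatorial object to pigeonhole over. You never invoke the reduction to ordered instances, and without that the phrase ``pigeonhole over value-classes'' has no concrete meaning. You also assert a one-step recursion $n_c \lesssim 0.6597\, c\, n_{c-1}$ but give no mechanism that would produce exactly that constant; in fact the paper's recursion is not of this simple form.

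The missing idea is structural rather than value-based. In an ordered instance with $m = n + c < 2n$, every agent has an MMS partition in which goods $1,\dots,n-c$ are singletons and the remaining $2c$ goods sit in $c$ bundles; in particular each agent has a distinguished bundle $B_I(i) \subseteq \{n,\dots,n+c\}$. One then introduces a \emph{domination} partial order on bundles (bundle $B$ dominates $B'$ if there is an injection $f:B'\to B$ with $f(j)\le j$) and shows that if enough agents have $k$-sized bundles $B_I(i)$ sharing a common $(k-1)$-subset, the dominated one yields a valid reduction dropping the excess by $k-1$. Since $B_I(i)$ lives in a fixed $(c+1)$-element set, pigeonhole over $(k-1)$-subsets gives a bound of the form $n_c \le 1 + \sum_{k} \bigl(\tbinom{c}{k-1}/k + \tbinom{c}{k-2}/(k-1)\bigr)\, n_{c-k+1}$, and summing this via the Maclaurin series for $e^{1/\alpha}$ is what produces the specific constant $0.6597$. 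None of this machinery appears in your sketch.
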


It has been shown by counterexample that $c = 5$ is the largest constant such
that an MMS allocation always exists for all instances with any number $n$ of
agents and at most $n + c$ goods \cite{Feige:2022}. We show that when $n \neq
3$, an MMS allocation always exists when $c = 6$.

\newcommand\csixtheoremcontent{
    For an instance with $n \neq 3$ agents, an MMS allocation always exists if
    there are $m \le n + 6$ goods.
}
\begin{theorem}\label{thr:c=6}
    \csixtheoremcontent
\end{theorem}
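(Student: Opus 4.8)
The plan is to induct on the number of agents $n$, passing to strictly smaller instances by valid reductions whenever possible and handling directly the few instances that admit no reduction. Since \citeauthor{Feige:2022} already settled $m \le n+5$, it suffices to treat $m = n+6$. For $n=1$ the single agent receives everything; for $n=2$ a cut\nobreakdash-and\nobreakdash-choose argument works (agent~$1$ proposes her optimal $2$-partition and agent~$2$ takes the bundle she prefers, so each gets at least her maximin share); and $n=3$ is excluded, which is necessary, since $n=3$, $m=9$ is precisely the counterexample of \citeauthor{Feige:2022}. So assume $n \ge 4$ and, after scaling valuations, that $\mu_i = 1$ for every agent~$i$ with positive maximin share; an agent with maximin share $0$ values fewer than $n$ goods and hence is trivially satisfiable, and will only need to be accommodated in the finite base cases below.

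The workhorse is the standard valid-reduction lemma: removing one agent together with \emph{one} arbitrary good never decreases the remaining agents' maximin shares, so if some agent values some single good at least her maximin share we may hand it to her and recurse on the instance with $n-1$ agents and $(n-1)+6$ goods. Now, in the optimal $n$-partition of an agent~$i$ with $\mu_i=1$, every bundle has value at least $1>0$ and is therefore nonempty, so at most $m-n=6$ bundles contain two or more goods and hence at least $n-6$ bundles are singletons worth at least $1$. Consequently, for $n \ge 7$ some agent values some single good at least her maximin share, a one-good reduction applies, and iterating it the induction bottoms out at $n=6$. It thus remains to establish the theorem for $n \in \{4,5,6\}$ and $m=n+6$.

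For $n \in \{5,6\}$ I would again look for a valid reduction, now drawing on the full family of valid reductions (removing one agent together with up to a few goods): any reduction from $n=6$ lands on $n=5$, which is handled next, and any reduction from $n=5$ lands either on $n=4$ (handled below) or, once enough goods are removed, inside the range $m \le n+5$ covered by \citeauthor{Feige:2022}. What is left in each case is the \emph{irreducible} instances, in which every good is worth less than $1$ to every agent and no richer reduction applies; for these I would run a bag-filling argument, growing a bundle good by good until some still-unsatisfied agent values it at least $1$ and then handing it over. Since every good is worth less than $1$ to everyone and only $m-n=6$ goods are surplus, weighing the total $v_i(M) \ge n$ against the value an unsatisfied agent can place on the bundles already allocated yields a contradiction, so every agent is satisfied.

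The main obstacle is $n=4$, $m=10$: here a one-good reduction would produce $n=3$, $m=9$, exactly the regime in which MMS allocations can fail, so the induction is unavailable and one cannot simply defer to a smaller instance in the family. I would handle this case by showing that whenever the instance is reducible at all, one can instead choose a valid reduction that deletes agent~$i$ together with \emph{two} suitably chosen goods, landing on $n=3$, $m=8 \le 3+5$, which is covered by \citeauthor{Feige:2022}; proving that such a two-good deletion is valid — that no remaining agent's maximin share drops when those two particular goods are removed — requires a careful case analysis of how a good worth at least $1$ to agent~$i$ sits inside the optimal $4$-partitions of the other three agents, and this is the technical heart of the proof. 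When even this fails, the instance is irreducible and the bag-filling argument of the previous paragraph closes it out.
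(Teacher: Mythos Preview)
Your overall skeleton---reduce as long as possible and handle the small base cases directly---matches the paper's, but two of the load-bearing steps do not go through.

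First, the bag-filling argument cannot deliver \emph{exact} MMS. When you grow a bag until some still-unsatisfied agent values it at least~$1$ and hand it over, an as-yet-unsatisfied agent~$i$ may see that bag as worth up to $1+v_{i}(\text{last good})$, which can be arbitrarily close to~$2$. The accounting $v_i(M)\ge n$ then only guarantees the final agent roughly $n-2(n-1)$, which is useless; this is precisely why bag-filling yields $2/3$- or $3/4$-MMS in the literature but not exact MMS. So your fallback for ``irreducible'' instances does not close the argument for $n\in\{4,5,6\}$.

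Second, and more importantly, for $n=4$, $m=10$ your claim that ``whenever the instance is reducible at all, one can choose a valid two-good reduction to $n=3$, $m=8$'' fails in the hard case. When at least two agents value good~$1$ at their MMS but nobody values good~$2$ at MMS, the instance is certainly reducible (hand good~$1$ to someone), yet one cannot in general peel off a second good validly: \cref{lem:2-item-reduction-from-1} is blocked because a second agent also values good~$1$ at MMS, and no size-two bundle is guaranteed to trigger \cref{lem:2-item-reduction} or \cref{lem:pigeonhole-reduction-2}. The paper does \emph{not} avoid landing on $n=3$, $m=9$ here. Instead it hands out~$\{1\}$, lands on $n=3$, $m=9$, but only asks each remaining agent for her MMS \emph{in the original four-agent instance}, and then invokes a structural lemma distilled from \citeauthor{Feige:2022}'s Propositions~20--28: an allocation meeting those (possibly lower) thresholds fails only when all three agents have a \emph{unique} qualifying $3$-partition, with types $(3,3,3),(3,3,3),(2,3,4)$. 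A short case split on the original four-agent MMS-partition types---$(1,2,2,5)$, $(1,2,3,4)$, $(1,3,3,3)$, $(2,2,2,4)$, $(2,2,3,3)$---shows one can always choose \emph{which} agent receives~$\{1\}$ so that the residual three-agent instance avoids that lone bad configuration. This structural input about the $3$-agent, $9$-good landscape is the idea missing from your plan.

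For $n\in\{5,6\}$ the paper simply notes $m=n+6\le 2n+2$ and applies a lemma (\cref{lem:2n+2}) guaranteeing a one- or two-good valid reduction throughout that range; no irreducible instances arise, so bag-filling is never needed there either.
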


In a similar fashion to $c = 6$, which is shown by case analysis, we also find
that for $c = 7$ it is sufficient to have $n \ge 8$ for MMS existence.

\newcommand\cseventheoremcontent{
    For an instance with $m = n + 7$ goods, an MMS allocation always exists if
    there are $n \ge 8$ agents.
}
\begin{theorem}\label{thr:c=7}
    \cseventheoremcontent
\end{theorem}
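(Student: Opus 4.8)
The plan is to follow the same template as for \cref{thr:c=6}: reduce to ordered instances, apply the standard valid-reduction lemmas as far as they go, and then dispatch the resulting \emph{irreducible} instances by an explicit construction. First I would invoke the well-known reduction to ordered instances, so that all agents rank the goods in a common order $g_1 \succeq g_2 \succeq \cdots \succeq g_m$, and normalize each agent's maximin share to a convenient value. The argument then proceeds by strong induction on $n$, with $m = n + 7$ throughout. The basic tool is the single-good valid reduction: if some agent $i$ has $v_i(g_1) \ge \mu_i$, hand $g_1$ to $i$ and recurse on the ordered instance with $n-1$ agents and $(n-1)+7$ goods, lifting the allocation back. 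The key observation is that if \emph{no} agent values $g_1$ at her maximin share, then every bundle of every agent's MMS partition contains at least two goods, so $m \ge 2n$, i.e.\ $n + 7 \ge 2n$, forcing $n \le 7$; hence for $n \ge 8$ the single-good reduction is always available.

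To keep the descent from running into small instances, I would also use a two-good valid reduction: if some agent values a canonical pair dictated by the ordered structure (such as $\{g_n, g_{n+1}\}$) at least her maximin share and the corresponding removal is valid, hand that pair to the agent and recurse on an instance with $n-1$ agents and $(n-1)+6$ goods, which has an MMS allocation by \cref{thr:c=6} since $n-1 \ge 7 \neq 3$. The induction then bottoms out at instances to which neither reduction applies. Such instances are rigid: no single good meets anyone's maximin share, which forces $n \le 7$ and $m \ge 2n$ and thereby pins down the multiset of bundle sizes in each agent's MMS partition to mostly pairs together with only a bounded number of larger bundles, while the failure of the pair reduction bounds $v_i(g_n) + v_i(g_{n+1})$ below $\mu_i$ for every agent. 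In each such configuration I would build an MMS allocation directly --- for instance via a ``matching the extremes'' bundling $\{g_j, g_{2n+1-j}\}$ in the all-pairs regime (which is the case $n = 7$, $m = 14$), adjusted to absorb the few larger bundles when $n < 7$, and then assigning bundles to agents so that the structural inequalities guarantee each receives at least her maximin share. The base cases --- $n = 7$ with $m = 14$ and the remaining small instances --- are settled this way.

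I expect the main obstacle to be twofold. First, the explicit constructions for the irreducible instances require a case split on the bundle-size multisets in the agents' MMS partitions and on the relative sizes of the top $\approx n$ goods versus the tail goods, and the candidate allocation must be verified in every branch. Second, one must organize the reductions and base cases so that no step produces an unsolvable small instance --- recall that $n = 3$ with $m \ge 9$ (in particular $m = 10$) genuinely has no MMS allocation, by \citeauthor{Feige:2022}'s counterexamples. The hypothesis $n \ge 8$, together with routing the two-good reductions through \cref{thr:c=6} (which already excludes $n = 3$), is precisely what keeps the descent within solvable territory, exactly paralleling the exclusion of $n = 3$ in \cref{thr:c=6}.
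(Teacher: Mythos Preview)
Your reduction framework is reasonable up to the base case, and indeed the paper also descends to a single concrete size (via \cref{lem:2n+2}). The gap is twofold.

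First, your descent overshoots. At $n = 8$ the single-good reduction is indeed always available, but applying it lands you at $n = 7$ with $c = 7$, which is \emph{outside} the theorem's hypothesis. Your proposal to settle $n = 7$ (and smaller) directly therefore amounts to proving a strictly stronger statement than \cref{thr:c=7}; the paper does not claim MMS existence for $n < 8$ with $c = 7$, and you would have to establish it from scratch. You also have not argued that the descent cannot reach $n = 3$ with $c = 7$: from $n = 4$, $m = 11$ a single-good reduction (when available) drops to exactly the forbidden $n = 3$, $m = 10$ case, and nothing in your outline blocks that step.

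Second, your proposed base-case construction fails outright. In the ``all-pairs'' regime at $n = 7$, $m = 14$, irreducibility means in particular that no agent values the pair $\{g_n, g_{n+1}\} = \{g_7, g_8\}$ at her MMS (otherwise the two-good reduction would have fired). But $\{g_7, g_8\}$ is precisely one of the bundles in your matching-the-extremes allocation $\{g_j, g_{2n+1-j}\}$, so whoever receives it gets strictly less than her MMS. More generally, the canonical pairing is not an MMS allocation even when every agent's MMS partition consists only of pairs: different agents can have incompatible optimal pairings, and no single fixed bundling meets everyone's share.

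The paper sidesteps both issues by performing the case analysis \emph{at} $n = 8$, $m = 15$. After ruling out $\mu_i = 0$, $v_i(\{8,9\}) \ge \mu_i$, and $v_{i6} \ge \mu_i$ via reductions that land in the $c \le 6$ regime (hence \cref{thr:c=6}), it classifies each agent's MMS partition into one of five size-types $t_1,\dots,t_5$ and, combination by combination, either exhibits a valid reduction to an instance with $n' \ge 4$ and $c' \le 6$ or constructs an MMS allocation directly, invoking \cref{lem:one-3+-good-bundle}, \cref{lem:mostly-overlapping-2}, and \cref{lem:feige-congregation} along the way. Crucially, every exit route stays within territory already covered by \cref{thr:c=6}; the argument never needs MMS existence for $c = 7$ with fewer than eight agents.
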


Finally, we show that there exists a similar existence guarantee for chores as
was shown for goods in \cref{thr:existence-stronger}.

\newcommand\existencechorescontent{
    For any integer $c > 0$, there exists an $n_c \le \cformula{c}$ such that
    all instances with $n \ge n_c$ agents and no more than $n + c$ chores have
    an MMS allocation.
}

\begin{theorem}\label{thr:existence-chores}
    \existencechorescontent
\end{theorem}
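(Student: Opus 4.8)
The plan is to mirror the goods argument of \cref{thr:existence-stronger}, adapting each ingredient to chores. As a warm-up, recall the standard reductions for MMS with chores: one may normalize so that every agent's MMS partition value is exactly $1$, and one may repeatedly perform \emph{valid reductions} — if some single chore $j$ has $v_i(j) \ge 1$ for every agent $i$ (the ``big'' chores, which for chores means high-cost), hand it to one agent as her whole bundle and recurse on $n-1$ agents and $m-1$ chores; similarly if a small bundle of a few chores is worth at most $1$ to everyone, assign it and recurse. The key quantitative fact I would establish is that after exhaustively applying such reductions, either we are done, or we reach an irreducible instance in which $m \le n + c$ still holds but now every chore is ``small'' for at least one agent and no cheap bundle can be split off. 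The constant $\cconstant = 0.7838$ (versus $\constant = 0.6597$ for goods) reflects that the chores reductions are slightly weaker, so the recursion on $c$ loses a bit more per step.

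The heart of the proof is an inductive/recursive bound on $n_c$ in terms of $n_{c-1}$, $n_{c-2}$, etc. Concretely, I would show that in an irreducible instance with $n + c$ chores and $n$ large, there must exist a small set of agents $S$ and a bundle $B$ of chores with $|B| = |S| + t$ for some $1 \le t \le$ (a constant depending on $c$) such that $B$ can be partitioned among $S$ respecting every agent's MMS threshold; removing $S$ and $B$ leaves an instance with $n - |S|$ agents and $n - |S| + (c - t)$ chores, to which the induction hypothesis for $c - t < c$ applies provided $n - |S| \ge n_{c-t}$. Turning this into the closed bound $n_c \le \cformulaaa{c}$ requires choosing the split parameters so that the worst-case recursion $n_c \le \max_t \bigl(n_{c-t} + g(c,t)\bigr)$ telescopes into the factorial-times-geometric form; this is exactly where the value $\cconstant = 0.7838$ is pinned down, and I would verify the base cases $c = 1, \dots, c_0$ (small constants, covered by \cref{thr:c=6}, \cref{thr:c=7} analogues or by \citeauthor{Feige:2022}'s $n+5$ bound for chores) so that the induction has somewhere to start.

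The main obstacle I anticipate is the existence of a good ``small reducible configuration'' in an irreducible instance — i.e., the combinatorial lemma guaranteeing that among $n$ agents and $n + c$ chores, when $n$ is large relative to $c$, one can always find a subset of agents together with a slightly-larger-than-matching set of chores that fits their MMS bounds. For goods this kind of statement is proved via a pigeonhole/averaging argument on how the chores are distributed across agents' MMS partitions, together with a careful exchange argument; for chores the inequalities flip and one must be more careful because a chore being worth little to one agent says nothing useful about another agent, so the averaging has to be done over the ``I can absorb this cheaply'' relation rather than the ``this is valuable'' relation. I would handle this by a counting argument: since there are only $c$ more chores than agents, in every agent's own MMS partition all but at most $c$ bundles are singletons, and pooling these near-singletons across sufficiently many agents forces a common cheap bundle to appear; quantifying ``sufficiently many'' is what produces the factorial growth. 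The remaining steps — normalization, the reduction lemmas, and assembling the recursion — are routine adaptations of the goods case, so I would state them briefly and defer full details to the appendix.
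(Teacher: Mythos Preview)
Your outline captures the broad shape---induct on $c$, reduce via structural properties of MMS partitions, exploit that all but $c$ bundles are singletons---but it is missing the specific combinatorial mechanism that makes the bound $\cformulaaa{c}$ attainable, and the recursion you write down has the wrong form.

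First, the ``pooling near-singletons forces a common cheap bundle'' idea, as you describe it, is the weak argument the paper mentions and discards: if one merely waits until several agents share an identical partition of the $2c$ non-singleton chores, the number of such partitions is a Stirling number ${2c \brace c} > c^c$, which does not yield $\cformulaaa{c}$. The paper's actual engine is a \emph{domination} partial order on bundles: in an ordered instance, $B \succeq B'$ if there is an injection $f:B'\to B$ with $f(j)\le j$. The key lemma is that if every other agent has a bundle in her MMS partition dominated by $B$ (for chores; dominating $B$ for goods), then assigning $B$ is a valid reduction. One then focuses on each agent's ``bad'' bundle $B_I(i)\subseteq\{n,\dots,n+c\}$, groups agents by $|B_I(i)|=k$, and pigeonholes over the $(k-1)$-element subsets of $\{n,\dots,n+c\}$: once enough $k$-bundles share a common $(k-1)$-subset, one of them is dominated by all the others and the reduction fires. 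This is the idea your sketch lacks.

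Second, the resulting bound is a \emph{sum} over $k$, not a max: one needs $n$ large enough that for \emph{some} $k$ the pigeonhole succeeds, so $n_c$ is bounded by $\sum_k (\text{number of }(k{-}1)\text{-subsets})\cdot n_{c-k+1}$ plus lower-order terms. Your recursion $n_c \le \max_t(n_{c-t}+g(c,t))$ would telescope to something linear in $c$ times the base, not factorial. The constant $\cconstant$ emerges from bounding this sum by $\alpha^c c!\sum_k \alpha^{-k+1}/k!\cdot(\cdots)$ and using the Maclaurin series for $e^{1/\alpha}$.

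Finally, two chores-specific points your sketch glosses over: the standard single-item and pigeonhole reductions (\cref{lem:single-item-reduction,lem:pigeonhole-reduction-2}) do \emph{not} carry over to chores---only \cref{lem:2-item-reduction} survives---so ``hand a big chore to one agent and recurse'' is not a valid step as stated. And the reason $k=2$ must be included for chores (unlike goods, where $\{n,n+1\}$ always reduces) is handled by a separate lemma showing any size-$2$ bundle in the relevant range can be taken to be $\{n,n+1\}$; this extra $k=2$ term is precisely what pushes the constant from $\constant$ to $\cconstant$. Also, \cref{thr:c=6} and \cref{thr:c=7} are goods-only; the chores base case is just Feige's $n+5$ result.
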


Our proofs of \cref{thr:existence-stronger,thr:existence-chores} build on two
new structural properties of \emph{ordered instances}.\footnote{Instances in
which the agents have the same preference order over the items.}
First and most importantly, we exploit a common structure in MMS partitions for
ordered instances with $m \le 2n$. When an ordered instance has $n$ agents
and $m = n + c$ items for some constant $c \ge 0$, each agent has an MMS partition in
which the $n - c$ most valuable (least valuable for chores) items appear in
bundles of size one. If $c \le n$, the remaining $2c$ items must be placed in
the remaining $c$ bundles. The number of ways $2c$ items can be partitioned
into $c$ bundles depends only on $c$. Thus, as $n$ increases, more agents will
have similar MMS partitions.

To analyse the number of agents required for there to be enough similarity for
an MMS allocation to exist, we impose a partial ordering over the bundles, based
on the concept of domination.  Due to the common preference order in ordered
instances, we can for some pairs of bundles $B$ and $B'$ determine that $B$ is
better than $B'$ no matter the valuation function. In this case, we say that $B$
\emph{dominates} $B'$. A trivial example is when $B$ and $B'$ differ by only a
single item.  When a sufficient number of agents have bundles in their MMS
partitions that form a chain in the domination based partial ordering, a
reduction to a smaller instance can be found. By employing induction, we use an
upper bound for the size of the maximum antichain to obtain the existence
bounds.

\subsection{Related Work}

The existence of MMS has been the focus of a range of publications in recent
years.  Early experiments failed to yield problem instances for which no MMS
allocation exists \cite{Bouveret:16}. \citet{Procaccia:14} later found a way to
construct counterexamples for any number of agents $n \ge 3$.\footnote{For $n <
3$, MMS allocations always exist.} These counterexamples used a number of goods
that was exponential in the number of agents. The number of goods needed for a
counterexample was later reduced to $3n + 4$ by \citet{Kurokawa:16} and recently
to $3n + 3$ by \citet{Feige:2022}.\footnote{$3n + 1$ when $n$ is even.} In the
opposite direction, \citet{Bouveret:16} showed that all instances with at most
$n + 3$ goods have MMS allocations, later improved to $n + 4$ by
\citet{Kurokawa:16} and $n + 5$ by \citet{Feige:2022}. \citeauthor{Feige:2022}
also found an instance with $3$ agents and $9$ goods for which no MMS allocation
exists.

While MMS allocations do not always exist, it has been shown that
they exist with high probability, under certain simple
assumptions~\cite{Kurokawa:16,Suksompong:16,Amanatidis:17a}.

The existence of MMS allocations has also been explored in cases where valuation
functions are restricted. \citet{Amanatidis:17a} showed that when item values
are restricted to the set $\{0, 1, 2\}$, an MMS allocation always exists. Later,
\citet{Heinen:18} studied existence for Borda and lexicographical valuation
functions.

There is also a rich literature on finding approximate MMS allocations, either
by providing each agent with a bundle worth at least a percentage of her
MMS \cite{Amanatidis:17a,Garg:18a,Gourves:19,Garg:20c,Ghodsi:21,Feige:22} or
providing a percentage of the agents with bundles worth at least MMS
\cite{Hosseini:21}.

While the main focus of the literature has been on goods, some work has been
done on MMS for chores, both on existence \cite{Aziz:17c,Feige:2022} and
approximation \cite{Aziz:17c,Barman:20e,Huang:21,Feige:22}.

\section{Preliminaries}\label{sec:prelims}

An instance $I = \langle N, M, V \rangle$ of the \emph{fair allocation problem}
consists of a set $N = \{1, 2, \dots, n\}$ of \emph{agents} and a set $M = \{1,
2, \dots, m\}$ of \emph{items}. Additionally, there is a collection $V$ of $n$
\emph{valuation functions}, $v_i : 2^M \rightarrow \mathbb{R}$, one for each
agent $i \in N$. To simplify notation, we let both $v_{ij}$ and $v_{i}(j)$
denote $v_i(\{j\})$ for $j \in M$. We assume that the valuation functions are
additive, i.e., $v_i(M) = \smash{\sum_{g \in M} v_i(g)}$, with $v_i(\emptyset) =
0$. We deal, seperately, with two types of items: \emph{goods}, which have
non-negative value, $v_i(j) \ge 0$, and \emph{chores}, which have non-positive
value, $v_i(j) \le 0$.\footnote{By this definition, an item $j$ with $v_{ij} =
0$ is both a good and a chore. However, as we do not consider mixed instances,
the overlapping definitions do not matter.} \emph{Mixed instances}, which
consist of a mix of goods and chores, and perhaps have items that are goods for
some agents and chores for others, will not be considered. Hence, the valuation
functions are monotone, i.e., for $S \subseteq T \subseteq M$, $v_i(S) \le
v_i(T)$ for goods and $v_i(S) \ge v_i(T)$ for chores. For simplicity, we assume
throughout the paper that all instances consist of goods, except in
\cref{sec:chores}, which covers instances consisting of only chores.

For any instance $I = \langle N, M, V \rangle$, we wish to partition the items
in $M$ into $n$ \emph{bundles}, one for each agent. An $n$-partition of $M$ is
called an \emph{allocation}. We are interested in finding allocations that
satisfy the fairness measure known as the \emph{maximin share guarantee}
\cite{Budish:2011}. That is, we wish to find an allocation in which each agent
gets a bundle valued at no less than what she would get if she were to partition
the items into bundles and got to choose her own bundle last.

\begin{definition}
    For an instance $I = \langle N, M, V \rangle$, the \emph{maximin share
    (MMS)} of an agent $i \in N$ is given by
    \[
        \mu_i^I = \max_{A \in \Pi_I}\min_{A_j \in A} v_i(A_j),
    \]
    \noindent where $\Pi_I$ is the set of all possible allocations in $I$. If
    obvious from context, the instance is omitted, and we write simply $\mu_i$.
\end{definition}

We say that an allocation $A = \langle A_1, A_2, \dots,
A_n\rangle$ \emph{satisfies the MMS guarantee} or, simply, is an \emph{MMS
allocation}, if each agent $i \in N$ receives a bundle valued at no less than
her MMS, i.e., $v_i(A_i) \ge \mu_i$. For a given agent $i \in N$ we call any
allocation $A$ in which $v_i(A_j) \ge \mu_i$ for every bundle $A_j \in A$, an
\emph{MMS partition} of $i$ for $I$. By definition, each agent has at least one
MMS partition for any instance $I$, but can possibly have several.

Several useful properties of MMS have been discovered in previous work. Perhaps
the most useful, is the concept of \emph{ordered instances}, in which the agents
have the same \emph{preference order} over the items.

\begin{definition}
    Instance $I = \langle N, M, V \rangle$ is said to be \emph{ordered} if
    $v_{ij} \ge v_{i(j + 1)}$ for all $i \in N$ and $1 \le j <
    |M|$.
\end{definition}

\citeauthor{Bouveret:16} showed that both for existence and approximation
results, it is sufficient to consider only ordered instances.

\begin{lemma}[\citeauthor{Bouveret:16}, \citeyear{Bouveret:16}]
    For any instance $I = \langle N, M, V \rangle$, there exists an ordered
    instance $I'$, with $\mu_i^I = \mu_i^{I'}$ for all $i \in N$, and for any
    allocation $A'$ for $I'$ there exists an allocation $A$ for $I$ such that
    $v_i(A_i) \ge v'_i(A'_i)$ for all $i \in N$.
\end{lemma}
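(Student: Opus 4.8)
The plan is to construct $I'$ by sorting each agent's values into decreasing order, and then to transfer an allocation of $I'$ back to $I$ by a greedy ``rank-by-rank'' assignment. Take $I' = \langle N, M, V' \rangle$ on the same agents and items, and for each agent $i \in N$ fix a bijection $\pi_i \colon M \to M$ with $v_i(\pi_i(1)) \ge v_i(\pi_i(2)) \ge \dots \ge v_i(\pi_i(m))$ (breaking ties arbitrarily), setting $v'_{ij} := v_i(\pi_i(j))$. Then $v'_{ij} \ge v'_{i(j+1)}$ for every $i \in N$ and $1 \le j < m$, so $I'$ is ordered. Since the MMS of an agent is a function only of the multiset $\{v_{i1}, \dots, v_{im}\}$ of her singleton values---relabelling items leaves both this multiset and the value of every bundle unchanged---and this multiset is identical in $I$ and $I'$, we obtain $\mu_i^I = \mu_i^{I'}$ for all $i \in N$. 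Concretely, $v'_i(B) = v_i(\pi_i(B))$ for every $B \subseteq M$, so applying $\pi_i^{-1}$ to each bundle turns a partition witnessing $\mu_i^I$ into one witnessing the same value for $I'$, and $\pi_i$ reverses this.

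For the allocation transfer, fix an allocation $A'$ for $I'$ and let $\sigma \colon M \to N$ be the map with $j \in A'_{\sigma(j)}$. Construct $A$ for $I$ by processing the ranks $j = 1, 2, \dots, m$ in order, at step $j$ assigning to agent $i := \sigma(j)$ some item $g_j \in M$ that has not yet been assigned and satisfies $v_i(g_j) \ge v'_{ij}$. Such a $g_j$ exists: the items $\pi_i(1), \dots, \pi_i(j)$ are agent $i$'s $j$ most valuable items in $I$, each of value $v_i(\pi_i(k)) = v'_{ik} \ge v'_{ij}$ for $k \le j$, while exactly $j - 1$ items have been assigned before step $j$; hence at least one of these $j$ items is still available. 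After all $m$ steps every item has been assigned to exactly one agent, so $A = \langle A_1, \dots, A_n \rangle$ with $A_i = \{ g_j : j \in A'_i \}$ is a genuine allocation of $I$.

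Summing the per-step guarantees then gives, for every $i \in N$,
\[
    v_i(A_i) = \sum_{j \in A'_i} v_i(g_j) \ge \sum_{j \in A'_i} v'_{ij} = v'_i(A'_i),
\]
which is exactly the claimed inequality. The argument carries over verbatim to chores: sorting by decreasing value still makes $I'$ ordered, the multiset argument still preserves the MMS, and at step $j$ agent $i$'s $j$ highest-valued (that is, least costly) items again contain an available one of value at least $v'_{ij}$. The step I expect to require the most care is the availability claim in the transfer step---one must pin down the tie-breaking bijections $\pi_i$ so that ``agent $i$'s top $j$ items'' is unambiguous, after which the pigeonhole count ($j$ candidate items against only $j-1$ prior assignments) closes it.
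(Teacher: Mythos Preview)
Your proposal is correct and follows essentially the same approach as the paper: construct $I'$ by sorting each agent's singleton values, observe that the MMS depends only on the multiset of values (the paper phrases this as the ``inherent one-to-one map'' via $\pi_i$), and transfer an allocation back by a greedy rank-by-rank assignment justified by the pigeonhole count that agent $i$'s top $j$ items cannot all be taken after only $j-1$ steps. The paper gives exactly this argument in the paragraph following the lemma rather than in a formal proof environment; your write-up is simply a more explicit and carefully quantified version of it.
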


The instance $I'$ is constructed by sorting the item valuations of each agent
and reassigning them to the items in a predetermined order. The MMS of an agent
does not change from $I$ to $I'$, due to the inherent one-to-one map between
items in $I$ and $I'$. Allocation $A$ can be constructed from $A'$ by going
through the items in order from most to least valuable, letting the agent $i$
that received item $j$ in $A'$ select her most preferred remaining item in $I$.
Since there are at least $j$ items in $I$ with an equivalent or greater value
than $j$ has in $I'$, at least one of these must remain when $i$ selects an item
for $j$ and the selected item has at least as high value in $I$ as $j$ has in
$I'$. Consequently, each agent's bundle in $A$ is at least as valuable as in
$A'$.

Another useful form of instance simplification, that we will rely heavily on, is
the concept of \emph{valid reductions}. A valid reduction is, simply put, an
allocation of a subset of the items to a subset of the agents, where each agent
receives a satisfactory bundle,\footnote{A bundle $B$ is satisfactory for an agent
$i$ if $v_i(B) \ge \mu_i$, or in the case of approximation $v_i(B) \ge
\alpha\mu_i$ for some $\alpha > 0$.} while the MMS of the remaining agents is not
smaller in the new, smaller instance.

\begin{definition}
    Let $I = \langle N, M, V \rangle$ be an instance. Removing a subset of items
    $M' \subseteq M$ and a subset of agents $N' \subseteq N$ is called a
    \emph{valid reduction} if there exists a way to allocate the items in
    $M'$ to the agents in $N'$ such that each agent $i' \in N'$ receives a
    bundle $B_{i'}$ with $v_{i'}(B_{i'}) \ge \mu_{i'}^I$ and for $i \in N
    \setminus \{N'\}$, we have $\mu_i^{I'} \ge \mu_i^{I}$, where $I' = \langle N
    \setminus N', M \setminus M', V' \rangle$.
\end{definition}

Valid reductions are commonly used when finding approximate MMS allocations,
where several simple reductions have been found~%
\cite{Kurokawa:16,Amanatidis:17a,Ghodsi:18,Garg:18a,Garg:20c}.
These reductions allocate a small number of goods to a single agent---providing
a powerful tool when considering instances with only a few more goods than
agents. Most of these reductions can also be used in the existence case and the
ones relevant to us are given below. Proofs for their validity can be found in
the papers cited above. For completness we also prove them in the appendix,
along with other omitted proofs.

\begin{lemma}\label{lem:single-item-reduction}
    Let $I = \langle N, M, V \rangle$ be an instance. If there is agent $i \in
    N$ and good $j \in M$ with $v_{ij} \ge \mu_i$, then allocating $\{j\}$ to
    $i$ is a valid reduction.
\end{lemma}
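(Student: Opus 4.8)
The plan is to check the two conditions in the definition of a valid reduction directly, with $N' = \{i\}$ and $M' = \{j\}$. The first condition is immediate: the only allocation of $M'$ to $N'$ hands $\{j\}$ to $i$, and $v_i(\{j\}) = v_{ij} \ge \mu_i = \mu_i^I$ by hypothesis. All the work is in the second condition: writing $I' = \langle N \setminus \{i\}, M \setminus \{j\}, V' \rangle$, where $V'$ is the collection of the $v_k$ restricted to $2^{M \setminus \{j\}}$, we must show $\mu_k^{I'} \ge \mu_k^{I}$ for every $k \in N \setminus \{i\}$.

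Fix such a $k$ and let $\langle A_1, \dots, A_n \rangle$ be an MMS partition of $k$ for $I$, so $v_k(A_p) \ge \mu_k^I$ for all $p$. The good $j$ lies in exactly one bundle; relabel the partition so that $j \in A_n$. Then $\langle A_1 \cup (A_n \setminus \{j\}),\, A_2,\, \dots,\, A_{n-1} \rangle$ is an $(n-1)$-partition of $M \setminus \{j\}$ (the parts are pairwise disjoint and their union is $(M \setminus A_n) \cup (A_n \setminus \{j\}) = M \setminus \{j\}$), hence a feasible allocation in $I'$. I would then bound the minimum bundle value of this partition under $v_k$: the bundles $A_2, \dots, A_{n-1}$ are unchanged and worth at least $\mu_k^I$, while $A_1 \cup (A_n \setminus \{j\}) \supseteq A_1$, so monotonicity of $v_k$ gives $v_k(A_1 \cup (A_n \setminus \{j\})) \ge v_k(A_1) \ge \mu_k^I$. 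Since every bundle of this $(n-1)$-partition of $M \setminus \{j\}$ has value at least $\mu_k^I$, we conclude $\mu_k^{I'} \ge \mu_k^I$, which completes the check.

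There is essentially no obstacle here; the one point requiring care is that the step absorbing the leftover items $A_n \setminus \{j\}$ into another bundle without decreasing its value uses non-negativity of the item values (equivalently, monotonicity of $v_k$), which is exactly why the lemma is stated for goods. Alternatively one could cite the folklore fact that deleting a single item cannot increase any agent's MMS, but exhibiting the explicit $(n-1)$-partition keeps the argument self-contained and matches the style needed later.
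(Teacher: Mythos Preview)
Your proof is correct and is essentially the same argument the paper gives: for each remaining agent, take an MMS partition, locate the bundle containing $j$, remove $j$, and absorb the leftover goods into another bundle using monotonicity. The paper phrases this by appealing to the argument of \cref{lem:domination-reduction} (since $\{j\}$ is a subset of the bundle containing $j$), but unpacking that reference yields exactly the construction you wrote out explicitly.
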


\begin{lemma}\label{lem:2-item-reduction}
    Let $I = \langle N, M, V \rangle$ be an instance. If there is an agent $i
    \in N$ and distinct goods $j, j' \in M$ with $v_i(\{j, j'\}) \ge \mu_i$ and
    $v_{i'}(\{j, j'\}) \le \mu_{i'}$ for all $i' \in (N \setminus \{i\})$, then
    allocating $\{j, j'\}$ to $i$ is a valid reduction.
\end{lemma}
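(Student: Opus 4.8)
The goal is to verify the two requirements of the valid-reduction definition for the move that gives $\{j, j'\}$ to $i$ and removes $i$, $j$, and $j'$. The first requirement --- that $i$ gets a bundle worth at least $\mu_i$ --- is immediate from the hypothesis $v_i(\{j, j'\}) \ge \mu_i$. The substance of the proof is the second requirement: for every $i' \in N \setminus \{i\}$ we must show $\mu_{i'}^{I'} \ge \mu_{i'}^{I}$, where $I' = \langle N \setminus \{i\}, M \setminus \{j, j'\}, V' \rangle$. Fix such an agent $i'$ and let $A = \langle A_1, \dots, A_n \rangle$ be an MMS partition of $i'$ for $I$, so $v_{i'}(A_k) \ge \mu_{i'}^{I}$ for every $k$. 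It then suffices to build from $A$ an $(n-1)$-partition of $M \setminus \{j, j'\}$ in which every bundle still has value at least $\mu_{i'}^{I}$.

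I would split into two cases according to how $A$ places $j$ and $j'$. If both lie in a single bundle $A_k$, discard $A_k$ and scatter the goods of $A_k \setminus \{j, j'\}$ arbitrarily among the remaining $n - 1$ bundles; since goods have non-negative value, no bundle loses value, and the union of these $n - 1$ bundles is exactly $M \setminus \{j, j'\}$. If instead $j \in A_k$ and $j' \in A_l$ with $k \neq l$, replace $A_k$ and $A_l$ by the single bundle $(A_k \cup A_l) \setminus \{j, j'\}$, leaving the other $n - 2$ bundles untouched; this again yields $n - 1$ bundles covering $M \setminus \{j, j'\}$. By additivity and the MMS-partition property, $v_{i'}(A_k \cup A_l) = v_{i'}(A_k) + v_{i'}(A_l) \ge 2\mu_{i'}^{I}$, while the hypothesis gives $v_{i'}(\{j, j'\}) \le \mu_{i'}^{I}$, so $v_{i'}((A_k \cup A_l) \setminus \{j, j'\}) \ge 2\mu_{i'}^{I} - \mu_{i'}^{I} = \mu_{i'}^{I}$, and the untouched bundles keep their value. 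In both cases the resulting $(n-1)$-partition certifies $\mu_{i'}^{I'} \ge \mu_{i'}^{I}$.

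The only real obstacle is the second case: unlike the single-item reduction we cannot just delete a bundle, so we merge two bundles and must certify that removing $j$ and $j'$ from the union does not push its value below $\mu_{i'}^{I}$ --- this is exactly what the assumption $v_{i'}(\{j, j'\}) \le \mu_{i'}^{I}$ provides. Everything else, including the bound for agent $i$ and the first case, is routine, so I expect the write-up to be short.
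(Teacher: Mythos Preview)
Your proof is correct and follows essentially the same two-case split as the paper: either $j$ and $j'$ lie in the same bundle of $i'$'s MMS partition (the paper invokes the domination-reduction argument here, while you argue directly by discarding and scattering the leftovers---these are equivalent for the subset case), or they lie in different bundles, where both you and the paper merge the two bundles and use $v_{i'}(\{j,j'\}) \le \mu_{i'}^{I}$ to bound the loss. The arguments are the same in substance.
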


\begin{lemma}\label{lem:pigeonhole-reduction-2}
    Let $I = \langle N, M, V \rangle$ be an ordered instance. If there is an
    agent $i \in N$ with $v_i(\{n, n + 1\}) \ge \mu_i$, then allocating $\{n, n
    + 1\}$ to $i$ is a valid reduction.
\end{lemma}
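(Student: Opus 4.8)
The plan is to verify the two requirements in the definition of a valid reduction for the pair $N' = \{i\}$, $M' = \{n, n+1\}$. That agent $i$ receives a satisfactory bundle is exactly the hypothesis $v_i(\{n, n+1\}) \ge \mu_i^I$, so the real content is the second requirement: for every remaining agent $i' \in N \setminus \{i\}$ we must show $\mu_{i'}^{I'} \ge \mu_{i'}^{I}$, where $I' = \langle N \setminus \{i\}, M \setminus \{n, n+1\}, V' \rangle$. Since $\mu_{i'}^{I'}$ is the maximum, over all partitions of $M \setminus \{n, n+1\}$ into $n-1$ bundles, of the least bundle value, it suffices to exhibit one such partition all of whose bundles are worth at least $\mu_{i'}^{I}$ to $i'$.

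To build it, I would start from an MMS partition $A = \langle A_1, \dots, A_n \rangle$ of $i'$ for $I$, so $v_{i'}(A_\ell) \ge \mu_{i'}^{I}$ for every $\ell$, and modify it by hand. The key point, and the only place orderedness is used, is a pigeonhole observation: the $n+1$ most valuable items $1, \dots, n+1$ occupy only $n$ bundles of $A$, so some bundle $R$ contains two of them, say items $a < b$; and whenever the smaller index satisfies $a \le n-1$ we automatically have $v_{i'}(a) \ge v_{i'}(n)$ and $v_{i'}(b) \ge v_{i'}(n+1)$, i.e.\ $R$ carries enough \emph{spare} value to absorb the loss of items $n$ and $n+1$. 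I would then distinguish cases by the location of $n$ and $n+1$. If both lie in one bundle $B$, delete $B$ and redistribute $B \setminus \{n, n+1\}$ arbitrarily among the other $n-1$ bundles; as the items are goods this only raises bundle values, so all $n-1$ remaining bundles are still worth at least $\mu_{i'}^{I}$. Otherwise write $P$ for the bundle containing $n$ and $Q$ for the one containing $n+1$, with $P \ne Q$. If the doubled bundle $R$ is $P$ (the case $R = Q$ being symmetric), then — ruling out the degenerate possibility that $R$'s other top item is $n+1$, which would force $P = Q$ — that other item has index at most $n-1$, and one checks that the single merged bundle $(P \setminus \{n\}) \cup (Q \setminus \{n+1\})$ is still worth at least $\mu_{i'}^{I}$; replacing $P$ and $Q$ by it yields the desired $(n-1)$-partition. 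If instead $R$ is a third bundle, its two top items $a < b$ both have index at most $n-1$; I would move $b$ into $Q \setminus \{n+1\}$ and move $a$ together with the rest of $R$ into $P \setminus \{n\}$, so that $P, Q, R$ are replaced by the two bundles $(P \setminus \{n\}) \cup (R \setminus \{b\})$ and $(Q \setminus \{n+1\}) \cup \{b\}$, and check via the two index inequalities above that each is worth at least $\mu_{i'}^{I}$. In every case this produces a partition of $M \setminus \{n, n+1\}$ into $n-1$ bundles each of value at least $\mu_{i'}^{I}$, hence $\mu_{i'}^{I'} \ge \mu_{i'}^{I}$, completing the verification.

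The additive value bookkeeping, once the right spare item has been named in each branch, is routine; the main thing to get right is the case analysis itself. In each branch one must confirm that the rebuilt collection really is a partition of $M \setminus \{n, n+1\}$ — pairwise disjoint and covering everything — and one must handle degenerate overlaps among $P$, $Q$, $R$ and the items $a, b, n, n+1$: in particular, when $R$ is the bundle of $n$ its other top item cannot equal $n+1$ (that would give $P = Q$), and when $R$ is a third bundle neither top item of $R$ can equal $n$ or $n+1$. These facts are exactly what guarantee that the inequalities $v_{i'}(a) \ge v_{i'}(n)$ and $v_{i'}(b) \ge v_{i'}(n+1)$ supplied by orderedness are available precisely where they are invoked.
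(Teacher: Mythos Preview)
Your proof is correct. Both arguments rest on the same pigeonhole observation: any MMS partition of $i'$ contains some bundle $B$ with at least two of the items $1, \dots, n+1$. From there the paper takes a much shorter route: it observes that any such $B$ dominates $\{n, n+1\}$ in the sense of \cref{def:domination} (the two items $a < b \le n+1$ in $B \cap \{1, \dots, n+1\}$ furnish the injection $n \mapsto a$, $n+1 \mapsto b$) and then invokes \cref{lem:domination-reduction} directly. Your explicit case analysis on the positions of $n$, $n+1$, and the doubled bundle $R$ is essentially an unrolled, specialized instance of the swap construction inside the proof of that lemma. So you gain a self-contained argument independent of the domination machinery, at the cost of length; the paper's version is a two-line application of a tool it has already built.
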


\begin{lemma}\label{lem:2-item-reduction-from-1}
    Let $I = \langle N, M, V \rangle$ be an ordered instance. If there is an
    agent $i \in N$ and good $j \in M$ such that $v_{i}(j) \ge \mu_i$ and
    $v_{i'}(j) < \mu_{i'}$ for all $i' \in N \setminus \{i\}$, then allocating
    $\{j, j'\}$ to $i$, where $j'$ is the worst good in $M \setminus \{j\}$, is a
    valid reduction.
\end{lemma}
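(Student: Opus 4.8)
The plan is to verify the two requirements in the definition of a valid reduction. For the first, agent $i$'s satisfaction, I would simply note that goods are non-negative, so $v_i(\{j,j'\}) = v_{ij} + v_{ij'} \ge v_{ij} \ge \mu_i$. The real work is the second requirement: showing $\mu_{i'}^{I'} \ge \mu_{i'}$ for every other agent $i' \in N \setminus \{i\}$, where $I' = \langle N \setminus \{i\}, M \setminus \{j,j'\}, V'\rangle$. My strategy here is the standard one for such lemmas: take an MMS partition $A = \langle A_1, \dots, A_n\rangle$ of $i'$ for $I$ and surgically modify it into an $(n-1)$-partition of $M \setminus \{j,j'\}$ in which every bundle is still worth at least $\mu_{i'}$ to $i'$, which immediately yields $\mu_{i'}^{I'} \ge \mu_{i'}$.

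First I would locate the bundle $A_k \ni j$ and observe that, since $v_{i'}(j) < \mu_{i'} \le v_{i'}(A_k)$, it contains some good $g \neq j$; then, because the instance is \emph{ordered} and $j'$ is the worst good in $M \setminus \{j\}$ for every agent simultaneously, $v_{i'}(A_k \setminus \{j\}) \ge v_{i'}(g) \ge v_{i'}(j')$. This single inequality is the engine of the proof. I would then split on whether $j'$ lies in $A_k$ or in a different bundle $A_l$. If $j' \in A_k$, I discard whatever is left of $A_k$ after removing $j$ and $j'$, scattering those leftover goods arbitrarily among the $n-1$ untouched bundles, each of which keeps value $\ge \mu_{i'}$ since it only gains goods. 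If $j' \in A_l$ with $l \ne k$, I keep the $n-2$ bundles outside $\{A_k, A_l\}$ unchanged and replace $A_k, A_l$ by the single merged bundle $(A_k \setminus \{j\}) \cup (A_l \setminus \{j'\})$, whose value is $v_{i'}(A_k \setminus \{j\}) + \bigl(v_{i'}(A_l) - v_{i'}(j')\bigr) \ge v_{i'}(j') + \mu_{i'} - v_{i'}(j') = \mu_{i'}$ by the engine inequality. In either case this produces the desired $(n-1)$-partition.

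I do not anticipate a serious obstacle; the only delicate point is the case $l \ne k$, where the bundle that loses $j$ might on its own fall below $\mu_{i'}$. The remedy—and the reason the lemma pairs $j$ with the globally worst remaining good $j'$—is to merge that bundle with $A_l$ and let the part of $A_l$ exceeding $j'$ cover the shortfall; ordered-ness is exactly what guarantees $v_{i'}(g) \ge v_{i'}(j')$ there. Degenerate situations (too few items for $j'$ to exist, or $A_l = \{j'\}$ so one merged piece is empty) either fall outside the hypothesis or are absorbed by the same inequality chain, so no separate treatment is needed.
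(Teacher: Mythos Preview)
Your argument is correct. The two-case surgery you describe produces a valid $(n-1)$-partition for each $i' \neq i$, and the ``engine inequality'' $v_{i'}(A_k \setminus \{j\}) \ge v_{i'}(j')$ is exactly the place where ordered-ness and the choice of $j'$ as the globally worst remaining good are used.

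The paper's proof takes a different, shorter route: it invokes the domination machinery (\cref{lem:domination-reduction}). Since $v_{i'}(j) < \mu_{i'}$, the bundle $B$ containing $j$ in any MMS partition of $i'$ has $|B| \ge 2$; picking any $g \in B \setminus \{j\}$ gives an injection $f(j)=j$, $f(j')=g$ witnessing $B \succeq \{j,j'\}$, and the domination lemma finishes. In effect, your case split and merge are a hand-unrolled instance of what \cref{lem:domination-reduction} does internally (swap $j'$ with a dominating good in $B$, then dump leftovers). What you gain is a self-contained proof that does not depend on the domination framework; what the paper gains is brevity and a demonstration that this reduction fits the same template as the others.
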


In addition to valid reductions, there are several cases in which an MMS
allocation is known to exist. These cases will be used as base cases in our
existence argument.

\begin{lemma}\label{lem:identical-valuation-mms}
    Let $I = \langle N, M, V \rangle$ be an instance. If there are at least $n -
    1$ agents with the same MMS partition, then an MMS allocation exists.
\end{lemma}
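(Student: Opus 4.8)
The plan is to show that the shared MMS partition can be turned into an MMS allocation almost directly, by serving the (possibly) dissenting agent first and then handing out the remaining bundles arbitrarily. Assume without loss of generality that agents $1, 2, \dots, n-1$ all have the common MMS partition $A = \langle A_1, \dots, A_n \rangle$, so that $v_i(A_j) \ge \mu_i$ for every $i \in \{1, \dots, n-1\}$ and every $j \in \{1, \dots, n\}$. Agent $n$ need not have $A$ as an MMS partition, so she must be handled separately.

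First I would record the elementary averaging bound $\mu_n \le v_n(M)/n$: in any $n$-partition of $M$ the least valuable bundle has value at most the average $v_n(M)/n$, and $\mu_n$ is the maximum over all partitions of this minimum, hence is itself at most $v_n(M)/n$. Applying this to the partition $A$, the bundle $A_k$ maximizing $v_n(A_k)$ over $k$ satisfies $v_n(A_k) \ge v_n(M)/n \ge \mu_n$, so allocating $A_k$ to agent $n$ is satisfactory. Next I would allocate the remaining bundles $\{A_j : j \neq k\}$ to agents $1, \dots, n-1$ in an arbitrary bijective fashion; whichever bundle $A_j$ an agent $i \in \{1, \dots, n-1\}$ receives, we have $v_i(A_j) \ge \mu_i$ because $A$ is an MMS partition of $i$. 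The resulting allocation gives every agent a bundle worth at least her MMS, which is exactly what is required.

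I do not expect a genuine obstacle here: the only step needing (minor) care is the inequality $\mu_n \le v_n(M)/n$, which is what lets agent $n$ be satisfied by her favorite bundle of the common partition. Everything else follows immediately from the definition of an MMS partition and the additivity of the valuations.
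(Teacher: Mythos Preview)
Your proof is correct and is essentially the same as the paper's own argument: both give the possibly dissenting agent her favourite bundle of the shared MMS partition (using the averaging inequality $\mu_n \le v_n(M)/n$) and hand the remaining bundles arbitrarily to the other $n-1$ agents.
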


\begin{lemma}\label{lem:mms-existence-few-agents}
    An MMS allocation always exists for an instance $I = \langle N, M, V
    \rangle$, where $n \le 2$.
\end{lemma}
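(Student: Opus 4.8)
The plan is to treat the cases $n = 1$ and $n = 2$ separately, the second by a classic cut-and-choose argument. The single fact used throughout is that for any instance and any agent $i \in N$ we have $\mu_i \le v_i(M)/n$: in any $n$-partition the least valuable bundle is worth at most the average bundle value, and by additivity the average equals $v_i(M)/n$. Note this step uses only additivity of $v_i$, so it is valid for chores as well as goods.

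For $n = 1$ there is a unique allocation, in which the single agent receives all of $M$, and $\mu_1 = v_1(M)$ by definition, so the allocation is trivially an MMS allocation.

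For $n = 2$, let agent $1$ fix one of her MMS partitions $\langle A_1, A_2\rangle$, so that $v_1(A_1) \ge \mu_1$ and $v_1(A_2) \ge \mu_1$. Let agent $2$ take whichever of the two bundles she values at least as much, say $A_k$ with $v_2(A_k) \ge v_2(A_{3-k})$, and give agent $1$ the remaining bundle $A_{3-k}$. Agent $1$ is satisfied because $v_1(A_{3-k}) \ge \mu_1$ by choice of her partition. Agent $2$ is satisfied because $v_2(A_k) + v_2(A_{3-k}) = v_2(M)$ together with $v_2(A_k) \ge v_2(A_{3-k})$ gives $v_2(A_k) \ge v_2(M)/2 \ge \mu_2$, the last inequality being the averaging bound above with $n = 2$. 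Hence both agents receive a bundle worth at least their MMS.

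There is no genuine obstacle in this argument; the only point requiring a moment's care is that both the averaging bound $\mu_i \le v_i(M)/n$ and the ``take the better of the two bundles'' step go through verbatim for chores, since neither step ever appeals to the sign of the valuations, only to additivity.
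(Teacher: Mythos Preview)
Your proof is correct and is essentially the same cut-and-choose argument the paper uses: the paper's one-line proof simply invokes \cref{lem:identical-valuation-mms}, whose proof (take one agent's MMS partition, let the remaining agent pick her favourite bundle using the averaging bound $\mu_i \le v_i(M)/n$) specialises to exactly what you wrote for $n=2$. The only packaging difference is that the paper factors the argument through the more general lemma about $n-1$ agents sharing an MMS partition, whereas you argue directly.
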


\begin{lemma}[\citeauthor{Feige:2022}, \citeyear{Feige:2022}]\label{lem:existance-feige}
    An MMS allocation always exists for an instance $I = \langle N, M, V
    \rangle$ if $m \le n + 5$.
\end{lemma}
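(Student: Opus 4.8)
The plan is to follow the standard reduction strategy for MMS existence in sparse instances. First I would invoke the lemma of \citeauthor{Bouveret:16} to reduce to the case where $I$ is \emph{ordered}: since that construction keeps the same agents and items, it suffices to prove the statement for ordered instances with $m \le n + 5$, and for $n \le 2$ we are already done by \cref{lem:mms-existence-few-agents}, so assume $n \ge 3$.

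The key driver is the single-item reduction of \cref{lem:single-item-reduction}, which I claim applies whenever $m < 2n$. Indeed, if $m < 2n$ then by pigeonhole no $n$-partition of $M$ can have every bundle of size at least two, so any MMS partition of any agent $i$ contains a bundle of size at most one; because $I$ is ordered, the best good then satisfies $v_{i1} \ge \mu_i$ (using $\mu_i \ge 0 = v_i(\emptyset)$ in the case where that bundle is empty). Hence we may allocate good $1$ to agent $1$, obtaining a smaller ordered instance with one fewer agent and one fewer good and with $m - n$ unchanged. Iterating this, we either reach an instance with $n \le 2$ (done), or one with $n \ge 3$ and $2n \le m \le n + 5$; the latter forces $n \le 5$, leaving only the finitely many pairs $(n,m) \in \{(3,6),(3,7),(3,8),(4,8),(4,9),(5,10)\}$.

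It then remains to establish existence for these six base cases. I would attack them by case analysis: normalize so that $\mu_i = 1$ for every agent, enumerate the possible shapes of an MMS partition of an ordered instance with $m \le 2n$ (the highest-valued goods lie in singleton bundles, and only a bounded number of the lowest-valued goods, depending on $m-n$, get paired up), and apply the remaining reductions of \cref{lem:2-item-reduction,lem:pigeonhole-reduction-2,lem:2-item-reduction-from-1} whenever they are available. In the residual configurations where no reduction applies, the MMS partitions of the agents are rigid enough that one can either exhibit an explicit MMS allocation directly or argue that two agents must share an MMS partition and invoke \cref{lem:identical-valuation-mms}.

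The main obstacle is precisely this finite-but-intricate case analysis. When $m$ is close to $2n$ the MMS bundles need not be singletons, so one must carefully track which bundles dominate which and how bundles can be swapped between agents without destroying anyone's guarantee; the pair $(3,8)$ in particular sits right next to \citeauthor{Feige:2022}'s $3$-agent, $9$-good counterexample, so there is essentially no slack to exploit. This delicate bookkeeping is the heart of \citeauthor{Feige:2022}'s argument, and here we would either reproduce it or cite it directly.
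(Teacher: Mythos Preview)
Your outline is sound: the reduction to ordered instances and the single-good reduction whenever $m < 2n$ both work, and the six residual pairs you list are exactly the ones left over. But as you yourself note, those base cases---especially $(3,8)$---are where all the content lies, and your proposal ultimately defers them back to \citeauthor{Feige:2022}. The paper handles the lemma the same way at the point of statement (it is simply cited), but in the appendix, immediately after \cref{lem:2n+2}, it gives a short self-contained argument that sidesteps the case analysis entirely.

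The difference between your route and the paper's is the strength of the reduction step. You reduce only when $m < 2n$, using the pigeonhole fact that some MMS bundle has size at most one. \Cref{lem:2n+2} shows that a valid reduction removing one agent and one or two goods exists whenever $m \le 2n+2$: if no agent values good $1$ at MMS, then every bundle in every MMS partition has size at least two, so with $m \le 2n+2$ each agent has at least $n-2$ bundles of size exactly two; each such bundle must meet $\{1,\dots,n-1\}$ (otherwise \cref{lem:pigeonhole-reduction-2} already applies), and since each agent's $n-2$ size-two bundles are disjoint, each agent misses at most one good of $\{1,\dots,n-1\}$. A pigeonhole count then produces some $g \in \{1,\dots,n-1\}$ lying in a size-two bundle for at least $n-1$ agents, and \cref{lem:mostly-overlapping-2} (a domination argument) turns this into a two-good reduction. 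Because $n+5 \le 2n+2$ holds for every $n \ge 3$, iterating \cref{lem:2n+2} collapses every instance down to $n = 2$, where \cref{lem:mms-existence-few-agents} finishes. So the paper's domination machinery buys exactly what your plan is missing: it eliminates the $(3,6),\dots,(5,10)$ base cases rather than having to fight through them.
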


\section{Existence For Any Constant}\label{sec:existence-any-constant}

Our first main result is that for any $c > 0$, there exists an $n_c > 0$ such
that all instances with $n \ge n_c$ agents and $n + c$ goods have MMS
allocations. To show this, we exploit a structural similarity in MMS partitions
when $c < n$. Specifically, if $m < 2n$, any MMS partition contains some bundles
of cardinality zero or one.\footnote{If there is a bundle of cardinality zero in
an MMS partition of agent $i$, then $\mu_i = 0$.} For ordered instances of this
kind, there is a set of at least $n - c$ goods valued, individually, at MMS or
higher by each agent, namely the set of the $n - c$ most valuable goods:

\begin{lemma}\label{lem:goods-at-mms}
    Let $I = \langle N, M, V \rangle$ be an ordered instance with $m = n + c$
    for some $c$ with $n > c > 0$. Then $v_{ij} \ge \mu_i$ for all $i \in N$ and
    $j \in \{1, 2, \dots, n - c\}$.
\end{lemma}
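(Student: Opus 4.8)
The plan is to show that in any MMS partition of agent $i$ at least $n - c$ of the bundles are singletons, which yields at least $n - c$ distinct goods of value at least $\mu_i$; since the instance is ordered, the goods of value at least $\mu_i$ form a prefix of $M$, so this prefix must contain $\{1, 2, \dots, n - c\}$.

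First I would handle the degenerate case: if $\mu_i = 0$, then $v_{ij} \ge 0 = \mu_i$ for every good $j$ and the claim is immediate. So assume $\mu_i > 0$ and fix an MMS partition $A = \langle A_1, \dots, A_n \rangle$ of $i$, so that $v_i(A_j) \ge \mu_i > 0$ for all $j$. Because goods are non-negative and each bundle has strictly positive value, no bundle is empty, i.e., $|A_j| \ge 1$ for every $j$. Now count: $\sum_{j=1}^{n} (|A_j| - 1) = m - n = c$, and every summand is non-negative. A bundle with $|A_j| \ge 2$ contributes at least $1$ to this sum, so at most $c$ bundles have size at least $2$; equivalently, at least $n - c$ of the bundles are singletons. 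Each such singleton $\{g\}$ satisfies $v_{ig} = v_i(\{g\}) \ge \mu_i$, and distinct bundles are disjoint, so there are at least $n - c$ distinct goods $g$ with $v_{ig} \ge \mu_i$.

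Finally I would invoke the ordering. Since $v_{i1} \ge v_{i2} \ge \cdots \ge v_{im}$, the set $\{j \in M : v_{ij} \ge \mu_i\}$ is a prefix $\{1, 2, \dots, p\}$ of $M$, and the previous paragraph gives $p \ge n - c$ (note $n - c \ge 1$ as $n > c$). Hence $v_{ij} \ge v_{i(n-c)} \ge \mu_i$ for every $j \in \{1, 2, \dots, n - c\}$, which is exactly the claim. The argument is entirely a counting argument; the only point requiring care is that $\mu_i > 0$ forces every bundle of an MMS partition to be non-empty, which is what upgrades ``at least $n - c$ bundles of size at most one'' to ``at least $n - c$ singletons'' and thus produces the required $n - c$ goods of value at least $\mu_i$.
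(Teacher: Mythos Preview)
Your proof is correct and follows essentially the same approach as the paper's: handle $\mu_i = 0$ separately, then for $\mu_i > 0$ observe that all bundles in an MMS partition are non-empty, count to get at least $n - c$ singleton bundles, and use the ordering to conclude that the $n - c$ best goods each have value at least $\mu_i$. The only cosmetic difference is that the paper picks the worst singleton good and chains inequalities directly, whereas you phrase the same step as a prefix argument.
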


\begin{proof}
    Agent $i \in N$ either has $\mu_i = 0$ or each bundle in any one of her MMS
    partitions contains at least one good. If $\mu_i = 0$, then $v_{ij} \ge
    \mu_i$ for all $j \in M$. Otherwise, at most $c$ of the bundles in an MMS
    partition can contain more than one good. The worst good $g$ contained in a
    bundle of cardinality one, is such that $g \ge n - c$. Since $\mu_i \le
    v_{ig}$ by definition, $\mu_i \le v_{ig} \le v_i(n - c) \le v_i(n - c - 1)
    \le \dots \le v_i(1)$.
\end{proof}

The shared set of goods valued at MMS or higher guarantees that each agent has
an MMS partition where these goods appear in bundles of cardinality one.

\begin{lemma}\label{lem:mms-partition-layout}
    Given an ordered instance $I = \langle N, M, V \rangle$ and agent $i \in N$,
    let $k$ denote the number of goods in $M$ valued at $\mu_i$ or higher by
    $i$. Then $i$ has an MMS partition in which each of the goods $1, 2,
    \dots, \min(n - 1, k)$ forms a bundle of cardinality one.
\end{lemma}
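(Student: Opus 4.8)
The plan is to start from an arbitrary MMS partition of $i$ and, processing the goods $1, 2, \dots, t$ (where $t := \min(n-1,k)$) in increasing order, repeatedly apply a local ``split-and-merge'' step that turns the next good into a singleton bundle while keeping the partition an MMS partition. Concretely, I would prove by induction on $\ell \in \{0, 1, \dots, t\}$ the statement: \emph{agent $i$ has an MMS partition in which each of the goods $1, 2, \dots, \ell$ forms a bundle of cardinality one}. Taking $\ell = t$ gives the lemma.

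First I would record the only property of $k$ that is needed. Since $I$ is ordered, the set $\{j \in M : v_{ij} \ge \mu_i\}$ is a prefix of the items, hence equals $\{1, 2, \dots, k\}$; in particular $v_{ig} \ge \mu_i$ for every $g \le t$. The base case $\ell = 0$ is immediate, as $i$ has at least one MMS partition by definition. For the inductive step, suppose $\ell < t$ and that $A$ is an MMS partition of $i$ in which $\{1\}, \dots, \{\ell\}$ appear as bundles. Good $\ell+1$ lies in some bundle $B$ of $A$; since the singleton bundles $\{1\}, \dots, \{\ell\}$ contain no good other than $1, \dots, \ell$, the bundle $B$ is none of them. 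If $|B| = 1$ then $B = \{\ell+1\}$ and $A$ already works. Otherwise $|B| \ge 2$, and since $\ell + 1 \le t \le n-1$ there are at least $n - \ell \ge 2$ bundles of $A$ other than $\{1\}, \dots, \{\ell\}$, so we may choose one such bundle $B' \ne B$. Now replace $B$ and $B'$ in $A$ by the two bundles $\{\ell+1\}$ and $(B \setminus \{\ell+1\}) \cup B'$, leaving all other bundles unchanged; call the result $A'$.

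It remains to verify two things. First, $A'$ is still an MMS partition of $i$: it has $n$ bundles, the $n-2$ untouched bundles retain value $\ge \mu_i$, we have $v_i(\{\ell+1\}) = v_{i,\ell+1} \ge \mu_i$ by the prefix observation, and $v_i\bigl((B \setminus \{\ell+1\}) \cup B'\bigr) \ge v_i(B') \ge \mu_i$ by monotonicity of $v_i$ together with the assumption that $A$ is an MMS partition. Second, goods $1, \dots, \ell+1$ are singletons in $A'$: the bundles $\{1\}, \dots, \{\ell\}$ survive untouched, $\{\ell+1\}$ is a new singleton, and none of $1, \dots, \ell$ can occur in $(B \setminus \{\ell+1\}) \cup B'$ because in $A$ those goods occurred only in their own singleton bundles. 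This completes the induction.

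The argument is essentially mechanical, and the one place that needs attention — and the only use of the hypothesis beyond $v_{ig} \ge \mu_i$ — is the guarantee that a ``spare'' bundle $B'$ is available, which is exactly why the statement caps the index at $\min(n-1,k)$ rather than $k$. (When $\mu_i = 0$ some bundles may be empty, but an empty $B'$ causes no difficulty, since $v_i\bigl((B \setminus \{\ell+1\}) \cup B'\bigr) \ge 0 = \mu_i$ still holds; likewise the case $\min(n-1,k) = 0$ is covered by the vacuous base case.)
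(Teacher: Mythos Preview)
Your proof is correct and follows essentially the same approach as the paper's. Both arguments repeatedly apply the same local ``split-and-merge'' step: extract a good $g$ with $v_{ig}\ge\mu_i$ into its own singleton bundle and merge the remainder of its old bundle into some other bundle, using $t\le n-1$ to guarantee a spare bundle exists. The only cosmetic difference is the bookkeeping: the paper measures progress by shrinking the set $G_A=\{g\le t:|B_g|>1\}$ (picking any $g\in G_A$ and merging into a bundle disjoint from \emph{all} of $\{1,\dots,t\}$), whereas you process goods $1,\dots,t$ in increasing order and only require the spare bundle to avoid the already-isolated singletons $\{1\},\dots,\{\ell\}$. Your ordering makes the invariant slightly simpler to state, while the paper's choice of spare bundle makes the decrement $|G_{A'}|=|G_A|-1$ immediate; neither buys anything substantive over the other.
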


\begin{proof}
    Let $A$ be an arbitrary MMS partition of $i$, $B_g \in A$ denote the bundle
    containing some $g \in M$ and let $G_A = \{g \in \{1, 2, \dots, \min(n - 1,
    k)\} : |B_g| > 1\}$. If $G_A = \emptyset$, then all the goods $1, 2, \dots,
    \min(n - 1, k)$ appear in bundles of cardinality one. We wish to show that
    if $G_A \neq \emptyset$, then there exists an MMS partition $A'$ with
    $|G_{A'}| < |G_A|$. Assume that $G_A \neq \emptyset$ and for some $g \in
    G_A$, select $A_j \in A$ such that $\{1, 2, \dots, \min(n - 1, k)\} \cap A_j
    = \emptyset$. Then, the allocation $A' = \langle A_1, \dots, \{g\}, \dots,
    A_j \cup (B_g \setminus \{g\}), \dots, A_n \rangle$ is an MMS partition of
    $i$, as $v_i(A_j \cup (A_g \setminus \{g\})) \ge v_i(A_j) \ge \mu_i$ and
    $v_{ig} \ge \mu_i$. Further, as only the two bundles $B_g$ and $A_j$ have
    been modified, and $A_j$ did not contain any good in $\{1, 2, \dots, \min(n
    - 1, k)\}$, we have $|G_{A'}| = |G_{A}| - 1$. Hence, $i$ has an MMS
    partition $A^*$ with $G_{A^*} = \emptyset$.
\end{proof}

\Cref{lem:mms-partition-layout} enforces a particularly useful restriction on
the set of $n$-partitions of $M$ when $n > c$. As a result of
\cref{lem:goods-at-mms}, \cref{lem:mms-partition-layout} guarantees that each
agent has at least one MMS partition in which the $n - c$ most valuable goods
appear in bundles of cardinality one. In this MMS partition, the remaining
$2c$ goods are partitioned into $c$ bundles. Ignoring the possibility of having
empty bundles, the number of ways to partition these $2c$ goods into $c$ bundles
is $\smash{{ 2c \brace c }}$, where $\smash{{ 2c \brace c }}$ is a Stirling
number of the second kind.\footnote{If there is an empty bundle, then all
$n$-partitions, including those without empty bundles, are MMS partitions of the
agent.}

The value of $\smash{{ 2c \brace c }}$ does not depend on the
value of $n$.  Thus, as the number of agents increases, there must eventually be
multiple agents with the same MMS partition. Specifically, when there are ${2c
\brace c}(c - 2) + 1$ agents, at least $c - 1$ of them share the same MMS
partition of the type outlined in \cref{lem:mms-partition-layout}. Then an MMS
allocation can be constructed by allocating the goods $1, 2, \dots, n - c$ to $n
- c$ of the other $n - c + 1$ agents. The last of the $n - c + 1$ agents
receives her favorite remaining bundle in the shared MMS partition, and the
last $c - 1$ agents each receives an arbitrary remaining bundle in the shared
MMS partition. This is an MMS allocation, as all but one agent receives a bundle
from one of her MMS partitions, and the remaining agent $i$ receives a bundle
worth at least $(v_i(M) - v_i(\{1, 2, \dots, n - c\}))/c \ge (c\mu_i)/c =
\mu_i$.

While the above argument is sufficient for showing existence for any $c > 0$,
the lower bounds of \citet{Rennie:69} on Stirling numbers give $n_c = {2c \brace
c }(c - 2) + 1 > c^c$. Hence, while straightforward, the argument is not
sufficient to prove the bound of \cref{thr:existence-stronger}, $n_c \le
\formulaaa{c}$. For that, we will use a more involved inductive argument.

Our inductive procedure builds on the observation that a full MMS allocation
need not be found directly. Instead, for a $c > 0$, it is sufficient to
use valid reductions to reduce to some smaller instance with a smaller number
$c' \ge 0$ of additional goods. As long as the smaller instance has at least
$n_{c'}$ agents, an MMS allocation exists for the original instance. Here, the
existence for $n' \ge n_{c'}$ with $m' \le n' + c'$ is assumed to be proven,
with \cref{lem:existance-feige,thr:c=6} as base cases. To show the existence of
valid reductions, we will again exploit the structure of the MMS partitions
guaranteed by \cref{lem:goods-at-mms,lem:mms-partition-layout} in order to
construct an upper bound on the number of agents required before some agents
have MMS partitions with additional shared structure.

To construct valid reductions, and as a definition of shared structure, we will
utilize a partial ordering of bundles. For ordered instances, it is often
possible to say that some subset of goods $B \subseteq M$ is at least as good as
some other subset $B' \subseteq M$, no matter the valuation function.
Obviously, this holds when $B' \subseteq B$, even for non-ordered instances.
However, due to the common preference-order of the agents, it could be that $B$
is better than $B'$ even when $B' \not\subseteq B$. For example, when $B = \{3,
7, 8, 11, 14\}$ and $B' = \{6, 7, 11, 13\}$. As illustrated in
\cref{fig:domination-example}, $B$ is at least as valuable as $B'$, since
$v_i(3) \ge v_i(6)$, $v_i(8) \ge v_{i}(13)$, $\{7, 11\} \subset B$, and $\{7,
11\} \subset B'$. We can formalize the partial ordering in the following way.

\begin{definition}\label{def:domination}
    For an ordered instance $I = \langle N, M, V \rangle$, a subset of goods $B
    \subseteq M$ \emph{dominates} a subset of goods $B' \subseteq M$ if there is
    an injective function $f : B' \rightarrow B$ such that $f(j) \le j$ for all
    $j \in B'$. If $B$ dominates $B'$, we denote this by $B \succeq B'$. We use
    $B \succ B'$ for the case where $B \neq B'$.
\end{definition}

The domination ordering provides a useful set of valid reductions. Whenever an
agent $i$ values a bundle $B$ at MMS or higher, and every other agent in the
instance has a bundle in her MMS partition that dominates $B$, then allocating
$B$ to $i$ forms a valid reduction.

\begin{lemma}\label{lem:domination-reduction}
    Let $I = \langle N, M, V \rangle$ be an ordered instance and $B$ be a bundle
    with $v_i(B) \ge \mu_i$ for some $i \in N$. If each agent $i' \in N
    \setminus \{i\}$ has a bundle $B_{i'}$ in her MMS partition with $B_{i'}
    \succeq B$, then allocating $B$ to $i$ is a valid reduction.
\end{lemma}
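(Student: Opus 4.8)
The plan is to show that removing $B$ from $M$ and $i$ from $N$ is a valid reduction, which by \cref{def:domination} and the definition of a valid reduction amounts to two things: first, agent $i$ receives a satisfactory bundle, and second, the MMS of every remaining agent does not decrease when we pass to the smaller instance $I' = \langle N \setminus \{i\}, M \setminus B, V' \rangle$. The first part is immediate, since we allocate $B$ to $i$ and $v_i(B) \ge \mu_i$ by hypothesis. So the entire content of the lemma is the second part: for each $i' \in N \setminus \{i\}$, we must exhibit a partition of $M \setminus B$ into $n - 1$ bundles each worth at least $\mu_{i'}$ to agent $i'$.

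The key idea is to start from the MMS partition of $i'$ that is guaranteed to contain a bundle $B_{i'}$ dominating $B$, and to ``rearrange goods along the domination map.'' Concretely, fix $i' \in N \setminus \{i\}$ and let $A = \langle A_1, \dots, A_n \rangle$ be the MMS partition of $i'$ with some bundle $A_\ell = B_{i'} \succeq B$; let $f : B \to B_{i'}$ be the injective witness with $f(j) \le j$ for all $j \in B$. I would first discard the bundle $B_{i'}$ entirely (this is the bundle we ``sacrifice'' so that we drop from $n$ bundles to $n-1$), and then go through the goods of $B$ one at a time: for each $j \in B$, the good $f(j)$ currently sits in some bundle $A_{p(j)}$ (possibly already modified); I remove $f(j)$ from that bundle and, if $j$ itself lies in a surviving bundle — i.e., $j \notin B_{i'}$ — I insert $j$ into the bundle where $f(j)$ was, or more carefully, I swap the roles so that every good of $M \setminus B$ ends up covered exactly once. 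Since $f$ is injective and $f(j) \le j$, each such swap replaces a good by a good of weakly lower index, hence weakly lower value (ordered instance), so every modified bundle's value to $i'$ only goes up; the bundles never touched keep their value $\ge \mu_{i'}$. After processing all of $B$, the goods that remain distributed are exactly $(M \setminus B_{i'}) \setminus \{f(j) : j \in B\} \cup \{j \in B : j \notin B_{i'}\}$, together with whatever was left of $B_{i'} \setminus B$ that got reinserted — a short counting check shows these are precisely the goods of $M \setminus B$, split among the $n - 1$ surviving bundles, each worth at least $\mu_{i'}$. Hence $\mu_{i'}^{I'} \ge \mu_{i'}^{I}$, completing the reduction.

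I expect the main obstacle to be purely bookkeeping: making the good-by-good rearrangement precise when the domination map $f$ has ``chains,'' i.e., when $f(j) = j'$ for some $j' \in B$ that is itself being moved, or when $f(j) \in B_{i'}$ so that the target good lives inside the sacrificed bundle. The clean way to handle this is to process the goods of $B$ in a suitable order (for instance, by iterating the relation $j \mapsto f(j)$ and peeling off elements whose image has already been resolved, which terminates since $f(j) \le j$ strictly decreases along non-trivial chains), or alternatively to argue more abstractly: the symmetric-difference structure of $B$, $B_{i'}$ and the map $f$ induces a matching that lets us transform the partition $\langle A_1,\dots,A_n\rangle \setminus \{B_{i'}\}$ of $M \setminus B_{i'}$ into a partition of $M \setminus B$ of the same cardinality $n-1$ in which every bundle dominates its predecessor — and by the ordered assumption, domination implies weakly larger value, so every bundle stays above $\mu_{i'}$. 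Either route reduces the lemma to the single monotonicity fact that, in an ordered instance, $B' \succeq B$ implies $v_{i'}(B') \ge v_{i'}(B)$, which follows directly from \cref{def:domination} since $v_{i'}(f(j)) \ge v_{i'}(j)$ for each $j$ and $f$ is injective.
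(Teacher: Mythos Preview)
Your overall strategy is right and matches the paper's: for each remaining agent $i'$, build an $(n-1)$-partition of $M\setminus B$ with every bundle worth at least $\mu_{i'}$, starting from the given MMS partition and rearranging goods along the domination map $f$. But your concrete procedure is garbled in two places. First, since $f:B\to B_{i'}$, every $f(j)$ lies in $B_{i'}$; once you ``discard $B_{i'}$ entirely'' there is no surviving bundle $A_{p(j)}$ containing $f(j)$, so the step ``remove $f(j)$ from that bundle'' is vacuous. Second, your swap as written (remove $f(j)$, insert $j$) replaces a lower-index good by a higher-index one, which in an ordered instance \emph{lowers} the bundle's value, contradicting your next sentence.

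The paper fixes both issues by swapping \emph{before} discarding, on the full $n$-partition. Process the goods $g\in B$ from least to most valuable and exchange the positions of $g$ and $f(g)$. Because $f(g)\in B_{i'}$, each swap deposits $g$ into the $B_{i'}$-bundle; because later goods $g'$ and their images $f(g')$ have strictly smaller index than $g$, the deposited $g$ never moves again, and injectivity of $f$ ensures $f(g)$ is still in $B_{i'}$ when its turn comes. Every other bundle only ever trades some $g$ for $f(g)\le g$, so its value weakly increases. After all swaps the $B_{i'}$-bundle contains $B$; move any leftover goods of $B_{i'}\setminus B$ into an arbitrary other bundle (only raising its value), and the remaining $n-1$ bundles are the desired partition of $M\setminus B$. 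This is exactly the ordering argument your ``abstract fallback'' would need to be made precise; once you reverse the order of discard-then-swap to swap-then-discard, the chains you worry about resolve automatically.
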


\begin{proof}
    For any agent $i' \in N \setminus \{i\}$, we wish to show that her MMS is at
    least as high in the reduced instance as in the original instance. Since
    $B_{i'} \succeq B$, there exists an injective function $f_{i'} : B
    \rightarrow B_{i'}$ with $f_{i'}(g) \le g$ for $g \in B$. We will show that an
    MMS partition of $i'$ can be turned into a $n$-partition containing $B$ and
    $n - 1$ bundles valued at $\mu_{i'}$ or higher. Then, in the reduced
    instance, the MMS of $i'$ cannot be less than the value of the
    least valuable bundle among these $n - 1$ bundles, which has a value of
    at least $\mu_{i'}$. The conversion is done by performing the following
    steps on an MMS partition of $i'$ containing $B_{i'}$.
    \begin{enumerate}
        \item Go through the goods $g \in B$ from least to most valuable,
            exchanging the position of $g$ and $f_{i'}(g)$ in the
            partition.\label{step:domination-reduction-1}
        \item Move all goods in $B_{i'} \setminus B$ to any other bundle in the
            partition.\label{step:domination-reduction-2}
    \end{enumerate}
    \noindent Since $f_{i'}(g) \le g$, after exchanging the position of $g$ and
    $f_{i'}(g)$ in step \ref{step:domination-reduction-1}, $g$ will not move.
    Further, since $f_{i'}$ is injective, $f_{i'}(g)$ will not be moved before
    it is exchanged with $g$. Thus, since $f_{i'}(g) \in B_{i'}$ before the
    step, $B \subseteq B_{i'}$ after all the exchanges.  Additionally, after
    step \ref{step:domination-reduction-1} the value of any other bundle in the
    partition cannot have decreased, as $v_{i'}(g) \le v_{i'}(f_{i'}(g))$. As
    adding an item to a bundle does not decrease the value of the bundle, step
    \ref{step:domination-reduction-2} does not decrease the value of other
    bundles than $B_{i'}$. Thus, afterwards, $B_{i'} = B$ and the value
    of each other bundle remains at least $\mu_{i'}$.
\end{proof}

\begin{figure}
    \centering
    \begin{tikzpicture}[scale=0.5]
        \draw[step=1.0,black,thin] (0, 0) grid (14, 1);
        \draw[step=1.0,black,thin,yshift=0.5cm] (0, 1) grid (14, 2);
        \foreach \i in {1,...,14} {
            \node at (\i - 0.5, 2.9) {\footnotesize \i};
        }

        \foreach \i/\j in {3/6,7/7,8/11,11/13}{
            \draw (\j-0.5,0.5) edge[thick,-stealth](\i-0.5,2);
        }

        \node[inner sep=0pt] at (-1, 2) {$B\phantom{'}$};
        \node[inner sep=0pt] at (-1, 0.5) {$B'$};

        \begin{pgfonlayer}{bg}
            \foreach \j in {3,7,8,11,14} {
                \draw[fill=gray!75] (\j - 1, 1.5) rectangle (\j, 2.5);
            }
            \foreach \j in {6,7,11,13} {
                \draw[fill=gray!75] (\j - 1, 0) rectangle (\j, 1);
            }
        \end{pgfonlayer}
    \end{tikzpicture}
    \caption{A bundle $B$ dominating a bundle $B'$ in an instance with $14$
    goods. The arrows represent a possible function $f$ (out of the two possible
    functions).}
    \label{fig:domination-example}
\end{figure}
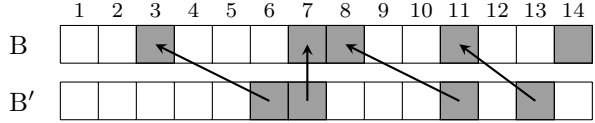

In order to find valid reductions through the domination ordering, we will
consider bundles that are of the same size $k \ge 2$.\footnote{Bundles of size 1
immediately induce a valid reduction.} When
two bundles of size $k$ share a subset of $k - 1$ goods, we know that one
dominates the other, as each bundle only contains one good in addition to the
shared subset.\footnote{The bundles may be equal. However, by definition they
dominate each other when equal.} With multiple bundles of size $k$ that all
share the same subset of $k - 1$ goods, at least one of the bundles is dominated
by all the other bundles. Thus, if for some $(k - 1)$-sized subset of goods $S
\subset M$, each agent has a $k$-sized bundle containing $S$ in one of her MMS
partitions, then there exists a valid reduction that removes one agent and $k$
goods.

\begin{lemma}\label{lem:k-sized-domination-reduction}
    Let $I = \langle N, M, V \rangle$ be an ordered instance, $k > 0$ an
    integer, and $S \subset M$ a subset of $k - 1$ goods. For each agent $i
    \in N$, let $B_i$ be a bundle in an MMS partition of $i$ such that $|B_i| =
    k$ and $S \subset B$. Then, there is an agent $i' \in N$ such that
    allocating $B_{i'}$ to $i'$ is a valid reduction.
\end{lemma}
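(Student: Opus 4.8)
The plan is to turn this into a direct application of \cref{lem:domination-reduction}. The key structural observation is that every bundle $B_i$ has size $k$ and contains the fixed $(k-1)$-element set $S$, so $B_i \setminus S$ consists of exactly one good, which I will call $g_i$; thus $B_i = S \cup \{g_i\}$ with $g_i \notin S$, and each $B_i$ is completely described by this extra good $g_i$.

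First I would check how domination behaves on these specially structured bundles: for agents $i$ and $i'$, the bundle $B_i$ dominates $B_{i'}$ whenever $g_i \le g_{i'}$. To see this, take the injection $f : B_{i'} \to B_i$ that is the identity on $S$ and sends $g_{i'}$ to $g_i$; this is well defined and injective precisely because $g_i, g_{i'} \notin S$, and $f(j) \le j$ holds trivially on $S$ and by assumption at $g_{i'}$, so $B_i \succeq B_{i'}$ in the sense of \cref{def:domination}. (If $g_i = g_{i'}$ the two bundles coincide, and equal bundles dominate each other by convention, so the claim still holds.) Next, because the instance is ordered, there is a well-defined ``worst'' good, so I would pick an agent $i'$ for which the index $g_{i'}$ is as large as possible, breaking ties arbitrarily. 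Then $g_i \le g_{i'}$ for every $i \in N$, hence $B_i \succeq B_{i'}$ for every $i \in N \setminus \{i'\}$. Since $B_{i'}$ is a bundle in an MMS partition of $i'$, we have $v_{i'}(B_{i'}) \ge \mu_{i'}$, so \cref{lem:domination-reduction} applied with $B = B_{i'}$ and $i = i'$ gives that allocating $B_{i'}$ to $i'$ is a valid reduction, as required.

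There is essentially no real obstacle here; the only points needing care are confirming that the extra good $g_i$ is genuinely outside $S$ (immediate from $|B_i| = |S| + 1$), and handling ties where two agents' extra goods coincide, which the convention that equal bundles dominate each other resolves cleanly. The degenerate case $k = 1$, where $S = \emptyset$ and each $B_i$ is a singleton, is subsumed by exactly the same argument (and also follows directly from \cref{lem:single-item-reduction}).
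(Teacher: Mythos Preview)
Your proposal is correct and follows essentially the same argument as the paper: pick the agent whose extra good $g_{i'}$ has the largest index, observe that every $B_i = S \cup \{g_i\}$ then dominates $B_{i'} = S \cup \{g_{i'}\}$, and apply \cref{lem:domination-reduction}. The paper compresses this into three lines by writing $g = \max\{g' : i \in N,\, g' \in B_i \setminus S\}$ directly, but the content is identical.
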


\begin{proof}
    Let $g = \max\{g' : i \in N, g' \in (B_i \setminus S)\}$. Then, for any $i
    \in N$, $B_i \succeq (S \cup \{g\})$ and there is $i' \in N$ such that $B_{i'}
    = S \cup \{g\}$. By \cref{lem:domination-reduction}, giving $B_{i'}$ to $i'$
    is a valid reduction.
\end{proof}

Making use of \cref{lem:k-sized-domination-reduction} requires an instance where
all agents share similar $k$-sized bundles in one of their MMS partitions---a
property that usually does not hold for arbitrary instances. However, for any
integer $c > 0$, \cref{lem:goods-at-mms,lem:mms-partition-layout} guarantee that
when $n > c$, all agents have MMS partitions in which any bundle of size greater
than one is a subset of the $2c$ worst goods.
Thus, as the number of agents increases, there will eventually be some $k \ge 2$
for which some set $S$ of $k - 1$ goods is shared between $k$-sized bundles in
the MMS partitions of multiple agents. When there are at least $c$ such agents,
the combination of \cref{lem:goods-at-mms,lem:k-sized-domination-reduction}
provides a way to create a valid reduction removing $n' \le n - c + 1$ agents
and $n' + k - 1$ goods. Simply allocate one of $1, 2, \dots, n - c$ to each of
the at most $n - c$ agents without an MMS partition containing a $k$-sized
bundle with subset $S$, and use the method of
\cref{lem:k-sized-domination-reduction} to allocate a $k$-sized bundle to one of
the remaining agents. This approach can be used in our inductive argument as
long as $n - n' \ge n_{c - k + 1}$. In other words, there must be at least
$\max(c, n_{c - k + 1} + 1)$ agents with a $k$-sized bundle in one of their MMS
partitions that has $S$ as a subset.

To obtain the bound in \cref{thr:existence-stronger}, we will, instead of using
$\max(c, n_{c - k + 1} + 1)$, show that if $n_{c'} \ge n_{c' - 1} + 1$ for $c' >
6$ and there are at least $c$ agents with a $k$-sized bundle in their MMS
partition, then we only need $\max(c - k + 1, n_{c - k + 1} + 1)$ agents with a
$k$-sized bundle sharing the same $(k - 1)$-sized subset of goods.  Our proof
relies on a result of \citet{Aigner-Horev:22} on envy-free
matchings.\footnote{Their result has previously been used in MMS approximation.}
In this setting, for a graph $G$ and set $X$ of vertices in $G$, $N_G(X)$
denotes the union of the open neighbourhood in $G$ of each vertex in $X$.

\begin{definition}
    A matching $M$ in a bipartite graph $G = (X \cup Y, E)$ is \emph{envy-free
    with regards to $X$} if no unmatched vertex in $X$ is adjacent in $G$ to a
    matched vertex in $Y$.
\end{definition}

\begin{theorem}[\citeauthor{Aigner-Horev:22}, \citeyear{Aigner-Horev:22}]\label{thr:envy-free-matching}
    Given a bipartite graph $G = (X \cup Y, E)$, there exists a non-empty
    envy-free matching with regards to $X$ if $|N_G(X)| \ge |X| \ge 1$.
\end{theorem}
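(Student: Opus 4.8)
The plan is to build the desired matching around a maximal subset of $X$ on which Hall's condition holds, and to invoke the hypothesis $|N_G(X)| \ge |X|$ only at the very end, to guarantee non-emptiness. First I would call a set $S \subseteq X$ a \emph{Hall set} if $|N_G(S')| \ge |S'|$ for every $S' \subseteq S$ (equivalently, if $S$ can be perfectly matched into $Y$), and let $X' \subseteq X$ be an inclusion-maximal Hall set. Such an $X'$ exists and is non-empty: since $|N_G(X)| \ge 1$, some vertex $x \in X$ has a neighbour, so $\{x\}$ is a Hall set, and by finiteness there is a maximal one.

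The structural heart of the argument is to understand the vertices outside $X'$. For $x \in X \setminus X'$, maximality forces $X' \cup \{x\}$ to fail Hall, so some $S \subseteq X' \cup \{x\}$ violates it; since every subset of $X'$ satisfies Hall, we must have $x \in S$, and writing $S' = S \cap X'$ the chain $|S'| \le |N_G(S')| \le |N_G(S' \cup \{x\})| \le |S'|$ (the last inequality because $|N_G(S' \cup \{x\})| < |S' \cup \{x\}| = |S'| + 1$) forces $|N_G(S')| = |S'|$ and $N_G(\{x\}) \subseteq N_G(S')$. Let $S'_x$ denote such a tight set and put $T = \bigcup_{x \in X \setminus X'} S'_x \subseteq X'$. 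I would then show $T$ is itself tight, i.e. $|N_G(T)| = |T|$: by submodularity of the neighbourhood function, for tight sets $A, B \subseteq X'$ one has $|N_G(A \cup B)| + |N_G(A \cap B)| \le |N_G(A)| + |N_G(B)| = |A| + |B| = |A \cup B| + |A \cap B|$, and since $A \cup B$ and $A \cap B$ lie in the Hall set $X'$, both of the resulting inequalities must be equalities, so $A \cup B$ is tight; an induction over the union then yields tightness of $T$. In particular $N_G(\{x\}) \subseteq N_G(T)$ for every $x \in (X \setminus X') \cup T$.

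Now I would construct the matching by matching $X' \setminus T$ perfectly into $Y \setminus N_G(T)$. This is possible by Hall's theorem, since for $S \subseteq X' \setminus T$ one has $N_G(S) \setminus N_G(T) = N_G(S \cup T) \setminus N_G(T)$, whence $|N_G(S) \setminus N_G(T)| = |N_G(S \cup T)| - |T| \ge |S \cup T| - |T| = |S|$. Let $M$ be the resulting matching. Its matched $Y$-vertices lie in $Y \setminus N_G(T)$, while every unmatched vertex of $X$ lies in $(X \setminus X') \cup T$ and therefore, by the previous paragraph, has all of its neighbours inside $N_G(T)$; hence no unmatched $X$-vertex is adjacent to a matched $Y$-vertex, so $M$ is envy-free with regards to $X$. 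Finally, $M$ is non-empty: if $X' \setminus T = \emptyset$ then $X' = T$ is tight, and since $N_G(X \setminus X') \subseteq N_G(T) = N_G(X')$ we get $N_G(X) = N_G(X')$, so $|X| \le |N_G(X)| = |X'| \le |X|$ and thus $X' = X$; but then $X \setminus X' = \emptyset$ gives $T = \emptyset$, contradicting $X' = T \neq \emptyset$.

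The step I expect to be the main obstacle is establishing that the union $T$ of the absorbing sets is tight: this needs the submodularity inequality for the neighbourhood function together with an induction over an arbitrary (rather than merely pairwise) union of tight sets. Getting the direction of the maximality argument right is the other delicate point — it is what ensures the unmatched vertices' neighbourhoods are confined to $N_G(T)$, which the matching avoids, rather than landing inside the set of matched vertices. Once $T$ is in hand, the construction of the matching and the non-emptiness check are routine.
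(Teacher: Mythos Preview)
The paper does not supply a proof of this statement: it is quoted as a result of Aigner-Horev and Segal-Halevi and invoked as a black box in the proof of \cref{lem:one-3+-good-bundle}. There is therefore nothing in the paper to compare your argument against.

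That said, your proof is correct. Taking an inclusion-maximal Hall set $X' \subseteq X$, showing that each $x \in X \setminus X'$ has its neighbourhood absorbed by a tight subset $S'_x \subseteq X'$, merging these tight sets into a single tight set $T$ via the submodularity of $S \mapsto |N_G(S)|$, and then perfectly matching $X' \setminus T$ into $Y \setminus N_G(T)$ is a standard route to this result; the envy-freeness check (unmatched $X$-vertices lie in $(X \setminus X') \cup T$ and hence have all neighbours in $N_G(T)$, which the matching avoids) and the non-emptiness check via the hypothesis $|N_G(X)| \ge |X|$ are both sound. The only place worth making explicit when you write it up formally is the edge case $X \setminus X' = \emptyset$: then $T$ is an empty union, so $T = \emptyset$ and $X' \setminus T = X' = X$ is already non-empty, which your contradiction argument handles but somewhat indirectly.
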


Using \cref{thr:envy-free-matching} and the assumptions described above, we show
that an MMS allocation exists if an agent has an MMS partition with at most one
bundle containing more than two goods.

\begin{lemma}\label{lem:one-3+-good-bundle}
    Let $I = \langle N, M, V \rangle$ be an ordered instance, with $m = n + c$
    goods for some $c > 0$ and assume that for $c > c' > 5$, there exists an
    integer $n_{c'} > 0$ such that all instances with $n' \ge n_{c'}$ agents and
    $m' = n' + c'$ goods have MMS allocations and $n_{c'} > n_{c' - 1}$ for $c'
    > 6$. Then, if $n > n_{c - 1}$ and an agent $i \in N$ has an MMS partition
    $A$ with at least $n - 1$ bundles of size less than three, an MMS allocation
    exists.
\end{lemma}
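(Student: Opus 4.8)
The plan is to separate into two cases depending on the structure of the MMS partition $A$ of agent $i$. By hypothesis $A$ has at most one bundle of size $\ge 3$. If $A$ has \emph{no} bundle of size $\ge 3$, then every bundle has size at most two; combined with $m = n + c$ and $n > c$ (which holds since $n > n_{c-1} \ge c$), there are exactly $c$ bundles of size two and $n - c$ singletons. Then by Lemma~\ref{lem:goods-at-mms} the goods $1, \dots, n-c$ are each worth at least $\mu_{i'}$ to every agent $i'$, and by Lemma~\ref{lem:mms-partition-layout} every agent has an MMS partition in which $1,\dots,n-c$ are singletons, so every agent has an MMS partition whose size-two bundles live inside the $2c$ worst goods. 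The set of two-element subsets of a fixed $2c$-element ground set has size $\binom{2c}{2}$, a bound independent of $n$; so once $n - c$ is large enough (comparable to $c$ but in fact only needing $c$ agents sharing a common pair-bundle, via a pigeonhole over the at most $\binom{2c}{2}$ possible pairs together with the $n-c+1$ "spare" agents) we can find $c$ agents whose MMS partitions contain two-element bundles sharing a common singleton $S$ (here $k = 2$, $|S| = k - 1 = 1$), hand the remaining $\le n - c$ agents the goods $1, \dots, n-c$ one each, and apply Lemma~\ref{lem:k-sized-domination-reduction} to reduce by one agent and two goods, landing in an instance with $c - 1$ extra goods and $n - 1 > n_{c-1} - 1$, i.e.\ at least $n_{c-1}$ agents — done by the inductive hypothesis.

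For the harder case, suppose $A$ has exactly one bundle $B^\star$ with $|B^\star| \ge 3$; write $k = |B^\star| \ge 3$, so $A$ consists of $B^\star$, some two-element bundles, and singletons, and as before all of $A$'s non-singleton content lies in the $2c$ worst goods. Again each of the other agents has an MMS partition with $1, \dots, n-c$ as singletons. The idea is to build a bipartite "envy-free matching" instance: let $X$ be a set of agents we wish to satisfy simultaneously with pairwise-dominating size-two bundles, and $Y$ the candidate two-element subsets of the $2c$ worst goods, with an edge when that agent has that pair in one of her MMS partitions. The point of invoking Theorem~\ref{thr:envy-free-matching} (Aigner–Horev et al.) rather than a crude pigeonhole is that an envy-free matching lets us match some agents to distinct pairs while guaranteeing that every \emph{unmatched} agent has \emph{no} MMS-partition pair at all among the used goods — forcing those unmatched agents' MMS partitions to concentrate their non-singleton mass into a single larger bundle, which we can then dominate. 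Concretely, I expect the argument to count: if at least $c$ agents have a size-two bundle available (the "$c$ agents with a $k$-sized bundle" hypothesis plays the role of ensuring the neighbourhood is large), then either the envy-free matching is large enough that $c - k + 1$ of the matched agents share a common good in their matched pairs — giving a $(k-1)$-subset-sharing family of size-two bundles as in the no-big-bundle case, hence a reduction to $c' = c - 1 \ge n_{c-1}$ territory — or the matching is small, leaving many unmatched agents whose MMS partitions each contain a single bundle of size $\ge 3$ dominated (via Definition~\ref{def:domination}) by $B^\star$ or by $B^\star$ with a few goods removed, so that Lemma~\ref{lem:domination-reduction} applies to reduce by $B^\star$ (removing one agent and $k$ goods, landing at $c' = c - k + 1$ extra goods, with $n - 1 \ge \max(c-k+1, n_{c-k+1} + 1) \ge n_{c-k+1}$ by the assumed monotonicity $n_{c'} > n_{c'-1}$ chained down from $c - 1$ to $c - k + 1$).

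The main obstacle, and the place the proof will need real care, is the envy-free matching step: setting up $X$, $Y$, and the edge relation so that the neighbourhood condition $|N_G(X)| \ge |X|$ is actually met (this is where the "at least $c$ agents with a $k$-sized bundle" hypothesis and the bound $\binom{2c}{2}$ on the number of usable pairs get combined), and then extracting from a non-empty envy-free matching \emph{either} a $(k-1)$-shared subfamily of the required size $\max(c - k + 1, n_{c-k+1}+1)$ \emph{or} a large enough family of unmatched agents whose MMS partitions are all dominated by a common bundle. Getting the two thresholds to add up correctly — so that in every branch the reduced instance has at least $n_{c'}$ agents for the relevant $c' \in \{c - k + 1, \dots, c - 1\}$ — is where the monotonicity hypothesis $n_{c'} > n_{c'-1}$ for $c' > 6$ and the base-case cutoff $c' > 5$ are used, and it is the bookkeeping here, rather than any single clever trick, that is delicate.
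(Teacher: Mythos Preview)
Your plan has a genuine gap: it never really uses the specific MMS partition $A$ of the distinguished agent $i$. Instead you try to argue by looking at \emph{every} agent's own MMS partition, pigeonholing over the at most $\binom{2c}{2}$ possible two-element subsets of $\{n-c+1,\dots,n+c\}$ to find many agents with overlapping pairs. But the lemma gives you only $n > n_{c-1}$, which for small $c$ (say $c=7$, where $n_{6}=4$) is far too weak to force any such collision; and more fundamentally, the hypothesis that $A$ has at most one bundle of size $\ge 3$ concerns only agent $i$, so the other agents need not have any two-element bundles at all in their MMS partitions. Your Case~1 conclusion ``landing in an instance with $c-1$ extra goods and $n-1$ agents'' also double-counts: handing out $n-c$ singletons and then one pair removes $n-c+1$ agents, leaving $c-1$ agents, not $n-1$, and $c-1 \ge n_{c-1}$ is not given. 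In Case~2 your bipartite graph has $Y$ equal to all two-element subsets of the $2c$ worst goods, and you give no argument for the hypothesis $|N_G(X)|\ge |X|$ of Theorem~\ref{thr:envy-free-matching}; the subsequent dichotomy (``either the matching is large enough \ldots\ or small, leaving many unmatched agents whose MMS partitions are dominated by $B^\star$'') is speculative and does not follow from envy-freeness.

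The paper's proof works entirely with the single partition $A$. One forms the bipartite graph between \emph{all} agents and the \emph{bundles of $A$}, with an edge when the agent values that bundle at least her MMS. If a perfect matching exists, it is an MMS allocation. Otherwise Hall's condition fails on some $N'\subseteq N\setminus\{i\}$; passing to the complement, one gets a subgraph on $N\setminus N'$ and the unreached size-$\le 2$ bundles $A'\setminus A''$ in which $|N\setminus N'|\le |A'\setminus A''|$ and $i\in N\setminus N'$, so Theorem~\ref{thr:envy-free-matching} yields a non-empty envy-free matching using only size-one and size-two bundles. This matching is turned into a valid reduction removing $x$ agents and $2x$ goods (size-two bundles via Lemma~\ref{lem:2-item-reduction}, size-one bundles padded with the worst remaining good via Lemma~\ref{lem:2-item-reduction-from-1}); the resulting instance has $n-x\ge n_{c-x}$ agents by the monotonicity assumption. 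The key idea you are missing is to treat $A$ itself as the candidate allocation and use the envy-free matching to carve off a reducible piece of it, rather than searching for coincidences across different agents' partitions.
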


\begin{proof-sketch}
    If $\mu_i = 0$, the result follows from \cref{lem:pigeonhole-reduction-2}. If
    $\mu_i > 0$, each bundle in the MMS partition, except for potentially one of
    size three or more, has size one or two. We wish to show that unless there
    exists a perfect matching of agents to bundles they value at MMS or more in
    the MMS partition, there instead exists a non-empty envy-free matching that
    only contains bundles of size one or two.  Given such an envy-free matching,
    a valid reduction that removes $x$ agents and $2x$ goods can be found by
    allocating all bundles of size two in the matching before applying
    \cref{lem:2-item-reduction-from-1} to each bundle of size one.

    To find a non-empty envy-free matching, we exploit that Hall's marriage
    theorem allows us to create a subgraph with fewer agents than bundles, where
    an envy-free matching in the subgraph is envy-free in the original graph.
    Agent $i$ will be present in the subgraph, as bundles are from $i$'s MMS
    partition. Furthermore, there are fewer agents than bundles in the subgraph.
    Thus, we can additionally remove the bundle of size three or larger, unless
    already removed, while \cref{thr:envy-free-matching} still guarantees a
    non-empty envy-free matching.
\end{proof-sketch}

Using \cref{lem:one-3+-good-bundle} we can improve our lower bound on number of
goods valued at MMS or higher by an agent $i$ based on the size of the bundles
in their MMS partitions.

\begin{lemma}\label{lem:goods-at-MMS}
    Let $I = \langle N, M, V \rangle$ be an ordered instance, with $m = n + c$
    goods for some $c > 0$ and assume that for any $c' > 5$, there exists an
    integer $n_{c'} > 0$ such that all instances with $n' \ge n_{c'}$ agents and
    $m' = n' + c'$ goods have MMS allocations and $n_{c'} > n_{c' - 1}$ for $c'
    > 6$. If $n > n_{c - 1}$ and agent $i \in N$ has an MMS partition $A$ with a
    bundle of size $k > 2$, then either $v_i(n - c + k - 1) \ge \mu_i$ or an MMS
    allocation exists.
\end{lemma}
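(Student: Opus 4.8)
The plan is to combine the structural facts about MMS partitions (Lemmas \ref{lem:goods-at-mms} and \ref{lem:mms-partition-layout}) with Lemma \ref{lem:one-3+-good-bundle}. Suppose agent $i$ has an MMS partition $A$ containing a bundle $B$ of size $k > 2$, and suppose toward the goal that $v_i(n - c + k - 1) < \mu_i$; I want to conclude that an MMS allocation exists. First I would argue that $\mu_i > 0$, since otherwise Lemma \ref{lem:pigeonhole-reduction-2} (via Lemma \ref{lem:one-3+-good-bundle}, or directly) already gives an MMS allocation. With $\mu_i > 0$, let $\ell$ be the number of goods in $M$ valued at $\mu_i$ or higher by $i$; the assumption $v_i(n - c + k - 1) < \mu_i$ together with the ordering says $\ell \le n - c + k - 2$, i.e. $\ell < n - c + k - 1$.

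Next I would use a counting argument on the sizes of the bundles of $A$. By Lemma \ref{lem:mms-partition-layout}, $i$ has an MMS partition $A^*$ in which the goods $1, 2, \dots, \min(n-1,\ell)$ each form a singleton bundle; in particular $A^*$ has at least $\min(n-1,\ell)$ singletons. The remaining $m - \min(n-1,\ell) = n + c - \min(n-1,\ell)$ goods are distributed among the remaining $n - \min(n-1,\ell)$ bundles, each of which has size at least one (as $\mu_i > 0$). The key step is to bound how many bundles of $A^*$ can have size three or more. Since $\ell < n - c + k - 1 \le n - 1$ for the relevant range (I would check $k \le c+1$ here, which holds because a size-$k$ bundle together with the $n-c$ singletons fits in $n+c$ goods only if $k \le 2c$, and actually the worst goods argument pins it down more tightly — this is the calculation I would grind through carefully), we have $\min(n-1,\ell) = \ell$, and the number of non-singleton bundles is at most $(n+c-\ell) - (n-\ell) = c$; more precisely the "excess" $\sum_{|A^*_j|\ge 2}(|A^*_j| - 1) = c$. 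A bundle of size $3$ or more contributes at least $2$ to this excess while each size-$2$ bundle contributes $1$, so the number of bundles of size $\ge 3$ is at most $\lfloor c/2 \rfloor$ — but I actually want to show it is exactly one (or zero) in order to invoke Lemma \ref{lem:one-3+-good-bundle}, so I need the sharper bound coming from $\ell$ being small.

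The main obstacle, and the crux of the argument, is turning "$v_i(n-c+k-1) < \mu_i$" into "$A$ (or some MMS partition of $i$) has at most one bundle of size $\ge 3$". The intuition is: the goods $\ell+1, \dots, n+c$ are each worth less than $\mu_i$, so none can sit alone; they must be padded into multi-good bundles, and there are only $\ell$ goods worth $\ge \mu_i$ available to anchor bundles. Counting the total excess $c$ against the number of "heavy" goods forces the heavy bundles to be concentrated: if $v_i(n-c+k-1) < \mu_i$ then $\ell \le n-c+k-2$, so the $n+c-\ell \ge 2c-k+2$ light goods fill bundles whose excess is exactly $c$, which combined with the size-$k$ bundle already accounting for excess $k-1$ leaves excess only $c-(k-1)$ for all other non-singleton bundles — and I would argue this, plus the fact that each remaining non-singleton bundle needs at least one anchor among the $\ell - (\text{singletons}) $ available heavy goods, forces all other non-singleton bundles to have size exactly two, so $A^*$ (suitably chosen) has exactly one bundle of size $\ge 3$, namely one of size $k$. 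Then $A^*$ has at least $n-1$ bundles of size less than three, $n > n_{c-1}$ by hypothesis, and Lemma \ref{lem:one-3+-good-bundle} applies to give an MMS allocation. I would close by remarking that the case $k \le 2$ is vacuous and that the hypotheses on $n_{c'}$ are exactly those needed to invoke Lemma \ref{lem:one-3+-good-bundle}.
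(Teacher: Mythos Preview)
Your overall plan---count bundle sizes to force at most one bundle of size $\ge 3$ and then invoke \cref{lem:one-3+-good-bundle}---is exactly the paper's, but the execution has a real gap. You introduce a second MMS partition $A^*$ via \cref{lem:mms-partition-layout} to control the number of singletons, and then in the same count you use ``the size-$k$ bundle already accounting for excess $k-1$''. That size-$k$ bundle lives in $A$, not in $A^*$; once you pass to $A^*$ you have no guarantee that any bundle has size $k$. Working purely in $A^*$ the counting only gives at most $k-2$ bundles of size $\ge 3$, which for $k\ge 4$ is not good enough (e.g., for $k=4$ one can have two size-$3$ bundles in $A^*$).

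The fix, and what the paper does, is to stay in $A$ throughout. You do not need \cref{lem:mms-partition-layout} here at all: any singleton $\{g\}$ in $A$ must satisfy $v_{ig}\ge \mu_i$, so $A$ itself has at most $\ell \le n-c+k-2$ singletons. Hence $A$ has at least $c-k+2$ non-singletons, one of which is the given size-$k$ bundle; the remaining $\ge c-k+1$ non-singletons share total excess $c-(k-1)=c-k+1$, forcing each to have size exactly $2$. Your ``anchor'' claim (that every non-singleton bundle must contain a good of value $\ge\mu_i$) is false in general and should be dropped; the pure count above is all that is needed. Finally, the $\mu_i=0$ case is a one-liner: then $v_i(n-c+k-1)\ge 0=\mu_i$, so the first alternative already holds and there is nothing to prove.
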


\begin{proof}
    If $\mu_i = 0$, then $v_i(n - c + k - 1) \ge \mu_i$. Now, assume that $\mu_i
    > 0$, and as a result $k \le c + 1$. If $v_i(n - c + k - 1) < \mu_i$, then
    at most $n - c + k - 2$ bundles in $A$ have size one and no bundle is empty.
    Of the remaining bundles, there is one of size $k$ and the $c - k + 1$
    others contain at least two goods each. These bundles of size at least two,
    contain the remaining $n + c - (n - c + k - 2) - k = 2(c - k + 1)$ goods.
    Thus, each of these bundles contains exactly two goods, and $A$ contains a
    single bundle of cardinality greater than $2$. Consequently, an MMS
    allocation exists by \cref{lem:one-3+-good-bundle}.
\end{proof}

As is evident from \cref{lem:goods-at-MMS}, if there are $c$ agents with a
$k$-sized bundle in their MMS partition, we can give $k - 1$ of them a bundle
worth MMS or higher by allocating them each one of the goods $n - c + 1, n - c +
2, \dots, n - c + k - 1$. For our domination-based reduction, as long as there
are $c$ agents in the instance with a $k$-sized bundle in their MMS partition,
then we only need $\max(c - k + 1, n_{c - k + 1} + 1)$ agents with $k$-sized
bundles with a shared $(k - 1)$-sized subset. We are now ready to prove
\cref{thr:existence-stronger}.

\repeattheorem{\ref{thr:existence-stronger}}{\existencegoodscontent}

\begin{proof}
    For $c \le 5$, \cref{lem:existance-feige} guarantees that an MMS allocation
    always exists for any number of agents. Further, \cref{thr:c=6}, which is
    proven without \cref{thr:existence-stronger}, guarantees that an MMS
    allocation always exists when $c = 6$ and $n \ge 4 < \formulaaa{6}$. Thus,
    there exists an $n_c \le \formulaaa{c}$ for $c < 7$ and we only need to
    consider cases where $c \ge 7$.

    We wish to show that for every integer $c \ge 7$, all instances with $n \ge
    \formulaaa{c}$ agents and $m \le n + c$ goods have an MMS allocation. To
    obtain this result, we will use induction with $c < 7$ as base case.
    For a given value of $c \ge 7$, assume that for every integer $c'$ with $6 <
    c' < c$, an MMS allocation exists when there are $n' \ge n_{c'} =
    \formulaaa{c'}$ agents and at most $n' + c'$ goods. Note that under this
    assumption we know that $\formulaa{c' - 1} < \formulaaa{c'}$ for all values
    of $c'$. Hence, we are able to use the results of
    \cref{lem:one-3+-good-bundle,lem:goods-at-MMS} and only show existence for
    instances with $n \ge \formulaaa{c}$ and $m = n + c$.

    Let $I = \langle N, M, V\rangle$ be an ordered instance of $n$ agents and $m
    = n + c$ goods, where $n \ge \formula{c}$. We will show that under the
    inductive assumption, $I$ has an MMS allocation. Let $A_I(i)$ be an MMS
    partition of agent $i \in N$ of the type described by
    \cref{lem:mms-partition-layout}, maximizing the number of bundles of
    cardinality one. To show that $I$ has an MMS allocation, we will consider
    domination between particularly bad bundles in $A_I(i)$ of different agents.
    Let $B_I(i)$ be a bundle in $A_I(i)$ in which the best good $g$ is such that
    $n \le g$. Observe that $B_I(i) \subseteq \{n, n + 1, \dots, n + c\}$. Thus,
    if $|B_I(i)| = k$ for some integer $k$, $B_I(i)$ is one of ${c + 1 \choose
    k}$ possible $k$-sized subsets of $\{n, n + 1, \dots, n + c\}$.

    Before proceeding, we will deal with some special cases, to simplify and
    tighten the further analysis. If for any agent $i \in N$ it holds that
    $\mu_i = 0$ or $|B_I(i)| \le 2$, then $v_i(\{n, n - 1\}) \ge \mu_i$ and an
    MMS allocation exists by \cref{lem:pigeonhole-reduction-2}. If $|B_I(i)| > c
    - 1$, then an MMS allocation exists by \cref{lem:one-3+-good-bundle}.
    Furthermore, if $\mu_i > 0$ and $|B_I(i)| = c - 1$, then either an MMS
    allocation exists by \cref{lem:one-3+-good-bundle} or $A_I(i)$ contains $n -
    2$ bundles of size one and $v_i(n - 2) \ge \mu_i$. If $v_i(n - 2) \ge
    \mu_i$, there could exist a subset $N' \subset N$ of $n - 2$ agents such
    that removing $N'$ and $\{1, 2, \dots, n - 2\}$ forms a valid reduction.
    Otherwise, there is a non-empty subset $N'' \subset N$ of agents and an
    equally-sized subset $M'' \subset M$ of at most $c$ goods such that no agent
    in $N \setminus N''$ values any good in $M''$ at MMS or higher and there
    exists a perfect matching between the agents in $N''$ and goods they value
    at MMS or higher in $M''$. The method from \cref{lem:one-3+-good-bundle} can
    be used to extend the perfect matching to a valid reduction with $|N''|$
    agents and $2|N''|$ goods. Thus, an MMS allocation exists if $|B_I(i)| = c -
    1$ for any $i \in N$.

    We can now assume that $2 < |B_I(i)| < c - 1$ and $\mu_i > 0$ for all $i \in
    N$. We wish to determine the number of agents required such that for at
    least one $k \in \{3, 4, \dots, c - 2\}$, there must be at least $\max(c -
    k + 1, n_{c - k + 1} + 1)$ agents with $|B_I(i)| = k$ and where the bundles
    $B_I(i)$ share a $(k - 1)$-sized subset of goods. Since $B_I(i) \subset \{n,
    n + 1, \dots, n + c\}$ and any bundle of size $k$ contains $k$ subsets of
    size $k - 1$, if there for a $k \in \{3, 4, \dots, c - 2\}$ is at least
    \begin{equation}\label{eq:require-one-k}
        1 + \left(\frac{{c \choose k - 1}}{k} + \frac{{c \choose k - 2}}{k -
        1}\right)\max(c - k, n_{c - k + 1})
    \end{equation}
    agents for which $|B_I(i)| = k$, then there are at least $\max(c - k + 1,
    n_{c - k + 1} + 1)$ bundles in $(B_I(i) : i \in N, |B_I(i)| = k)$ that share
    the same $(k - 1)$-sized subset of goods.\footnote{The parenthesized term in
    the equation is the number of distinct ($k - 1$)-sized subsets of $\{n, n +
    1, \dots, n + c\}$, divided by the number of distinct ($k - 1)$-sized
    subsets of a single $k$-sized bundle and separated by if they contain good
    $n$ or not.} Combining \cref{eq:require-one-k} for all possible $k$, we get
    that when there are
    \begin{equation}\label{eq:required-all-k}
        1 + \sum_{k = 3}^{c - 2}\left(\frac{{c \choose k - 1}}{k} + \frac{{c
        \choose k - 2}}{k - 1}\right)\max(c - k, n_{c - k + 1})
    \end{equation}
    agents, there is some $3 \le k \le c - 2$ for which there are at least
    $\max(c - k + 1, n_{c - k + 1} + 1)$ agents with $|B_I(i)| = k$, where the
    $B_I(i)$ share the same $(k - 1)$-sized subset.

    We wish to show that \cref{eq:required-all-k} is bounded from above by
    $\formula{c}$. In order to prove the bound, we make the following
    observations. Since $n_{c'} = \formulaaa{c'}$ for $c > c' > 0$, when $k < c
    - 2$ we have $\max(c - k, n_{c - k + 1}) = n_{c - k + 1}$. Also, since $c
    \ge 7$ we can use \cref{lem:existance-feige} to show that:
    \begin{equation}\label{eq:small-c-prime}
    \begin{aligned}
        2 &+ \sum_{k = c - 4}^{c - 2}\left(\frac{{c \choose k - 1}}{k} + \frac{{c
        \choose k - 2}}{k - 1}\right)\max(c - k, n_{c - k + 1}) \\
        &< \sum_{k = c -
        4}^{c - 2}\left(\frac{{c \choose k - 1}}{k} + \frac{{c
        \choose k - 2}}{k - 1}\right)\formulaa{c - k + 1}
    \end{aligned}
    \end{equation}
    For any $k \in \{3, 4, \dots, c - 2\}$, it holds that
    \[
        \left(\frac{{c \choose k - 1}}{k} + \frac{{c \choose k - 2}}{k -
        1}\right)\formulaa{c - k + 1} \ge c
    \]
    Combining the observations with \cref{eq:required-all-k}, we get that if
    there are at least
    \begin{equation}\label{eq:required-agents}
        -1 + \sum_{k = 3}^{c - 2}\left(\frac{{c \choose k - 1}}{k} + \frac{{c \choose
        k - 2}}{k - 1}\right)\formulaa{c - k + 1}
    \end{equation}
    agents in $I$, an MMS allocation must exist, since there is some $k \in \{3,
    4, \dots, c - 2\}$ for which there are $c$ or more agents with $|B_I(i)| =
    k$ and at least $\max(c - k + 1, n_{c - k + 1} + 1)$ of them have the same
    $(k - 1)$-sized subset of $B_I(i)$.  Thus, we must show that
    \cref{eq:required-agents} is less than or equal to our bound
    $\formulaaa{c}$. We have that for $\alpha > 0$:

    \begin{equation}\label{eq:first}
    \begin{aligned}
        \sum_{k = 3}^{c - 2}&\left(\frac{{c \choose k - 1}}{k} + \frac{{c
        \choose k - 2}}{k - 1} \right)\lfloor \alpha^{c - k + 1} (c - k +
        1)!\rfloor \\
        &\le \alpha^c \cdot c!\sum_{k = 3}^{c - 2}\left(\frac{\alpha^{-k + 1}}{k!} +
        \frac{\alpha^{-k + 1}}{k!(c - k + 1)}\right)
    \end{aligned}
    \end{equation}
    Using the Maclaurin series $e^y = \sum_{j = 0}^\infty y^j/(j!)$, we get that
    \begin{align}
        \sum_{k = 3}^{c - 2}&\left(\frac{\alpha^{-k +1}}{k!} +
        \frac{\alpha^{-k + 1}}{k!(c - k + 1)}\right) \\
        &\le \frac{1}{12 \cdot \alpha^2} + (\alpha + \frac{1}{4})\sum_{k =
        3}^{\infty} \frac{\alpha^{-k}}{k!} \\
        &= \frac{1}{12 \cdot \alpha^2} + (\alpha +
        \frac{1}{4})\left( e^{\frac{1}{\alpha}} - \frac{1}{2\cdot \alpha^2} -
        \frac{1}{\alpha} - 1 \right)\label{eq:second}
    \end{align}
    \Cref{eq:second} is equal to $1$ if $\alpha = 0.65964118 \dots$, and less
    than $1$ if $\alpha$ is larger.\footnote{The exact value $\alpha$ for which
    \cref{eq:second} is equal to $1$ can be used in
    \cref{thr:existence-stronger} instead of the rounded value $\constant$.}
    Thus, as a result of \cref{eq:first}, we know from \cref{eq:required-agents}
    that the number of required agents is less than or equal to $-1 +
    \baseformulaaa{c} < \formulaaa{c}$. Consequently, $I$ has an MMS allocation
    by our inductive hypothesis.
\end{proof}

\section{Improved Bounds for Small Constants}\label{sec:special-cases}

In the previous section, we saw that for any integer constant $c > 0$, there
exists a, rather large, number $n_c$ such that all instances with $n \ge n_c$
agents and no more than $n + c$ goods have MMS allocations. There exists some
slack in the calculations of the limit, especially for smaller values of $c$.
For example, the constant $2$ in \cref{eq:small-c-prime} used to mitigate the
floor function and $1\;+$ part in \cref{eq:required-all-k}, can be replaced by a
somewhat larger constant.  Moreover, while hard to make use of in the general
case, there exist additional, unused properties and interactions between the MMS
partitions of different agents. As a result, it is possible to, on a
case-by-case basis, show better bounds for small constants by analyzing the
possible structures of MMS partitions and their interactions for specific values
of $c$. We state the two following results for $c = 6$ and $c = 7$. Both proofs
rely on an exhaustive analysis of possible MMS partition structure combinations,
and are given in the appendix.

\repeattheorem{\ref{thr:c=6}}{\csixtheoremcontent}

\repeattheorem{\ref{thr:c=7}}{\cseventheoremcontent}

\section{Fair Allocation of Chores}\label{sec:chores}

So far we have only considered instances in which the items are \emph{goods}. In
this section, we show that a similar result to the one for goods in
\cref{thr:existence-stronger} exists for \emph{chores}. The resulting
bounds for $n_c$ are somewhat worse for chores due to minor differences in the
way that valid reductions can be constructed. The main difference is the lack of
a result equivalent to \cref{lem:pigeonhole-reduction-2}. In practice, this
means that while we for goods could ignore bundles of cardinality two in our
domination-based counting argument, we must include bundles of cardinality two
for chores. Fortunately, it is possible to show that the bundles of cardinality
two that are of interest to us are all the same bundle. Thus, the number of
agents with a bundle of cardinality two required to find a reduction is
relatively small.

To simplify notation and make the proofs for chores similar to those for goods,
we use a slightly different definition for ordered instances in this section.
The only difference is that the numbering of the items changes---while item $1$
was the best good, it is now the worst chore. In other words, we wish to
maintain the same order of absolute value for the items.

\begin{definition}
    Instance $I = \langle N, M, V \rangle$ is said to be \emph{ordered} if
    $v_{ij} \le v_{i(j + 1)}$ for all $i \in N$ and $1 \le j < |M|$.
\end{definition}

We now show that if an agent has a bundle of cardinality two in her MMS
partition, she also has a similar---both in structure and distribution of
chores---MMS partition in which the bundle of cardinality two is $\{n, n +
1\}$.

\begin{lemma}\label{lem:size-2-chores}
    Let $I = \langle N, M, V \rangle$ be an ordered instance for chores, and $i
    \in N$ an agent with an MMS partition $A$ containing a bundle $B$ with $|B|
    = 2$, $B \cap \{1, 2, \dots, n - 1\} = \emptyset$. Then $i$ has an MMS
    partition $A'$ such that (i) $|A_j| = |A'_j|$ for all $j \in N$, (ii) $\{n,
    n + 1\} \in A'$, and (iii) the position of the chores $1, 2, \dots, n - 1$
    is the same in $A$ and $A'$.
\end{lemma}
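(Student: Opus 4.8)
The plan is to handle the trivial case and then split the work into two parts: a \emph{value bound} and an explicit \emph{rearrangement}. If $B=\{n,n+1\}$ we may take $A'=A$, so assume $B=\{g_1,g_2\}$ with $n\le g_1<g_2$ and $(g_1,g_2)\neq(n,n+1)$; note this already forces $n+1\notin B$, since $g_1\ge n$ together with $n+1\in B$ would give $B=\{n,n+1\}$. We must show (1) that $v_i(\{n,n+1\})\ge\mu_i$, so that a partition having $\{n,n+1\}$ as a bundle can be an MMS partition at all, and (2) that $A$ can be turned into the required $A'$ by relocating only the four chores $n,n+1,g_1,g_2$.

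For (1) I would argue by counting. Since $(g_1,g_2)\neq(n,n+1)$ and both elements of $B$ are at least $n$, at most one element of $B$ lies in $\{1,2,\dots,n+1\}$, so at least $n$ of the chores $1,\dots,n+1$ are distributed over the $n-1$ bundles of $A$ other than $B$ (the case $n=1$ is vacuous, as then $B$ is the only bundle and equals $\{1,2\}=\{n,n+1\}$). By the pigeonhole principle some such bundle $C$ contains two of them, say $c_1<c_2$, both at most $n+1$. As $c_1\le n$ and $c_2\le n+1$ we have $v_{ic_1}\le v_{in}$ and $v_{ic_2}\le v_{i(n+1)}$, hence by additivity and non-positivity of $v_i$, $v_i(\{n,n+1\})=v_{in}+v_{i(n+1)}\ge v_{ic_1}+v_{ic_2}\ge v_i(C)\ge\mu_i$, the last inequality because $C$ is a bundle of the MMS partition $A$. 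This is precisely the chores analogue of the domination argument of \cref{lem:domination-reduction}: $\{n,n+1\}$ ``dominates'' $C$.

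For (2) I would distinguish two cases according to whether $n\in B$. If $g_1=n$ (so $g_2\ge n+2$), let $A_q$ be the bundle containing $n+1$ (its index is not that of $B$, since $n+1\notin B$); replace $B$ by $\{n,n+1\}$ and $A_q$ by $(A_q\setminus\{n+1\})\cup\{g_2\}$. If $g_1\ge n+1$ (so neither $n$ nor $n+1$ is in $B$ and $g_2\ge n+2$); replace $B$ by $\{n,n+1\}$, put $g_1$ into the bundle currently containing $n$ and $g_2$ into the bundle currently containing $n+1$ (if these coincide, put both there). In each case the affected bundles keep their cardinalities, since single chores are traded for single chores; the chores $1,\dots,n-1$ are not moved, since every chore added to or removed from a bundle lies in $\{n,n+1,g_1,g_2\}$ and is therefore at least $n$; and every affected bundle other than the one that becomes $\{n,n+1\}$ keeps or increases its value, because it only ever loses $n$ or $n+1$ in exchange for the no-worse chore $g_1$ or $g_2$ (using $g_1\ge n$, $g_2\ge n+1$, and monotonicity of $v_i$ for chores). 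Together with $v_i(\{n,n+1\})\ge\mu_i$ from part (1), the resulting partition $A'$ is an MMS partition satisfying (i)--(iii).

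The main obstacle is part (1): naively swapping single chores to pull $n$ and $n+1$ into the size-two bundle makes that bundle worse and can break the MMS guarantee, so the pigeonhole-plus-domination step is genuinely needed to certify that $\{n,n+1\}$ is still valued at least $\mu_i$. Everything else is bookkeeping about cardinalities and about which chores are relocated.
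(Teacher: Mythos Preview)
Your overall approach is the paper's: establish $v_i(\{n,n+1\}) \ge \mu_i$ by a pigeonhole argument on the chores $1,\dots,n+1$, then rearrange $A$ so that $B$ becomes $\{n,n+1\}$. Part (1) is correct. There is, however, a slip in the case split for part (2). Your claim that ``$g_1 \ge n$ together with $n+1 \in B$ would give $B = \{n,n+1\}$'' is false: take $g_1 = n+1$ and $g_2 \ge n+2$. Consequently the parenthetical in your Case~2 (``neither $n$ nor $n+1$ is in $B$'') is wrong when $g_1 = n+1$, and the procedure you describe there---``put $g_2$ into the bundle currently containing $n+1$''---is ill-defined in that sub-case, since the bundle currently containing $n+1$ is $B$ itself.

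The repair is immediate: when $B = \{n+1, g_2\}$ with $g_2 \ge n+2$, simply swap $g_2$ with $n$. The paper avoids the case split altogether by phrasing the transformation uniformly as ``swap $x$ with $n$ and $y$ with $n+1$'' (where $B = \{x,y\}$, $x < y$); since $x \ge n$ and $y \ge n+1$, every bundle other than $B$ trades a chore for one of weakly higher index and hence weakly higher value, so no bundle value drops below $\mu_i$. With the missing sub-case patched, your argument is correct and matches the paper's.
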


\begin{proof}
    Assume that $B = \{x, y\}$, where $x < y$. Let $A'$ be the allocation
    equivalent to $A$, except for that $x$ and $y$ have changed place with,
    respectively, $n$ and $n + 1$. We wish to show that $A'$ is an MMS partition
    and satisfies (i), (ii) and (iii). In any MMS partition, there must be at
    least one bundle $B'$ with $|B' \cap \{1, 2, \dots, n + 1\}| \ge 2$.
    Thus, $v_i(\{n, n + 1\}) \ge v_i(B') \ge \mu_i$. Since $n \le x$ and
    $n + 1 \le y$, the bundles that contained $n$ and $n + 1$ are no worse after
    the swap and $A'$ is an MMS partition of $i$.

    Since the only difference between $A$ and $A'$ is two swaps of chores, and
    $\{n, n + 1, x, y\} \cap \{1, 2, \dots, n - 1\} = \emptyset$, both (i) and
    (iii) hold. Furthermore, after the swap $B = \{n, n + 1\}$ and $B \in A'$,
    thus (ii) holds.
\end{proof}

As a result of \cref{lem:size-2-chores}, if there are at least $n_{c - 1} + 1$
agents with bundles of cardinality two in their MMS partition, then we can use
$\{n, n + 1\}$ to construct a valid reduction to an instance with $n_{c - 1}$
agents and $m = n_{c - 1} + (c - 1)$ items.

To prove \cref{thr:existence-chores}, we will now develop a similar strategy as
used for goods. The strategy for chores makes use of the domination property,
which transfers perfectly, to construct valid reductions. There is, however, one
major difference. For chores, a bundle is worse if it dominates another bundle,
rather than better. Thus, we now wish to find an agent with a bundle that
dominates bundles of many other agents. We get the following variant of
\cref{lem:domination-reduction}, proven in the exact same way.

\begin{lemma}
    Let $I = \langle N, M, V \rangle$ be an ordered instance and $B$ a bundle
    with $v_i(B) \ge \mu_i$ for some $i \in N$. If each agent $i' \in N
    \setminus \{i\}$ has a bundle $B_{i'}$ in her MMS partition with $B \succeq
    B_{i'}$, then allocating $B$ to $i$ a valid reduction.
\end{lemma}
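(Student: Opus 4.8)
The plan is to mirror the proof of \cref{lem:domination-reduction} almost verbatim, reversing the direction of every value comparison, since for chores a dominating bundle is \emph{worse} rather than better. Fix an agent $i' \in N \setminus \{i\}$. As $B \succeq B_{i'}$, \cref{def:domination} supplies an injective function $f_{i'} : B_{i'} \rightarrow B$ with $f_{i'}(j) \le j$ for all $j \in B_{i'}$; with the chore numbering of this section this means $v_{i'}(f_{i'}(j)) \le v_{i'}(j)$, so the image of each chore is at least as bad as the chore itself. Starting from an MMS partition of $i'$ that contains $B_{i'}$, I would transform it into an $n$-partition of $M$ in which one bundle is exactly $B$ and each of the other $n-1$ bundles is worth at least $\mu_{i'}$. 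Since those $n-1$ bundles partition $M \setminus B$, the maximin share of $i'$ in the reduced instance $I' = \langle N \setminus \{i\}, M \setminus B, V' \rangle$ is at least the smallest of their values, so $\mu_{i'}^{I'} \ge \mu_{i'}^{I}$; together with $v_i(B) \ge \mu_i$, this makes allocating $B$ to $i$ a valid reduction.

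The transformation is the natural analogue of the two steps used for goods. First, go through the chores $g \in B_{i'}$ from least valuable to most valuable, swapping the positions of $g$ and $f_{i'}(g)$ in the partition at each step; afterwards the slot that held $B_{i'}$ holds exactly $f_{i'}(B_{i'}) \subseteq B$, while every chore of $B$ outside $f_{i'}(B_{i'})$ still lies in some other bundle. Second, move all chores of $B \setminus f_{i'}(B_{i'})$ into that slot, so that it becomes exactly $B$. Each swap removes $f_{i'}(g)$ from the bundle currently containing it and inserts $g$ there instead, and because $v_{i'}(g) \ge v_{i'}(f_{i'}(g))$ that bundle's value does not drop; the only bundle whose value can decrease is the one that ends up being $B$, which is immaterial since it is given to $i$. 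The second step only removes chores from the remaining bundles, so it cannot lower their values either. Hence the other $n-1$ bundles still each have value at least $\mu_{i'}$, as required. (For goods one instead had to \emph{discard} the surplus $B_{i'} \setminus B$ from that slot rather than fill it up; this is the only structural change, and it is harmless.)

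The one delicate point, exactly as in \cref{lem:domination-reduction}, is verifying that these swaps are ``clean'': that each chore sits in the bundle one expects when its turn comes, so that the slot formerly holding $B_{i'}$ really does end up equal to $B$ and no bundle other than that slot is ever even temporarily harmed. This is handled by the same observation used there: because $f_{i'}(g'') \le g''$ for every $g''$, processing $B_{i'}$ in increasing order of index ensures that no chore is displaced before its own step, while injectivity of $f_{i'}$ prevents two swaps from interfering with each other. I expect this bookkeeping to be the main -- indeed essentially the only -- obstacle; the remainder of the argument is a line-for-line transcription of the proof of \cref{lem:domination-reduction}.
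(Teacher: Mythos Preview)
Your proposal is correct and matches the paper's approach: the paper simply states that this lemma is ``proven in the exact same way'' as \cref{lem:domination-reduction}, and you have accurately carried out that transcription, correctly reversing the direction of step~2 (pulling the surplus chores of $B \setminus f_{i'}(B_{i'})$ \emph{into} the slot rather than pushing a surplus out) and the value inequalities to account for the chore ordering. Your bookkeeping for the swap order is sound; processing $B_{i'}$ in increasing index guarantees each $g$ is still in the slot at its turn (since any earlier swap touches only $g'' < g$ and $f_{i'}(g'') \le g'' < g$) and that each $f_{i'}(g)$, once placed, is never dislodged.
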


\cref{lem:k-sized-domination-reduction} holds also for chores (with the word
\emph{goods} exchanged for \emph{chores}), by modifying the proof such that $g$
is selected using $\min$ instead of $\max$ and thus $(S \cup \{g\}) \succeq B_i$
for each $i \in N$.

For chores there exists the following, standard property on the value of each
individual chore.

\begin{lemma}\label{lem:chores-min-value}
    Let $I = \langle N, M, V \rangle$ be an ordered instance, then $v_{ig} \ge
    \mu_i$ for each $g \in M, i \in N$.
\end{lemma}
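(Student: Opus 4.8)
The plan is to observe that, because every item is a chore and valuations are additive, adding items to a bundle can only lower its value; hence in \emph{any} partition the bundle that happens to contain a given chore $g$ is worth no more than $v_{ig}$, and in particular the worst bundle of an MMS partition is worth no more than $v_{ig}$.

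Concretely, I would fix an agent $i \in N$ and a chore $g \in M$, and take any MMS partition $A = \langle A_1, \dots, A_n \rangle$ of $i$ (one exists by definition of $\mu_i$). Let $A_j$ be the bundle of $A$ that contains $g$. Since $v_{ih} \le 0$ for every $h \in M$, additivity gives $v_i(A_j) = v_{ig} + \sum_{h \in A_j \setminus \{g\}} v_{ih} \le v_{ig}$. Because $A$ is an MMS partition, $\mu_i = \min_{A_k \in A} v_i(A_k) \le v_i(A_j) \le v_{ig}$, which is exactly $v_{ig} \ge \mu_i$. As $i$ and $g$ were arbitrary, the claim follows. (One may equivalently note that, by orderedness, it suffices to prove $v_{i1} \ge \mu_i$, since $v_{i1} \le v_{ig}$ for all $g$; the same one-line argument applied to the bundle containing item $1$ does this.)

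This proof uses neither the earlier valid reductions nor, in fact, the ordered structure itself; orderedness is only convenient for phrasing. I do not expect a genuine obstacle here: the single point worth a remark is the degenerate case $m < n$, where some bundle of $A$ is empty and $\mu_i = 0$ — but this is harmless, since the bundle $A_j$ containing $g$ is by construction nonempty and the inequality $v_i(A_j) \le v_{ig} \le 0$ still holds, so the chain $\mu_i \le v_i(A_j) \le v_{ig}$ is unaffected. The "hard part", such as it is, is simply recognizing that the minimum-bundle value of any partition can never exceed the value of any individual chore sitting inside one of its bundles.
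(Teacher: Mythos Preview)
Your proof is correct and essentially identical to the paper's: both pick an MMS partition, locate the bundle $B$ containing $g$, use non-positivity of chore values to get $v_{ig} \ge v_i(B)$, and conclude $v_i(B) \ge \mu_i$. Your additional remarks (that orderedness is not actually needed, and the $m<n$ edge case) are accurate but unnecessary for the argument.
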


\begin{proof}
    For an agent $i \in N$ and $g \in M$, each MMS partition of $i$ has bundle
    $B$ with $g \in B$. Thus, $v_{ig} \ge v_i(B) \ge \mu_i$.
\end{proof}

Valid reductions are harder to construct for chores than for goods. Of
\cref{lem:single-item-reduction,lem:2-item-reduction,lem:pigeonhole-reduction-2,lem:2-item-reduction-from-1},
only \cref{lem:2-item-reduction} holds for chores. However, as a result of
\cref{lem:k-sized-domination-reduction} we know that if there is chore $g$ that
appears in a bundle of size $1$ in the MMS partition of at least $n - 1$ of the
agents, then there is a valid reduction consisting of $g$ and the last agent.

To prove \cref{thr:existence-chores} we start by showing that each agent has an
MMS partition of a similar structure to the one given by
\cref{lem:mms-partition-layout} for goods.

\begin{lemma}\label{lem:structure-chores}
    Given an ordered instance $I = \langle N, M, V\rangle$ and agent $i \in N$,
    let $k$ denote the maximum number of bundles of cardinality one in any MMS
    partition of $i$. Then, $i$ has an MMS partition in which the chores $1, 2,
    \dots, \max(n - 1, k)$ appear in bundles of cardinality one.
\end{lemma}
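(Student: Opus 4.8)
The plan is to follow the proof of \cref{lem:mms-partition-layout} almost verbatim, with one substitution and one genuinely new idea. The substitution is that the role played there by \cref{lem:goods-at-mms} — ``the relevant items are individually worth at least $\mu_i$'' — is now played by \cref{lem:chores-min-value}, which gives the same conclusion for \emph{every} chore. So I would start from an MMS partition $A$ of $i$ that attains the maximum number $k$ of cardinality-one bundles, write $t = \max(n-1,k)$, and prove by induction on the number of chores among $\{1,2,\dots,t\}$ that currently lie in bundles of size $\ge 2$ that $A$ can be rearranged into an MMS partition of $i$ in which each of $1,2,\dots,t$ forms a singleton bundle, all the while keeping the count of singleton bundles unchanged.

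The induction step is the new ingredient. Suppose a chore $g \le t$ sits in a bundle $B_g \in A$ with $|B_g| \ge 2$. Since the singleton bundles of $A$ outnumber the chores of $\{1,\dots,t\}$ they currently host (we are mid-induction, with $g$ still misplaced), there is a singleton bundle $\{a\}$ whose chore satisfies $a > t$ — this is exactly where the count of bundles, and hence the quantity $\max(n-1,k)$, enters. I would then simply swap $g$ and $a$: replace $\{a\}$ by $\{g\}$, and replace $g$ by $a$ inside $B_g$. The new singleton $\{g\}$ is acceptable because $v_{ig} \ge \mu_i$ by \cref{lem:chores-min-value}; and the modified multi-chore bundle is \emph{no worse} than before, since $g < a$ forces $v_{ig} \le v_{ia}$, so exchanging $g$ for $a$ can only raise its value, which therefore stays at least $\mu_i$. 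This is a pure swap, so the number of singleton bundles is preserved, while the number of chores of $\{1,\dots,t\}$ that sit in singletons strictly increases; iterating at most $t$ times finishes the argument. The degenerate case $\mu_i = 0$, which by \cref{lem:chores-min-value} forces every chore to have value $0$, is subsumed by the same argument (or can be dispatched up front).

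The main obstacle — and the reason the chore argument cannot be a word-for-word copy of the goods one — is precisely the feasibility of this rearrangement. In \cref{lem:mms-partition-layout} one pulls $g$ out of $B_g$ and \emph{dumps} $B_g \setminus \{g\}$ into an arbitrary other bundle, which is harmless because adding goods never decreases a bundle's value. For chores, dumping extra chores into a bundle can push it below $\mu_i$, so no dumping is permitted, and the whole point of the swap-with-a-large-index-singleton move is to avoid it. This is also why the statement is tied to $\max(n-1,k)$ rather than to ``all $m$ chores'': the argument only ever promotes a chore to a singleton by demoting a strictly better chore \emph{from} a singleton, so it needs a reservoir of singleton bundles whose chores lie above $t$, and — unlike for goods — one cannot in general force arbitrarily many chores into singletons. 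Verifying that the available singletons always suffice for the range claimed is the one piece of bookkeeping I would need to carry out carefully.
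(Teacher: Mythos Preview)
Your swap argument is exactly the paper's proof: start from an MMS partition with the maximum number $k$ of singletons, and whenever some $g\le t$ sits in a multi-chore bundle, find a singleton $\{a\}$ with $a>t$ and swap $g$ with $a$; \cref{lem:chores-min-value} makes $\{g\}$ acceptable, and $v_{ig}\le v_{ia}$ keeps the modified multi-chore bundle at least $\mu_i$. So the approach is the same.

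There is one genuine issue, but it lies in the lemma statement rather than in your method. The printed statement has a typo: it should read $\min(n-1,k)$, not $\max(n-1,k)$, and indeed the paper's own proof works with $\min(n-1,k)$ throughout. The ``bookkeeping'' you flagged is precisely where this bites. With $t=\max(n-1,k)$ and $k<n-1$ you would be trying to place $n-1$ chores into singletons while only $k<n-1$ singleton bundles exist, so no swap partner $\{a\}$ with $a>t$ need be available and the induction stalls. With $t=\min(n-1,k)\le k$ the count goes through cleanly: since some $g\le t$ is not in a singleton, the $k$ singletons host at most $t-1<k$ chores from $\{1,\dots,t\}$, hence at least one singleton contains a chore $a>t$. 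Once you set $t=\min(n-1,k)$, your write-up is correct and matches the paper.
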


\begin{proof}
    Let $A$ be an MMS partition of $i$ that contains $k$ bundles of cardinality
    one, let $B_g \in A$ denote the bundle containing some $g \in M$ and let
    $G_A = \{g \in \{1, 2, \dots, \min(n - 1, k)\} : |B_g| > 1\}$. If $G_A =
    \emptyset$, then chores $1, 2, \dots, \min(n - 1, k)$ appear in bundles of
    cardinality one. We wish to show that if $G_A \neq \emptyset$, then there
    exist an MMS partition $A'$ with at least $\min(n - 1, k)$ bundles of
    cardinality $1$ and $|G_{A'}| < |G_A|$. Assume that $G_A \neq \emptyset$ and
    for some $g \in G_A$, select $A_j \in A$ such that $|A_j| = 1$ and $\{1, 2,
    \dots, \min(n - 1, k)\} \cap A_j = \emptyset$.  Then, moving the chore in
    $A_j$ to $B_g$ and placing $g$ in $A_j$ produces an allocation $A'$ for
    which $|G_{A'}| < |G_A|$ and $A'$ contains $k$ bundles of cardinality $1$.
    By \cref{lem:chores-min-value}, and since $v_{ig} \le v_{ig'}$, both
    modified bundles are still worth at least MMS to $i$. Hence, there exists an
    MMS partition $A^*$ of $i$ with $G_{A^*} = \emptyset$.
\end{proof}

Similarly to for goods, we can for chores use the size of the largest bundle in
an MMS partition of an agent to find a lower bound on the maximum number of
bundles of cardinality one in MMS partitions of the agent.

\begin{lemma}\label{lem:number-of-single-bundles}
    Let $I = \langle N, M, V\rangle$ be an ordered instance with $n$ agents and
    $m = n + c$ chores, where $n > c > 0$. If an agent $i$ has an MMS partition
    $A$ with a bundle of size $k \ge 2$, then $i$ has an MMS partition $A'$
    that contains at least $n - (c - k + 2)$ bundles of cardinality one.
\end{lemma}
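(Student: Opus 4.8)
The plan is to mimic the argument of \cref{lem:goods-at-MMS}, adapted to the chores setting via \cref{lem:structure-chores} rather than \cref{lem:mms-partition-layout}. The quantity $c - k + 2$ should arise as follows: of the $c+1$ bundles that can have size more than one, one bundle has size $k$, and the remaining $c - k + 1$ bundles each have size at least two; together with the single-size bundles these account for all $n + c$ chores. A counting argument then forces enough bundles to have cardinality one.

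First I would handle the degenerate case. If $\mu_i = 0$, then every $n$-partition is an MMS partition, so in particular there is one with $n - 1 \ge n - (c - k + 2)$ bundles of cardinality one (recall $c > 0$ and $k \ge 2$, so $c - k + 2 \ge 1$); alternatively, one can simply invoke \cref{lem:structure-chores}. Otherwise $\mu_i < 0$ (strictly), which forces $k \le c + 1$: if $k \ge c + 2$, the bundle of size $k$ together with the forced non-emptiness of all $n$ bundles would require more than $n + c$ chores.

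Next, the core counting step. Take the MMS partition $A$ of $i$ containing a bundle of size $k$, and among all MMS partitions of $i$ choose one maximizing the number of cardinality-one bundles; call that number $t$, so by \cref{lem:structure-chores} agent $i$ has an MMS partition $A'$ in which chores $1, 2, \dots, \min(n - 1, t)$ sit in singleton bundles. It suffices to show $t \ge n - (c - k + 2)$. Suppose not, i.e.\ $t \le n - (c - k + 3)$; I will derive a contradiction by counting chores. In the partition $A'$ (which still contains a size-$k$ bundle, since \cref{lem:structure-chores} preserves bundle sizes, or one re-runs the size-argument on $A$ directly — I would phrase it so the size-$k$ bundle is retained), there are at most $t$ bundles of size one, hence at least $n - t \ge c - k + 3$ bundles of size at least two. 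One of these has size exactly $k$; subtracting it, at least $c - k + 2$ bundles have size at least $2$, contributing at least $2(c - k + 2)$ chores, while the size-$k$ bundle contributes $k$ more. Summing, the non-singleton bundles already contain at least $2(c - k + 2) + k = 2c - k + 4$ chores; adding the at most $t$ singletons gives a total of at most... — here I need the inequality to run the other way, so I would instead lower-bound the total: total chores $\ge 2(n - t - 1) + k$, and since total chores $= n + c$, this yields $n + c \ge 2n - 2t - 2 + k$, i.e.\ $t \ge (n - c + k - 2)/2$; that is weaker than needed, so the clean route is to observe that the $t$ singletons contribute exactly $t$ chores, the size-$k$ bundle contributes $k$, and the remaining $n - 1 - t$ bundles contribute at least $2(n - 1 - t)$, giving $n + c \ge t + k + 2(n - 1 - t) = 2n - 2 - t + k$, hence $t \ge n - c + k - 2 = n - (c - k + 2)$, the desired bound.

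The main obstacle, and the only place requiring care, is the bookkeeping: ensuring that the MMS partition produced by \cref{lem:structure-chores} still contains a bundle of size $k$, and correctly tracking that chores $1, \dots, n-1$ all land in singletons (which is exactly $n - 1$ singletons when $n - 1 \le t$, and forces $t = n - 1$ otherwise). Since $k \ge 2$ and we only claim $t \ge n - (c - k + 2) \le n - 1$, the bound $\min(n-1, t) = t$ holds precisely when $t \le n - 1$, which is automatic; so \cref{lem:structure-chores} directly furnishes the partition $A'$ with the required singletons once we know $t \ge n - (c-k+2)$. The counting inequality $n + c \ge 2n - 2 - t + k$ is immediate from additivity of bundle sizes, so the proof is short; the subtlety is purely in phrasing the interplay between "maximizing singletons" and "retaining a size-$k$ bundle," which I would resolve by first applying the size-$k$ analysis to pin down that non-singleton bundles are scarce, then invoking \cref{lem:structure-chores} only for the final structural statement.
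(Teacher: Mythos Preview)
Your counting idea is sound in spirit, but the proposal has a genuine gap. You assert that $\mu_i < 0$ ``forces non-emptiness of all $n$ bundles''; this is the goods intuition, and it is false for chores---an empty bundle has value $0 \ge \mu_i$, so empty bundles are perfectly compatible with a chores MMS partition. (Take $n=3$, $c=1$, chore values $-3,-1,-1,-1$: then $\mu_i=-3$ and $\bigl\{\{2,3,4\},\{1\},\emptyset\bigr\}$ is an MMS partition containing a bundle of size $k=3=c+2$.) This error breaks both your derivation of $k\le c+1$ and, more importantly, the core step ``the remaining $n-1-t$ bundles contribute at least $2(n-1-t)$ chores'': some of those remaining bundles may be empty. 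You also never settle which partition you are counting in---the given $A$ has the size-$k$ bundle but may have empties, while a singleton-maximising partition has no empties but need not contain any size-$k$ bundle, and \cref{lem:structure-chores} does not take $A$ as input, so there is nothing for it to ``preserve.''

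The repair is short: starting from $A$, repeatedly move a chore from some size-$\ge 2$ bundle \emph{other than} the designated size-$k$ bundle into an empty bundle (this stays an MMS partition by \cref{lem:chores-min-value} and monotonicity for chores). If this ever stalls with empties remaining, the size-$k$ bundle is the only bundle of size $\ge 2$, and a count forces $k\ge c+2$; so for $k\le c+1$ you reach a partition with no empties and a size-$k$ bundle, on which your inequality $n+c\ge t+k+2(n-1-t)$ is now justified and gives $t\ge n-(c-k+2)$ directly. That partition is already the desired $A'$; the detour through $t$ and \cref{lem:structure-chores} is unnecessary. The paper instead argues constructively: it selects $c-k+2$ bundles of $A$ (including the size-$k$ one) that together hold at least $2(c-k+1)+k$ chores---swapping small bundles out for larger ones as needed---and then places each of the at most $n-(c-k+2)$ remaining chores into its own singleton via \cref{lem:chores-min-value}. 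This construction is indifferent to whether $A$ had empty bundles. Both routes tacitly require $k\le c+1$ (the example above shows the statement fails at $k=c+2$), but the paper only ever invokes the lemma in that range.
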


\begin{proof}
    If $A$ does not contain at least $n - (c - k + 2)$ bundles of cardinality
    $1$, then we will show that there is a way to transform $A$ into an MMS
    partition containing at least $n - (c - k + 2)$ bundles of cardinality one.
    Let $\mathcal{B}$ be a set of $c - k + 2$ bundles in $A$, such that
    $\mathcal{B}$ contains a bundle of size $k$ and the bundles in $\mathcal{B}$
    contain at least $2(c - k + 1) + k$ chores in total. A set $\mathcal{B}$
    containing at least $2(c - k + 1) + k$ chores must exist, as if
    $\mathcal{B}$ did not contain at least $2(c - k + 1) + k$ chores, there is a
    bundle of cardinality $0$ or $1$ in $\mathcal{B}$. Further, there must then
    exist a bundle of cardinality at least $2$ in $A$, but not in $\mathcal{B}$,
    since the $m' \ge n + c - 2(c - k + 1) - k + 1 = n - (c - k + 2) + 1$ chores
    not in $B$ are distributed into $n - (c - k + 2)$ bundles. Swapping the
    bundle of cardinality at least $2$ for the bundle of cardinality $0$ or $1$
    increases the number of chores in $\mathcal{B}$, a process that can be
    repeated until $\mathcal{B}$ contains sufficiently many chores. Let $M' = M
    \setminus \cup_{B \in \mathcal{B}} B$. $M'$ is the set of chores not in $B$.
    If $|M'| < n - (c - k + 2)$, extend $M'$ by removing one and one chore from
    a bundle in $\mathcal{B}$ and adding it to $M'$ until $|M'| = n - (c - k +
    2)$. Note that since $\mathcal{B}$ contains at least $2(c - k +1) + k$
    chores, $|M'|$ cannot contain more than $n - (c - k + 2)$ chores initially.
    Hence, we now have that $|M'| = n - (c - k + 2)$.

    Since $|M'| = n - (c - k + 2)$ and there are $n - (c - k + 2)$ bundles not
    in $\mathcal{B}$, we can obtain a $n - (c - k + 2)$-partition of $M'$ by
    placing each chore in $M'$ into a separate empty bundle. The partition can
    be extended to an $n$-partition of $M$ by adding the bundles from
    $\mathcal{B}$. The $n - (c - k + 2)$ bundles created from $M'$ all have
    cardinality one and are by \cref{lem:chores-min-value} worth at least MMS to
    $i$. The remaining bundles either appeared in $A$ or are bundles in $A$ that
    have had some chores removed. In either case, each bundle is worth at least
    MMS to $i$ and the $n$-partition is an MMS partition of $i$.
\end{proof}

As for goods, if most chores appear in the same bundle in an MMS partition of an
agent, then we can say something about the existence of an MMS allocation. In
particular, we get the following result, which can later be combined with
\cref{lem:number-of-single-bundles} to ignore bundles containing at least $c$
chores.

\begin{lemma}\label{lem:mostly-<2-chores}
    Let $I = \langle N, M, V\rangle$ be an ordered instance with $m = n + c$
    chores for some $c > 0$ and assume that for $c' = c - 1$, there exist an
    integer $n_{c'} > 0$ such that all instances with $n' \ge n_{c'}$ agents and
    $m' = n' + c'$ chores have MMS allocations, where $n > n_{c' - 1}$.  Then,
    if an agent $i \in N$ has an MMS partition $A$ with at least $n - 2$ bundles
    of size less than two and at least $n - 1$ bundles of size less than three,
    an MMS allocation exists.
\end{lemma}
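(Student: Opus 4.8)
The plan is to follow the outline of the proof of \cref{lem:one-3+-good-bundle}, but to exploit that for chores every single chore is worth at least MMS to every agent (\cref{lem:chores-min-value}); this makes the relevant bipartite ``acceptance'' graph so dense that a plain application of Hall's theorem can replace the envy-free matching argument. Write $k$ for the largest number of cardinality-one bundles occurring in any MMS partition of $i$. After, if needed, sliding a chore out of a bundle of size at least two into an empty bundle of $A$ --- which preserves both the MMS property and every bundle's size --- we may assume $A$ has no empty bundle, and the hypothesis then gives $k \ge n-2$.

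\emph{Easy cases.} If $k \ge n-1$ (in particular if $\mu_i = 0$, since then $i$ is indifferent among all items and has an MMS partition into $n-1$ singletons and one bundle), then \cref{lem:structure-chores} gives an MMS partition of $i$ whose singletons are $\{1\},\dots,\{n-1\}$ and whose last bundle is $\{n,\dots,n+c\}$; giving that bundle to $i$ and the singletons to the other agents is an MMS allocation because, by \cref{lem:chores-min-value}, a single chore is always worth at least its recipient's MMS. So assume $k = n-2$; combined with the size hypothesis, \cref{lem:structure-chores} then yields an MMS partition $A^\star$ of $i$ made up of the singletons $\{1\},\dots,\{n-2\}$, a bundle $B_2$ of size two, and a bundle $B^\star$ of size $c$ (two bundles of size two when $c=2$), together covering the chores $\{n-1,n,\dots,n+c\}$. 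If $B_2 = \{a,b\}$ with $n \le a < b$ --- which we can arrange whenever $n-1 \notin B_2$, and in particular whenever $c=2$, by taking $B_2$ to be a size-two bundle not containing $n-1$ --- then $v_{i'}(B_2) = v_{i'}(a) + v_{i'}(b) \ge v_{i'}(n) + v_{i'}(n+1) = v_{i'}(\{n,n+1\}) \ge \mu_{i'}$ for every agent $i'$ (the last inequality since every MMS partition of $i'$ contains a bundle holding at least two chores from $\{1,\dots,n+1\}$, of value at most $v_{i'}(\{n,n+1\})$); so allocating $B_2$ to one other agent, the remaining non-singleton bundle $B^\star$ (a bundle of $A^\star$, hence valued at least $\mu_i$ by $i$) to $i$, and the singletons to the rest is an MMS allocation.

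\emph{Main case.} It remains to handle $k = n-2$, $c \ge 3$, and $B_2 = \{n-1,x\}$ with $x \ge n$. Form the bipartite graph joining each agent to every bundle of $A^\star$ she values at least at her MMS. Agent $i$ is joined to all $n$ bundles, and by \cref{lem:chores-min-value} every agent is joined to all $n-2$ singletons. If this graph has a perfect matching, that matching is an MMS allocation. Otherwise Hall's condition fails on the agent side; since $i$ sees everything, $i$ lies outside every deficient set, and since every agent sees all $n-2$ singletons, the only possible deficient set is $N\setminus\{i\}$, with neighbourhood exactly the $n-2$ singletons. Thus no agent other than $i$ values $B_2$ at her MMS, so $v_i(B_2) \ge \mu_i$ while $v_{i'}(B_2) \le \mu_{i'}$ for all $i' \neq i$, and by \cref{lem:2-item-reduction} allocating $B_2$ to $i$ is a valid reduction, to an instance with $n-1$ agents and $n-1+(c-1)$ chores. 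That instance --- surplus $c-1$, and by the lemma's assumptions enough agents --- has an MMS allocation, and hence so does $I$.

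The crux is the main case, and it is also what forces the chores hypothesis ``at least $n-2$ bundles of size less than two'' to be stronger than the goods hypothesis ``at least $n-1$ bundles of size less than three'': for chores there is no analogue of \cref{lem:2-item-reduction-from-1} padding a satisfactory singleton into a satisfactory pair, so a reduction assembled from singletons alone would leave the surplus $c$ unchanged, and the whole reduction must be carried by the single size-two bundle $B_2$ --- which exists only when at most two bundles of $A^\star$ are non-singletons. One should also check that the rearrangement in \cref{lem:structure-chores} never changes a bundle's size (so the ``at most one bundle of size at least three'' shape survives) and dispose of the degenerate small instances, e.g.\ $n \le 2$ via \cref{lem:mms-existence-few-agents}.
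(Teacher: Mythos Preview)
Your proof is correct and lands on the same dichotomy as the paper --- either some agent $i' \neq i$ values the size-two bundle at her MMS (give it to her, the remaining big bundle to $i$, and the small bundles to everyone else), or no such $i'$ exists (apply \cref{lem:2-item-reduction} and reduce to surplus $c-1$) --- but the paper gets there in five lines without any of your machinery. Working directly with the given partition $A$, the paper just notes that the $n-2$ bundles of size less than two satisfy every agent by \cref{lem:chores-min-value}, so the only question is who, if anyone, can absorb the size-two bundle $B$; that binary case-split is the whole proof. Your canonicalisation through \cref{lem:structure-chores}, the sub-case on whether $n-1 \in B_2$ (where you separately prove that everyone likes $B_2$, which is just a special instance of the ``some $i'$ likes $B$'' branch), and the Hall's-theorem wrapper in the main case are all correct but unnecessary: the perfect-matching/deficient-set argument is an elaborate way of restating ``either some other agent accepts $B_2$ or none does.'' The one genuine extra you supply is the explicit verification that $v_{i'}(\{n,n+1\}) \ge \mu_{i'}$ via pigeonhole, which the paper establishes separately in \cref{lem:size-2-chores}; but since the direct proof never needs that fact, it buys nothing here.
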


\begin{proof}
    If $A$ contains $n - 1$ bundles of size less than two, then allocating one
    of the bundles to each of the agents in $N \setminus \{i\}$ and giving the
    last bundle to $i$ is an MMS allocation by \cref{lem:chores-min-value}.
    Otherwise, there are $n - 2$ bundles of size less than 2, a bundle $B$ of
    size 2 and a bundle $B'$ of size at least 2. If $v_{i'}(B) \ge \mu_{i'}$ for
    some $i' \in N \setminus \{i\}$, then an MMS allocation can be found by
    allocating $B$ to $i'$, $B'$ to $i$ and the remaining bundles of size less
    than 2 to the remaining agents. If $v_{i'}(B) < \mu_{i'}$ for all $i' \in N
    \setminus \{i\}$, then allocating $B$ to $i$ is a valid reduction to an
    instance with $(n - 1) \ge n_{c'}$ agents and $n + c - 2 = (n - 1) + (c -
    1)$ chores, for which an MMS allocation exits.
\end{proof}

We are now ready to prove \cref{thr:existence-chores}. The proof proceeds in an
almost equivalent manner to the one for \cref{thr:existence-stronger}.

\repeattheorem{\ref{thr:existence-chores}}{\existencechorescontent}

\begin{proof}
    For $c \le 5$, \citet{Feige:2022} showed that an MMS allocation always
    exists for any number of agents. Thus, for $c < 6$, $n_c \le
    \cformulaaa{c}$ and we only need to consider cases where $c \ge 6$.

    We wish to show that for every integer $c \ge 7$, all instances with $n \ge
    \cformulaaa{c}$ agents and $m \le n + c$ chores have an MMS allocation. To
    obtain this result, we will use induction with $c < 6$ as base case. For
    a given value of $c$, assume that for every integer $c'$ with $5 < c' < c$,
    an MMS allocation exists when there are $n' \ge n_{c'} = \cformulaaa{c}$
    agents and at most $n' + c'$ chores. Note that under this assumption
    $\cformulaaa{(c' - 1)} < \cformula{c'}$, when $c \ge c' \ge 6$. Hence, we are
    able to use \cref{lem:mostly-<2-chores}, and only show existence for
    instances where $n \ge \cformulaaa{c}$ and $m = n + c$.

    Let $I = \langle N, M, V \rangle$ be an ordered instance of $n$ agents and
    $m = n + c$ chores, where $n > \cformula{c}$. We wish to show that under the
    inductive assymption, $I$ has an MMS allocation. Let $A_I(i)$ be an MMS
    partition of agent $i \in N$ of the type described by
    \cref{lem:structure-chores,lem:size-2-chores}, where the number of bundles
    of cardinality $1$ is maximized. To show that $I$ has an MMS allocation, we
    will consider domination between bundles in $A_I(i)$, for different agents,
    that only contain particularly good chores. Let $B_I(i)$ be a bundle in
    $A_I(i)$ in which the worst chore $g$ is such that $n \le g$.

    Before proceeding, we will deal with some special cases, to simplify and
    tighten the further analysis. If $|B_I(i)| > c - 1$ for some agent $i \in
    N$, then $A_I(i)$ contains by \cref{lem:number-of-single-bundles} at least
    $n - 2$ bundles of cardinality $1$. Since $|B_I(i)| > c - 1$, the last
    bundle has cardinality at most $2$. Thus, if $|B_I(i)| > c - 1$, an MMS
    allocation exists by \cref{lem:mostly-<2-chores}. Note that it is impossible
    that $|B_I(i)| = 1$, as since $c > 0$, $A_I(i)$ contains at most $n - 1$
    bundles of cardinality $1$, each containing a chore in $\{1, 2, \dots, n -
    1\}$. However, $\{1, 2, \dots, n - 1\} \cap B_I(i) = \emptyset$ and
    $|B_I(i)| \neq 1$. Also note that if $|B_I(i)| = 2$, then we can w.l.o.g.\
    assume that $B_I(i) = \{n, n + 1\}$, due to \cref{lem:size-2-chores}.

    Let $k \in \{3, \dots, c - 1\}$. If there are $\max(c - k + 2, n_{c - k +
    1} + 1)$ agents $i \in N$ for which $|B_I(i)| = k$ and there is a ($k -
    1$)-sized subset $S \subseteq \{n, n + 1, \dots n + c\}$ such that $S
    \subset B_I(i)$ for all off them, then a reduction to an instance with
    $\max(c - k + 1, n_{c - k + 1})$ agents and $\max(c - k + 1, n_{c - k + 1})+
    (c - k + 1)$ chores can be constructed; There is a bundle $B$, that is one
    of the $B_I(i)$ with $|B_I(i)| = k$ and $S \subset B_I(i)$, that dominates
    all the other ones. Since \cref{lem:number-of-single-bundles} guarantees
    that $A_I(i)$ contains at least $n - (c - k + 2)$ bundles of cardinality one
    when $|B_I(i)| = k$, the reduction can be constructed by giving one chore
    from $\{1, 2, \dots, n - \max(c - k + 2, n_{c - k + 1} + 1)\}$ to each of
    the $n - \max(c - k + 2, n_{c - k + 1} + 1)$ other agents and then $B$ to
    any agent that values it at MMS or higher. Each agent removed, received a
    bundle valued at MMS or higher. Further, as the MMS partition $A_I(i)$ of
    any remaining agent $i$ contains all the bundles of cardinality one given
    away, along with another bundle $B_I(i)$ that is dominated by $B$, the MMS
    of $i$ cannot have decreased. Thus, we have a valid reduction.

    Similarly, if $k = 2$, since all the agents for which $|B_I(i)| = 2$ have
    $B_I(i) = \{n, n + 1\}$, if $\max(c - k + 2, n_{c - k + 1} + 1)$ agents have
    $|B_I(i)| = 2$, then a valid reduction can be constructed in the same way.

    Thus, if there for some $k \in \{3, \dots, c - 1\}$ is at least
    \[
        1 + \left(\frac{{c \choose k - 1}}{k} + \frac{{c \choose k - 2}}{k -
        1}\right)\max(c - k  + 1, n_{c - k + 1})
    \]
    agents with $|B_I(i)| = k$, then an MMS allocation exists. Or, if there are
    $\max(c, n_{c - 1} + 1)$ agents with $|B_I(i)| = 2$. Hence, when there are
    \begin{equation}\label{eq:agents-chores}
    \begin{aligned}
        1 &+ \sum_{k = 3}^{c - 1}\left(\frac{{c \choose k - 1}}{k} + \frac{{c
          \choose k - 2}}{k - 1}\right)\max(c - k  + 1, n_{c - k + 1}) \\
          &+ \max(c - 1, n_{c - 1})
    \end{aligned}
    \end{equation}
    agents, then a valid reduction can be performed for some $k \in \{2, 3,
    \dots, c - 1\}$ to an instance for which an MMS allocation exists. We wish
    to show that \cref{eq:agents-chores} is bounded from above by
    $\cformula{c}$.

    Since $n_{c'} = \cformulaaa{c'}$ for $c > c' > 0$, when $k < c - 2$, we have
    $\max(c - k + 1, n_{c - k + 1}) = n_{c - k + 1}$. Also, since $c \ge 6$,
    \begin{equation}\label{eq:floor-trick-chores}
    \begin{aligned}
        2 &+ \sum_{k = c - 3}^{c - 1}\left(\frac{{c \choose k - 1}}{k} + \frac{{c
        \choose k - 2}}{k - 1}\right)\max(c - k + 1, n_{c - k + 1}) \\
        &< \sum_{k = c -
        3}^{c - 1}\left(\frac{{c \choose k - 1}}{k} + \frac{{c
        \choose k - 2}}{k - 1}\right)\cformulaa{c - k + 1}
    \end{aligned}
    \end{equation}
    Combining \cref{eq:agents-chores,eq:floor-trick-chores}, we get that if
    there are at least
    \begin{equation}\label{eq:chores-agents}
    \begin{aligned}
        -1 &+ \sum_{k = 3}^{c - 1}\left(\frac{{c \choose k - 1}}{k} + \frac{{c
        \choose k - 2}}{k - 1}\right)\cformulaa{c - k + 1} \\
        &+ \cformulaa{c - 1}
    \end{aligned}
    \end{equation}
    agents, then an MMS allocation exists. Thus, we must show that this number
    of agents is at most $\cformulaaa{c}$. We have that for $\alpha > 0$
    \begin{equation}\label{eq:first-chores}
    \begin{aligned}
        \lfloor\alpha^{c - 1}&(c - 1)!\rfloor + \sum_{k = 3}^{c -
        1}\left(\frac{{c \choose k - 1}}{k} + \frac{{c \choose k - 2}}{k -
        1}\right)\lfloor \alpha^{c - k + 1}(c + k - 1)!\rfloor  \\
        &\le \alpha^c\cdot c!\left(\frac{1}{\alpha c} + \sum_{k = 3}^{c -
        1}\left(\frac{\alpha^{-k + 1}}{k!} + \frac{\alpha^{-k + 1}}{k!(c - k +
          1)}\right) \right)
    \end{aligned}
    \end{equation}
    Using the Macluarin series $e^y = \sum_{j = 0}^{\infty} y^j/(j!)$, and
    assuming $\alpha \ge 0.6$, we get that
    \begin{align}
        \frac{1}{\alpha c} &+ \sum_{k = 3}^{c - 1}\left(\frac{\alpha^{-k +1}}{k!} +
        \frac{\alpha^{-k + 1}}{k!(c - k + 1)}\right) \\
        &\le \frac{1}{10 \cdot \alpha^2} + (\alpha + \frac{1}{2})\sum_{k =
        3}^{\infty} \frac{\alpha^{-k}}{k!} + \left(\frac{1}{6\alpha} -
          \sum_{k=6}^{\infty} \frac{\alpha^{-k}}{k!}\right)\\
        &=
        \begin{aligned}
            &\frac{1}{6\alpha} + \frac{1}{10 \cdot \alpha^2} +
            \frac{1}{6\alpha^3} + \frac{1}{24\alpha^4} + \frac{1}{120\alpha^5}\\
            &+(\alpha - \frac{1}{2})\left( e^{\frac{1}{\alpha}} -
            \frac{1}{2\cdot \alpha^2} - \frac{1}{\alpha} - 1
            \right)\label{eq:second-chores}
        \end{aligned}
    \end{align}
    \Cref{eq:second-chores} is equal to $1$ if $\alpha = 0.78370709\dots$, and
    less than $1$ if $\alpha$ is larger.\footnote{The exact value $\alpha$ for
    which \cref{eq:second-chores} is equal to $1$ can be used in
    \cref{thr:existence-chores} instead of the rounded value $\cconstant$.}
    Thus, as a result of \cref{eq:first-chores}, we know that the number of
    required agents from \cref{eq:chores-agents} is less than or equal to $-1 +
    \cbaseformulaaa{c} < \cformulaaa{c}$. Consequently, $I$ has an MMS
    allocation by our inductive hypothesis.
\end{proof}

\section{Conclusion and Future Work}\label{sec:discussion}

\Cref{thr:existence-stronger,thr:existence-chores} show that instances with $n$
agents and $n + c$ items will for any $c > 0$ have an MMS allocation if $n$ is
sufficiently large. The required value for $n$ does, however, grow exponentially
in $c$. As a consequence, even if an instance contains the entire human
population, the value of $c$ can at most be $15$ for goods and $14$ for chores.
Thus, the result is mostly of use for instances with few agents, such as the
motivating real-world instances, where the value for $c$ is comparably large.
For these instances, \cref{thr:c=6,thr:c=7} also play a crucial role, as their
constants are relatively large in relation to the small number of
required agents.

We only know that an MMS allocation is \emph{not} guaranteed to exist when there
are about three times as many items as agents. It would be interesting to
further reduce the gap between the known upper and lower bounds.  While we have
not been able to improve our lower bounds, we find it probable that the bounds
can be improved by an approach that builds upon our domination-based partial
ordering. By better understanding how quickly large chains must appear in the
ordering, one can potentially replace \cref{eq:require-one-k} and
\cref{eq:required-all-k} by smaller terms and obtain a better bound.

There are several ways in which smaller terms could be found. First, the
argument used for \cref{thr:existence-stronger} makes use of only a single
bundle of each agent when counting the number of agents required before a
domination-based reduction must exist. Considering multiple bundles for each
agent could perhaps result in an improved bound. To this end, it is possible to
show that there is an MMS partition of each agent where, in addition to the
structure imposed by \cref{lem:mms-partition-layout}, no bundle dominates
another bundle except when both bundles have cardinality one. Second, while we
only consider domination when two bundles of size $k$ share a ($k - 1$)-sized
subset of items, a bundle of size $k$ may dominate another bundle of size $k$
even if there is no such shared subset. For example, the bundle $\{n + 2, n
+5\}$ dominates the bundle $\{n + 4, n + 6\}$. As \cref{eq:require-one-k}
considers all ($k - 1$)-sized subsets, domination interactions without shared
subsets will exist between some bundles before there are $n_c$ agents. The
difficulty in also considering such interactions is properly quantifying the
required number of agents. Furthermore, there may also be potential in
constructing reductions based on domination between bundles of different
cardinality, such as $k$ and ($k - 1$)-sized bundles, given that one is able to
properly quantify the combined number of required bundles of these sizes.

\bibliography{paper}

\clearpage
\appendix

\section*{\huge Appendix}

\section{Missing Proofs From \cref{sec:prelims}}

In this section, we present proofs for
\cref{lem:single-item-reduction,lem:2-item-reduction,lem:pigeonhole-reduction-2,%
lem:2-item-reduction-from-1,lem:identical-valuation-mms,lem:mms-existence-few-agents}.
Proofs of these lemmas can also be found in the works cited in the paragraph
preceding the lemmas in \cref{sec:prelims}. However, for some of the lemmas,
\cref{lem:domination-reduction} can be used to produce shorter, simpler proofs.

\begin{proof}[Proof of \cref{lem:single-item-reduction}] For any $i' \in N
    \setminus \{i\}$ and any MMS partition $A$ of $i'$, there is a bundle
    $B$ containing $j$. If the instance was ordered, bundle $B$ would dominate
    $\{j\}$. While domination-based reductions do not
    generally work for unordered instances, when the bundle given
    away is a subset of a bundle in the MMS partition of each agent, then the
    same argument as in \cref{lem:domination-reduction} can be used.
\end{proof}

\begin{proof}[Proof of \cref{lem:2-item-reduction}]
    For any $i' \in N \setminus \{i\}$ there are two possibilities. Either $j$
    and $j'$ appear in the same bundle in some MMS partition of $i'$ or
    the two goods appear in distinct bundles, $B$ and $B'$. In the first case,
    the bundle dominates $\{j, j'\}$ and \cref{lem:domination-reduction} can be
    used.  In the latter case, since $v_{i'}(\{j, j'\}) \le \mu_{i'}$ we have
    that $v_{i'}((B \cup B') \setminus \{j, j'\}) = v_{i'}(B \cup B') -
    v_{i'}(\{j, j'\}) \ge 2\mu_{i'} - v_{i'}(\{j, j'\}) \ge \mu_{i'}$ and an ($n
    - 1$)-partition of $M \setminus \{j, j'\}$ in which all bundles are worth at
    least $\mu_{i'}$ to $i'$ can be obtained by merging $B$ and $B'$ before
    removing $j$ and $j'$.
\end{proof}

\begin{proof}[Proof of \cref{lem:pigeonhole-reduction-2}]
    For any $i' \in N \setminus \{i\}$, every MMS partition $A$ of $i'$
    contains a bundle $B$ where $|B \cap \{1, 2, \dots, n + 1\}| \ge 2$. We have
    that $B \succeq \{n, n + 1\}$ and this is a valid reduction by
    \cref{lem:domination-reduction}.
\end{proof}

\begin{proof}[Proof of \cref{lem:2-item-reduction-from-1}]
    For any $i' \in N \setminus \{i\}$, since $v_{i'}(j) < \mu_{i'}$, in any MMS
    partition $A$ of $i'$, $j$ appears in a bundle $B$ with $|B| > 1$. Since $j
    \in B$ and $j'$ is the worst good in $M$, $B \succeq \{j, j'\}$ and this is
    valid reduction by \cref{lem:domination-reduction}.
\end{proof}

\begin{proof}[Proof of \cref{lem:identical-valuation-mms}]
    If all agents share an identical MMS partition $A$, then $A$ is an MMS
    allocation. Otherwise, let $A$ be the shared MMS partition and $i$ the agent
    for which $A$ is not an MMS partition. Since $v_i(M) \ge n\mu_i$, at least
    one bundle $A_j \in A$ is worth no less than $\mu_i$ to $i$. An MMS
    allocation can be constructed by allocating $A_j$ to $i$ and the other
    bundles in $A$ to the other $n - 1$ agents.
\end{proof}

\begin{proof}[Proof of \cref{lem:mms-existence-few-agents}]
    Follows directly from \cref{lem:identical-valuation-mms}.
\end{proof}

\section{Proof of \cref{thr:c=6}}

In our proof of \cref{thr:c=6} and later in the proof of \cref{thr:c=7}, we will
categorize partitions of the goods into types based on the cardinality of the
bundles in the partition. We say that an $n$-partition is of type $(a_1, a_2,
\dots, a_n)$ with $a_i \in \{0, 1, 2, \dots, m\}$ and $a_i \le a_{i + 1}$ if the
$a_i$ are the cardinalities of the bundles in the partition. For example, the
$2$-partition $\langle \{1, 2\}, \{3\} \rangle$ is of type $(1, 2)$.

To prove \cref{thr:c=6} we will make use of \cref{lem:feige-congregation}, a
slightly generalized aggregation of the results in Propositions 20--28 in the
full version of \citeauthor{Feige:2022}'s paper~%
\cite{Feige:2022}.\footnote{Available at https://arxiv.org/abs/2104.04977.}

\begin{lemma}\label{lem:feige-congregation}
    Let $I = \langle N = \{1, 2, 3\}, M = \{1, \dots, 9\}, V \rangle$ be an
    ordered instance and for each $i \in N$ let $x_i$ be such that $0 \le x_i
    \le \mu_i$. There always exists an allocation $A = \langle A_1, A_2, A_3
    \rangle$ with $v_i(A_i) \ge x_i$ for all $i \in N$, unless there for each $i
    \in N$ is only a single way to partition $M$ into three bundles $B_{i1}$,
    $B_{i2}$ and $B_{i3}$, such that $v_i(B_{ij}) \ge x_i$ for all $j \in N$,
    and where for two agents $|B_{i1}| = |B_{i2}| = |B_{i3}| = 3$ and for the
    remaining agent $\{|B_{i1}|, |B_{i2}|, |B_{i3}|\} = \{2, 3, 4\}$.
\end{lemma}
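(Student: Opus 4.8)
The plan is to adapt the case analysis of \citeauthor{Feige:2022} (Propositions 20--28 in the full version of \cite{Feige:2022}), which treats the special case $x_i = \mu_i$, to arbitrary targets $0 \le x_i \le \mu_i$. First I would dispose of degenerate targets: if $x_i = 0$ for some agent $i$, assign her the empty bundle and apply \cref{lem:mms-existence-few-agents} to the two-agent sub-instance on all nine goods; since an agent's maximin share does not decrease when agents are removed, each of the other two agents obtains at least her three-agent MMS, hence at least her target. (If some $x_i = 0$ then agent $i$ has many valid $3$-partitions, so the exceptional configuration of the statement cannot occur, and exhibiting an allocation is all that is required.) Thus assume henceforth $x_i > 0$ for every $i$; then every $3$-partition in which each bundle is worth at least $x_i$ to $i$ has all bundles non-empty, so its type is one of $(1,1,7)$, $(1,2,6)$, $(1,3,5)$, $(1,4,4)$, $(2,2,5)$, $(2,3,4)$, $(3,3,3)$.

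Before the main analysis I would record two generic ways to produce the desired allocation. First, a target-version of \cref{lem:identical-valuation-mms}: if two of the agents, say $j$ and $k$, share a $3$-partition $A$ of $M$ with every bundle worth at least $x_j$ to $j$ and at least $x_k$ to $k$, then give the third agent her most valuable bundle in $A$ (worth at least $v_i(M)/3 \ge \mu_i \ge x_i$) and the other two bundles to $j$ and $k$. Second, a target-version of the domination reduction (\cref{lem:domination-reduction}): if some agent $i$ has a bundle $B$ with $v_i(B) \ge x_i$ and each of the other two agents has, in some valid $3$-partition, a bundle dominating $B$, then---by the same swap argument as in \cref{lem:domination-reduction}, carried out with the targets in place of the maximin shares---each of those two agents possesses a $2$-partition of $M \setminus B$ into bundles worth at least her target; allocating $B$ to $i$ and applying \cref{lem:mms-existence-few-agents} to the resulting two-agent instance on $M \setminus B$ then gives every agent at least her target. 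Using \cref{lem:mms-partition-layout} one may moreover assume each agent holds a valid $3$-partition in which the most valuable goods appear as singletons, which constrains the possible bundle shapes and is what makes these two reductions widely applicable.

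The heart of the proof is then the type-by-type check: for every combination of partition types of the three agents, one shows that two agents share a valid partition, or the domination reduction fires, or an explicit allocation can be written down, and that the single configuration escaping all of these is the one in the statement---each agent with a unique valid $3$-partition, two of type $(3,3,3)$ and one of type $(2,3,4)$. This is precisely the $x_i$-generalization of \citeauthor{Feige:2022}'s Propositions 20--28. I expect the main obstacle to be the length and bookkeeping of this case analysis---many type combinations, and within each the positions of the large-indexed goods must be examined---together with the need to re-inspect \citeauthor{Feige:2022}'s arguments to confirm that they rely only on the inequalities $v_i(M) \ge 3x_i$ and on each agent possessing at least one valid $3$-partition, never on the equality $x_i = \mu_i$; since every step there is a monotone comparison of a bundle's value against the agent's threshold, the generalization goes through.
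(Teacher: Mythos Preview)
Your proposal is correct and takes essentially the same approach as the paper: the paper's proof is the single line ``Exchange $\mathrm{MMS}_i$ for $x_i$ in the proofs of Propositions 20--28 in the full version of \citeauthor{Feige:2022}'s paper,'' and your plan is a fleshed-out description of exactly that substitution, together with a correct observation that \citeauthor{Feige:2022}'s arguments use only monotone threshold comparisons and the existence of a valid $3$-partition, never the extremality of $\mu_i$.
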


\begin{proof}
    Exchange $\mathrm{MMS}_i$ for $x_i$ in the proofs of Propositions 20--28 in
    the full version of \citeauthor{Feige:2022}'s paper~\cite{Feige:2022}.
\end{proof}

\Cref{lem:feige-congregation} plays a key role in the proof of \cref{thr:c=6}.
The lemma allows us to use reductions to instances with three agents and nine
goods, despite the fact that MMS allocations do not necessarily exist for such
instances. After the reduction, it suffices for our use case to show that the
conditions of the lemma hold with $x_i$ set to the agent's MMS in the original
instance, rather than the possibly increased MMS in the three agent instance.

The following two lemmas, generalizing a couple of lemmas of \citet{Feige:2022},
allow us to simplify the proof of \cref{thr:c=6} to showing that all instances
with four agents and ten goods have MMS allocations.

\begin{lemma}\label{lem:mostly-overlapping-2}
    Let $I = \langle N, M, V \rangle$ be an ordered instance. If there exists a
    good $g \in M$ such that at most one agent does not have an MMS partition
    in which the bundle containing $g$ is of cardinality two, then there exists
    a valid reduction consisting of a single agent and a bundle containing $g$
    and one other good.
\end{lemma}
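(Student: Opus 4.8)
The plan is to build the reduction from the domination ordering, handling the single possible ``exceptional'' agent by a separate merging argument rather than by domination. First I would set up notation: assume $n \ge 2$ (the case $n = 1$ is trivial), let $i_0$ be the one agent that may fail to have an MMS partition with $g$ in a bundle of cardinality two (or any agent, if there is none), so that every $i \in N \setminus \{i_0\}$ has an MMS partition in which $g$ lies in a bundle $\{g, h_i\}$ with $h_i \neq g$. I would fix one such partition and one such partner $h_i$ for each $i \neq i_0$, pick $i^* \in N \setminus \{i_0\}$ with $h_{i^*}$ of largest index, and set $g' := h_{i^*}$. The reduction will hand the two-good bundle $\{g, g'\}$ to a single agent, and which agent is chosen depends on a case split.

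The first thing to record is the domination fact $\{g, h_i\} \succeq \{g, g'\}$ for every $i \in N \setminus \{i_0\}$: both bundles contain $g$, and the index of $h_i$ is at most that of $g' = h_{i^*}$, so the injection $g \mapsto g$, $g' \mapsto h_i$ witnesses domination in the sense of \cref{def:domination}. Then I would split on $v_{i_0}(\{g, g'\})$. If $v_{i_0}(\{g, g'\}) \ge \mu_{i_0}$, allocate $\{g, g'\}$ to $i_0$: this agent is satisfied, and every other agent $i$ owns the dominating bundle $\{g, h_i\}$ in her MMS partition, so \cref{lem:domination-reduction} certifies a valid reduction. If instead $v_{i_0}(\{g, g'\}) < \mu_{i_0}$, allocate $\{g, g'\} = \{g, h_{i^*}\}$ to $i^*$, who is satisfied because this bundle comes from her own MMS partition; for each $i \in N \setminus \{i_0, i^*\}$ the dominating bundle $\{g, h_i\}$ shows, by the argument of \cref{lem:domination-reduction}, that $\mu_i$ does not decrease in the reduced instance.

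The one place domination does not directly apply, and the step I expect to be the main obstacle, is showing that $\mu_{i_0}$ does not decrease in this second case, since $i_0$ need not own any bundle dominating $\{g, g'\}$. Here I would argue directly from an MMS partition $A$ of $i_0$: if $g$ and $g'$ lie in a common bundle $A_1$, fold $A_1 \setminus \{g, g'\}$ into any other bundle $A_2$, which stays worth at least $v_{i_0}(A_2) \ge \mu_{i_0}$; if $g \in A_1$ and $g' \in A_2$ with $A_1 \neq A_2$, replace both by $(A_1 \setminus \{g\}) \cup (A_2 \setminus \{g'\})$, whose value is $v_{i_0}(A_1) + v_{i_0}(A_2) - v_{i_0}(\{g, g'\}) \ge 2\mu_{i_0} - \mu_{i_0}$ by additivity and the case assumption. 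Either way one obtains an $(n - 1)$-partition of $M \setminus \{g, g'\}$ with every bundle worth at least $\mu_{i_0}$, so $\mu_{i_0}^{I'} \ge \mu_{i_0}^{I}$ and the reduction is valid. The design choice that makes everything go through is precisely the dichotomy on $v_{i_0}(\{g, g'\})$, together with choosing $g'$ to be the worst partner appearing among the non-exceptional agents, which is exactly what forces all of their bundles to dominate $\{g, g'\}$ and thereby lets us sidestep the need for the exceptional agent to own such a bundle.
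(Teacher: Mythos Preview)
Your proposal is correct and follows essentially the same route as the paper: pick the dominated two-good bundle $\{g,g'\}$ (with $g'$ the worst partner), split on whether the exceptional agent values it at MMS, and use \cref{lem:domination-reduction} for the non-exceptional agents; the paper cites \cref{lem:2-item-reduction} for the exceptional agent in the second case, whereas you inline that same merging argument directly. The only cosmetic difference is that you fold the ``no exceptional agent'' case into the main argument by choosing $i_0$ arbitrarily, which is harmless because you never rely on domination for $i_0$.
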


\begin{proof}
    One of the bundles of cardinality two containing $g$ is dominated by the
    others. If all agents have an MMS partition where the bundle containing $g$
    is of cardinality two, then allocating the dominated bundle to an agent that
    values it at MMS or higher is a valid reduction. Otherwise, let $i$ be the
    remaining agent and $B$ the dominated bundle. If $v_i(B) \ge \mu_i$,
    allocating $B$ to $i$ is a valid reduction. Otherwise, by
    \cref{lem:2-item-reduction}, allocating $B$ to an agent that values it at
    MMS or higher is a valid reduction.
\end{proof}

\begin{lemma}\label{lem:2n+2}
    Let $I = \langle N, M, V \rangle$ be an ordered instance with $n$ agents and
    $m \le 2n + 2$ goods, then there exists a valid reduction removing a single
    agent and a bundle containing one or two goods.
\end{lemma}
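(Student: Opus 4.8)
The plan is to split on whether any agent already values a single good at least as much as her maximin share. If some agent $i$ has $v_{i1} \ge \mu_i$ — which in particular covers the cases $m \le n$ and $\mu_i = 0$ — then \cref{lem:single-item-reduction} immediately gives the desired reduction: allocate $\{1\}$ to $i$. So the interesting regime is when $v_{i1} < \mu_i$ for every agent $i$. Since $v_{i1} \ge 0$, this forces $\mu_i > 0$ for all $i$, and since the best single good is worth strictly less than $\mu_i$, no MMS partition of any agent can contain an empty or singleton bundle. Hence every one of the $n$ bundles of any MMS partition has at least two goods, so $m \ge 2n$, and combined with $m \le 2n+2$ we get $m \in \{2n,\,2n+1,\,2n+2\}$. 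We may also assume, via \cref{lem:pigeonhole-reduction-2}, that $v_i(\{n,n+1\}) < \mu_i$ for every $i$ (otherwise that lemma gives the reduction).

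In this regime a short pigeonhole argument on the first $2n$ goods shows that, in any MMS partition, at least $n-2$ of the bundles have size exactly two, the $m-2n\in\{0,1,2\}$ extra goods lying in at most two of the remaining bundles. For $m=2n$ every bundle of every MMS partition has size exactly two, so every good lies in a size-two bundle for every agent and \cref{lem:mostly-overlapping-2} applies with any good. For $m=2n+1$ each MMS partition has a single size-three bundle and $n-1$ size-two bundles. Define, for each agent $i$, the set $D_i$ of goods that lie in a bundle of size at least three in \emph{every} MMS partition of $i$; then $g\notin D_i$ exactly means $g$ occupies a size-two bundle in some MMS partition of $i$, so finding a good $g$ with $|\{i: g\in D_i\}|\le 1$ is precisely the hypothesis of \cref{lem:mostly-overlapping-2}. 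For $m=2n+1$ we have $|D_i|\le 3$ (it is contained in the unique size-three bundle of any fixed MMS partition of $i$), so $\sum_i|D_i|\le 3n < 2(2n+1)=2m$, and averaging over the $m$ goods yields the required $g$; \cref{lem:mostly-overlapping-2} then finishes this case.

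For $m=2n+2$ the same scheme applies: if an agent $i$ has an MMS partition with a single size-four bundle (the rest size two) then $|D_i|\le 4$, and the crude bound $\sum_i|D_i|\le 4n < 2m=4n+4$ again produces a good $g$ in at most one $D_i$. The main obstacle is an agent whose only MMS partitions are of the ``two size-three bundles'' type, for which the naive bound is only $|D_i|\le 6$ and the incidence count no longer beats $2m$. I would handle these agents by exploiting that the two size-three bundles can be repartitioned (moving a good between the triples, or swapping a good out against one of the many size-two bundles, using that $\mu_i>0$ and $v_i(\{n,n+1\})<\mu_i$) to show that such an agent's forced-big set is in fact small, or that there can be at most one such agent / they must share an avoidable good, bringing $\sum_i|D_i|$ strictly below $2m$ in all cases. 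This two-size-three sub-case of $m=2n+2$ is where essentially all the real work lies; everything else reduces to \cref{lem:single-item-reduction}, \cref{lem:pigeonhole-reduction-2}, and a one-line averaging feeding into \cref{lem:mostly-overlapping-2}.
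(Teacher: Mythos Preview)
Your opening moves match the paper's: split on whether some agent has $v_{i1}\ge\mu_i$, then assume $v_{i1}<\mu_i$ for all $i$ so every MMS bundle has size at least two, and reduce via \cref{lem:pigeonhole-reduction-2} when $v_i(\{n,n+1\})\ge\mu_i$. From there, however, you diverge and leave a genuine gap. You set up an averaging argument over \emph{all} $m$ goods via the sets $D_i$, which works for $m=2n$ and $m=2n+1$, but for $m=2n+2$ you correctly identify that an agent whose MMS partitions all have two size-three bundles may have $|D_i|$ as large as $6$, and then $\sum_i|D_i|\le 6n$ does not beat $2m=4n+4$. Your proposed fix---repartitioning the two triples or swapping against size-two bundles to shrink $D_i$---is not carried out, and it is not clear it can be made to work uniformly; an agent could have a unique MMS partition with two rigid size-three bundles. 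As written, this case is unresolved.

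The paper sidesteps the whole difficulty with one extra observation you did not use. Once $v_i(\{n,n+1\})<\mu_i$ for every $i$, any size-two bundle $B$ in any MMS partition must contain a good from $\{1,\dots,n-1\}$: otherwise $B\subseteq\{n,\dots,m\}$ and $\{n,n+1\}\succeq B$ would give $v_i(\{n,n+1\})\ge v_i(B)\ge\mu_i$. Hence each agent's $n-2$ disjoint size-two bundles hit $n-2$ distinct goods of $\{1,\dots,n-1\}$, so each agent \emph{misses} at most one such good. Summing, there are at most $n$ misses distributed over $n-1$ goods, so some $g\in\{1,\dots,n-1\}$ is missed by at most one agent, and \cref{lem:mostly-overlapping-2} applies directly. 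This single pigeonhole on the restricted set $\{1,\dots,n-1\}$ handles $m=2n,2n+1,2n+2$ uniformly and makes the structure of the large bundles irrelevant---exactly the obstacle your approach ran into.
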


\begin{proof}
    If there is an agent $i \in N$ with $v_{i1} \ge \mu_i$, then allocating
    $\{1\}$ to $i$ is a valid reduction. If $v_{i1} < \mu_i$ for all agents $i
    \in N$, then by the assumption that $m \le 2n + 2$ each agent has at least
    $n - 2$ bundles in their MMS partition of cardinality 2. If for some $i \in
    N$, a bundle $B$ of cardinality 2 in her MMS partition does not contain a
    good in $\{1, 2, \dots, n - 1\}$, then $v_i(\{n, n + 1\}) \ge v_i(B) \ge
    \mu_i$ and allocating $\{n, n + 1\}$ to $i$ is a valid reduction by
    \cref{lem:pigeonhole-reduction-2}. Otherwise, there are $n(n - 2)$ bundles
    of cardinality $2$ in the MMS partitions, each intersecting with $\{1, 2,
    \dots, n - 1\}$. Thus, there is at least one $g \in \{1, 2, \dots, n - 1\}$
    that appears in at least $n - 1$ bundles of cardinality $2$ and a valid
    reduction with a single agent and two goods exists by
    \cref{lem:mostly-overlapping-2}.
\end{proof}

Note that \cref{lem:2n+2} provides a simplified proof for the existence of MMS
allocations for all instances with $n$ agents and $m \le n + 5$ goods. The lemma
shows that for any $n > 2$ we can repeatedly reduce the instance until there are
$n' = 2$ agents and $m \le n' + 5$ goods. Since all instances with two agents
have MMS allocations, it follows that an MMS allocation exists for all $n$ and
$m \le n + 5$.

We are now ready to prove \cref{thr:c=6}.

\repeattheorem{\ref{thr:c=6}}{\csixtheoremcontent}

\begin{proof}
    By \cref{lem:2n+2,lem:existance-feige} it suffice to show that all instances
    with $4$ agents and $10$ goods have an MMS allocation. Assume for the rest
    of the proof that we have an ordered instance $I = \langle N = \{1, 2, 3,
    4\}, M = \{1, 2, \dots, 10\}, V \rangle$.

    By \cref{lem:2n+2}, if $v_{i1} < \mu_i$ for each $i \in N$, there exists a
    valid reduction to an instance with $n = 3$ and $m = 8$, for which MMS
    allocations always exist. Further, by \cref{lem:2-item-reduction-from-1} if
    $v_{ig} \ge \mu_i$ for some $i \in N$ and $g \in M$, then there is either
    $i' \in (N \setminus \{i\})$ with $v_{i'}(g) \ge \mu_{i'}$ or a valid
    reduction exists to an instance with $n = 3$ and $m = 8$. Consequently, if
    $v_{i2} \ge \mu_i$, there is either a valid reduction to an instance with $n
    = 3$ and $m = 8$ by \cref{lem:2-item-reduction-from-1} or by allocating
    $\{1\}$ to $i$ and $\{2\}$ to $i'$, to an instance with $n = 2$ and $m = 8$.

    For an MMS allocation to not exist, there must be at least two agents that
    value good 1 at MMS or higher and no agent that values good 2 at MMS or
    higher. If this is the case, then there are two agents with MMS partitions
    restricted to the following types: $(1, 2, 2, 5)$, $(1, 2, 3, 4)$ and $(1,
    3, 3, 3)$. The remaining agents may either also have MMS partitions of the
    preceding types or of types $(2, 2, 2, 4)$ and $(2, 2, 3, 3)$. We wish to
    show that on a case-by-case basis, based on the types of MMS partitions
    present, selecting a specific agent to allocate $\{1\}$ to allows us to use
    \cref{lem:feige-congregation} to show there is a way to provide the
    remaining agents with at least their MMS in $I$. Note that since allocating
    $\{1\}$ to an agent $i \in N$ with $v_{i1} \ge \mu_i$ is a valid reduction,
    the MMS of an agent $i'$ in the three agent instance is at least as high as
    her MMS in the four agent instance. Thus, using $\mu_{i'}^I$ for $x_{i'}$ is
    valid.

    \paragraph{Any partition of type $\mathbf{(1, 2, 2, 5)}$.} Let $i$ be an
    agent with an MMS partition of type $(1, 2, 2, 5)$. Allocating $\{1\}$ to
    any other agent that values $\{1\}$ at MMS or higher, would mean that there
    is a 3-partition of the remaining goods of type $(2, 2, 5)$ where each
    bundle is valued at no less than $\mu_i^I$ by $i$.

    \paragraph{Any partition of type $\mathbf{(2, 2, 2, 4)}$.} Let $i$ be an agent
    with an MMS partition of type $(2, 2, 2, 4)$. Then, good $1$ is either in a
    bundle of size 2 or the bundle of size 4. In either case, after allocating
    $\{1\}$ to some other agent, we can merge the bundle in the MMS partition
    that contained $1$ with another bundle to create a $3$-partition of type
    $(2, 2, 5)$, where all bundles are worth at least $\mu_i^I$ to $i$.

    \paragraph{Any partition of type $\mathbf{(2, 2, 3, 3)}$.} Let $i$ be an agent
    with an MMS partition of type $(2, 2, 3, 3)$. After allocating $\{1\}$ to
    some other agent, the MMS partition of $i$, with good $1$ removed, can be
    modified in two different ways to create $3$-partitions of different types
    (some subset of two types from $(2, 3, 4)$, $(3, 3, 3)$ and $(2, 2, 5)$),
    where each bundle is valued at no less than $\mu_i^I$ by $i$. The two
    different types can be obtained by simply merging the bundle that contained
    $1$ with either a bundle of size 2 or 3.

    \paragraph{Only partitions of types $\mathbf{(1, 2, 3, 4)}$ and $\mathbf{(1,
    3, 3, 3)}$.} Partitions of type $(1, 2, 3, 4)$ turn into partitions of type
    $(2, 3, 4)$ and partitions of type $(1, 3, 3, 3)$ turn into partitions of
    type $(3, 3, 3)$. Since there are four agents, it is always possible to
    choose the agent to give $\{1\}$ to such that there remains either three
    agents with a partition of type $(3, 3, 3)$ or at least two agents with a
    partition of type $(2, 3, 4)$.
\end{proof}

\section{Proof of \cref{thr:c=7}}

\repeattheorem{\ref{thr:c=7}}{\cseventheoremcontent}

\begin{proof}
    Let $I = \langle N, M, V \rangle$ be an ordered instance with $n \ge 8$
    agents and $m = n + 7$ goods. If $n > 8$, then by \cref{lem:2n+2} there
    exists either a reduction to an instance with $n' \ge 8$ agents and $m' \le
    n' + 6$ goods, for which an MMS allocation always exists, or after repeated
    applications a reduction to an instance with $8$ agents and $15$ goods. For
    any instance with $8$ agents and $15$ goods, if there is an agent $i$ with
    $v_{i3} < \mu_i$, then $\mu_i > 0$ and any MMS partition of $i$ contains $5$
    bundles of size 2 and one bundle of size $3$. Thus, an MMS allocation exists
    by \cref{lem:one-3+-good-bundle}. We now assume that $v_{i3} \ge \mu_i$ for
    all $i \in N$ and that there are $8$ agents and $15$ goods in $I$.

    If there is an agent $i \in N$ with $\mu_i = 0$ or both $\mu_i > 0$ and
    $v_i(\{8, 9\}) \ge \mu_i$, then an MMS allocation exists, as allocating
    $\{8, 9\}$ to $i$ is a valid reduction by \cref{lem:pigeonhole-reduction-2}
    and by \cref{thr:c=6} an MMS allocation exists for any instance with $7$
    agents and $13$ goods. Thus, assume that $\mu_i > 0$ and $v_i(\{8, 9\}) <
    \mu_i$ for all $i \in N$.

    If $v_{i6} \ge \mu_i$ for some $i \in N$, then either
    \begin{enumerate}
        \item $v_{i'}(6) < \mu_{i'}$ for all $i' \in N \setminus \{i\}$,
        \item there is $i' \in N \setminus \{i\}$ with $v_{i'}(5) \ge \mu_{i'}$
            and $v_{i''}(5) < \mu_{i''}$ for all $i'' \in N \setminus \{i,
            i'\}$; or
        \item there are distinct $i',i'' \in N \setminus \{i\}$ with $v_{i'}(5)
            \ge \mu_{i'}$ and $v_{i''}(4) \ge \mu_{i''}$.
    \end{enumerate}

    \noindent
    In case 1., allocating $\{6, 15\}$ to $i$ is by
    \cref{lem:2-item-reduction-from-1} a valid reduction to an instance with $7$
    agents and $13$ goods for which an MMS allocation exists.  In case 2.,
    allocating $\{6, 14\}$ to $i$ and $\{5, 15\}$ to $i'$ is a valid reduction
    to an instance with $6$ agents and $11$ goods for which an MMS allocation
    exists. Finally, in case 3., allocating $\{4\}$ to $i''$, $\{5\}$ to $i'$,
    $\{6\}$ to $i$ and to three other agents each a good from $\{1, 2, 3\}$, is
    a valid reduction to an instance with two agents. All instances with two
    agents have MMS allocations (\cref{lem:mms-existence-few-agents}).  Thus, we
    assume that $v_{i6} < \mu_i$ for all $i \in N$.

    If there is an agent $i \in N$ with an MMS partition in which at most one
    bundle contains more than 2 goods, an MMS allocation exists by
    \cref{lem:one-3+-good-bundle}. Similarly, if there is an agent $i \in N$
    with an MMS partition in which there is a bundle with 6 or more goods, then,
    since $v_{i6} < \mu_i$, there is only a single bundle in the MMS partition
    that contains more than two goods and and MMS allocation exists.

    We wish to show that depending on the instance $I$, there either exists a
    valid reduction to an instance we know an MMS allocation exists for, or we
    can construct an MMS allocation directly. Due to the earlier
    assumptions about the instance $I$, \cref{lem:mms-partition-layout}
    guarantees that each agent has an MMS partition of one of five types:
    \begin{itemize}
        \item $t_1 = (1, 1, 1, 1, 1, 2, 3, 5)$
        \item $t_2 = (1, 1, 1, 1, 1, 2, 4, 4)$
        \item $t_3 = (1, 1, 1, 1, 2, 2, 3, 4)$
        \item $t_4 = (1, 1, 1, 1, 2, 3, 3, 3)$
        \item $t_5 = (1, 1, 1, 2, 2, 2, 3, 3)$.
    \end{itemize}
    Specifically, we know that in an MMS partition with $k$ bundles of size 1,
    these are $\{1\}, \{2\}, \dots, \{k\}$. Furthermore, since $v_i(\{8, 9\})
    \le \mu_i$ and $v_{i3} \le \mu_i$ for all $i \in N$, any bundle of size two
    contains at least one good in $\{4, 5, 6, 7\}$.

    We proceed on a case-by-case basis, based on combinations of types of MMS
    partition of the agents. For simplicity, we say that an agent has type
    $t_j$ if the agent has an MMS partition of the given type for $I$. An agent
    may have multiple types. However, we assume that an agent is given the type
    $t_j$ with the lowest possible value of $j$ for which the agent has an MMS
    partition. Thus, if an agent $i \in N$ has type $t_j$ with $j \ge 3$, then
    $v_{i5} < \mu_i$ and if $j = 5$, then $v_{i4} < \mu_i$.

    We start by considering cases in which there is at least one agent with type
    in $\{t_1, t_2\}$.

    \paragraph{Between one and four agents with type in $\mathbf{\{t_1,
    t_2\}}$.} Let $k$ be the number of agents with type in $\{t_1, t_2\}$.
    If $k > 1$ allocate the bundles $\{1\}, \dots, \{k - 1\}$ to $k - 1$ of the
    agents with type in $\{t_1, t_2\}$. This is a valid reduction by
    \cref{lem:single-item-reduction}. Since there is now only one agent of type
    in $\{t_1,t_2\}$, allocating $\{5, 15\}$ extends the valid reduction
    due to \cref{lem:2-item-reduction-from-1} and we are left with an instance
    with $n - k \ge 4$ agents and $(n - k) + 6$ goods, for which an MMS
    allocation always exists.

    \paragraph{At least five agents with type in $\mathbf{\{t_1, t_2\}}$.} Let
    $N'$ be a set of five agents with type in $\{t_1, t_2\}$ and allocate
    $\{1\}$, $\{2\}$ and $\{3\}$ to the three agents in $N \setminus N'$, along
    with $\{4\}$ and $\{5\}$ to two arbitrary agents in $N'$. Then we have three
    agents left, where the bundles remaining in the MMS partition of any of the
    agents form a $3$-partition of $M \setminus \{1, 2, 3, 4, 5\}$ of type $(2,
    3, 5)$ or $(2, 4, 4)$, where the value of each bundle is at least MMS to the
    agent. Since each bundle of size $2$ contains $g \in \{6, 7\}$, the
    reduction can through \cref{lem:mostly-overlapping-2} be extended by
    allocating a bundle of size two to one of the remaining agents. Thus, a
    reduction to an instance with two agents exists and an MMS allocation must
    exist.
    \\

    \noindent
    Since any instance with at least one agent of type $t_1$ or $t_2$ has an MMS
    allocation, we can now consider instances where the agents only have types
    $t_3$, $t_4$ and $t_5$.

    \paragraph{At least seven agents with type $\mathbf{t_4}$.} If there are
    at least seven agents of type $t_4$, then there is $g \in \{5, 6, 7\}$ that
    appears in the bundle of size two in the MMS partition of at least three of
    them. Of these three bundles, one, $B$, is dominated by the others. Let
    $N'$ denote a set of three such agents. Let $N'' = N' \cup \{i, i'\}$, where
    $i$ and $i'$ are distinct agents of type $t_4$ in $N \setminus N'$. If
    $v_i(B) < \mu_i$ or $v_{i'}(B) < \mu_{i'}$, then there exits a valid
    reduction to an instance with $4$ agents and $10$ goods by doing the
    following, assuming w.l.o.g.\ that $v_i(B) < \mu_i$. First, allocate $\{1\}$,
    $\{2\}$ and $\{3\}$ to the agents in $N \setminus N''$. This is a valid reduction
    to an instance with $5$ agents and $12$ goods. Further, if $v_{i'}(B) \ge
    \mu_i$, then the valid reduction can be extended by allocating $B$ to $i'$,
    as this does not decrease the MMS of the other agents in $N''$ due to either
    \cref{lem:2-item-reduction} or \ref{lem:domination-reduction}. If
    $v_{i'}(B) < \mu_i$, then allocating $B$ to the agent $i'' \in N'$ who's MMS
    partition $B$ came from extends the valid reduction by the same argument.
    Since there in these cases exists a reduction to an instance with four
    agents and ten goods, an MMS allocation exists.

    If, on the other hand, $v_i(B) \ge \mu_i$ and $v_{i'}(B) \ge \mu_{i'}$, then
    we will show that an MMS allocation exists in which each agent in $N
    \setminus N''$ receives a good in $\{1, 2, 3\}$, $i$ receives $B$ and
    $\{4\}$ is given to $i'$. We thus need to show that there is a way to
    allocate the goods in $M \setminus (\{1, 2, 3, 4\} \cup B)$ to the three
    agents in $N'$ such that each receives a bundle worth at least her MMS in
    $I$. Note that for any agent $i'' \in N'$, $i''$ has an MMS partition $A$ of
    type $(1, 1, 1, 1, 2, 3, 3, 3)$ where the bundle of size $2$ contains $g$.
    After removing the goods in $\{1, 2, 3, 4\}$ from $A$, we are left with a
    $4$-partition of $M \setminus \{1, 2, 3, 4\}$ in which each bundle is worth
    at least $\mu_{i''}$ to $i''$. The $4$-partition has type $(2, 3, 3, 3)$ and
    the bundle $B'$ of size 2 contains $g$.  If $B' \neq B$, then swap the
    position of the good $g' \in B' \setminus \{g\}$ for the position of the
    good $g'' \in B \setminus \{g\}$. Since $g' < g''$, the the $4$-partition
    now contains $B$ and three bundles of size $3$. Each of the bundles of size
    $3$ has a value of at least $\mu_{i''}$ to $i''$. Removing $B$ produces a
    $3$-partition of $M \setminus (\{1, 2, 3, 4\} \cup B)$ of type $(3, 3, 3)$
    such that each bundle is worth at least $\mu_{i''}$ to $i''$. Since each
    agent in $N'$ has such a $3$-partition, an MMS allocation exists by
    \cref{lem:feige-congregation}.

    \paragraph{Less than seven agents with type $\mathbf{t_4}$.} Each agent of
    type $t_3$ and $t_5$ has at least two bundles of size two in their MMS
    partition that each overlaps with $\{5, 6, 7\}$. Each agent of type $t_4$
    has one bundle of this kind in their MMS partition. Hence, with less than
    seven agents of type $t_4$, there are at least $4 + 6 = 10$ bundles of size
    two in the MMS partitions that overlap with $\{5, 6, 7\}$. There is $g \in
    \{5, 6, 7\}$ such that at least $\lceil 10/3 \rceil = 4$ of these contain
    $g$. Thus, there is a subset of five agents of which at least $4$ have a
    bundle of size 2 containing $g$ in their MMS partition. Allocating $\{1\}$,
    $\{2\}$ and $\{3\}$ to the other agents means that
    \cref{lem:mostly-overlapping-2} guarantees a further extension of the
    reduction to an instance with $4$ agents and $10$ goods, for which an MMS
    allocation always exists.
\end{proof}

\section{Proof of \Cref{lem:one-3+-good-bundle}}

\begin{proof}[Proof of \cref{lem:one-3+-good-bundle}]
    If $\mu_i = 0$, then allocating $\{n, n + 1\}$ to $i$ is by
    \cref{lem:pigeonhole-reduction-2} a valid reduction to an instance with $(n
    - 1) \ge n_{c - 1}$ agents and $n + c - 2 = (n - 1) + (c - 1)$ goods, for
    which an MMS allocation exists.

    Now assume that $\mu_i > 0$. Consequently, each bundle in $A$ contains at
    least one good. If there is a perfect matching between the agents in $N$ and
    bundles in $A$ valued at MMS or higher, then this matching is an MMS
    allocation. Assume that such a perfect matching does not exist and let $A' =
    \{B \in A : |B| \le 2\}$. Then $|A'| \ge n - 1$ and there exists no perfect
    matching between agents in $N \setminus \{i\}$ and bundles they value at MMS
    or higher in $A'$. If a perfect matching of this kind existed, then a
    perfect matching between $N$ and bundles in $A$ would also exist, as $v_i(B)
    \ge \mu_i$ for all $B \in A$. Thus, by Hall's marriage theorem, there exists
    $N' \subseteq (N \setminus \{i\})$ and $A'' \subset A'$ such that $|N'| >
    |A''|$ and $v_{i'}(B) < \mu_{i'}$ for $i'
    \in N', B \in (A' \setminus A'')$. In other words, no agent in $N'$ values
    any bundle in $A' \setminus A''$ at MMS or higher. Additionally, by
    \cref{lem:goods-at-mms}, $|A' \setminus A''| \le \min(n - 1, c)$. We have
    that $|N \setminus N'| \le |A' \setminus A''|$ and since $A$ is an MMS
    partition of $i$, with $i \in N \setminus N'$, \cref{thr:envy-free-matching}
    guarantees that there exists a non-empty envy-free matching $M$ with regards
    to the agents in the graph consisting of the agents in $N \setminus N'$ and
    bundles in $A' \setminus A''$, with edges between agent-bundle pairs if the
    agent values the bundle at MMS or higher. We wish to show that this matching
    can be converted into a valid reduction of $x$ agents and $2x$ goods, where
    $x$ is the number of agents in the envy-free matching. The valid reduction
    will be constructed in the following way:
    \begin{enumerate}
        \item Allocate all bundles in the matching containing two goods to their
            matched agent.
        \item For each bundle in the matching containing one good, allocated it
            to the matched agent along with the worst remaining unmatched good.
    \end{enumerate}
    \noindent
    As each agent that receives a bundle values it at MMS or higher, we need to
    show that for any unmatched agent $i'$, their MMS has not decreased.  For
    $i'$, consider each allocation in step 1 and step 2 as individual reductions
    performed in turn in smaller and smaller instances. Then, the only way that
    $i'$'s MMS can decrease is if it decreases after allocating one of the
    bundles. However, by \cref{lem:2-item-reduction,lem:2-item-reduction-from-1}
    this can only occur if the matched bundle (of two goods in step 1 and one
    good in step 2) is valued at MMS or higher. Since $i'$'s value of a bundle
    does not change, and any matched bundle is valued at less than MMS by $i'$
    before the step, the only way for $i'$'s MMS to decrease is if it has
    already decreased, which is a contradiction. Thus, $i'$'s MMS does not
    decrease and we have a valid reduction.

    The valid reduction removes $x$ agents and $2x$ goods. Thus, the reduced
    instance has $n - x$ agents and $(n - x) + (c - x)$ goods. If $c - x \le 5$,
    then an MMS allocation always exists. Otherwise, since $n_{c'} > n_{c' - 1}$
    for $c > c' > 6$, we have $n - x \ge n_{c - x}$ and an MMS allocation
    exists.
\end{proof}

\end{document}